\newtheorem{thm}{Theorem}[section]
\newtheorem{remm}{Remark}[section]
\newtheorem{lem}{Lemma}[section]
\newtheorem{remark}[remm]{Remark}
\numberwithin{equation}{section}
\newcommand{\ssy}{\scriptscriptstyle}
\newcommand{\ld}{\left<}
\newcommand{\rd}{\right>}
\begin{document}
\title[]{Conservation laws, exact travelling waves 
and modulation instability 
for an extended Nonlinear Schr\"{o}dinger equation}
{}
\author[]{V. Achilleos$^{1}$, S. Diamantidis$^{2}$, D. J. Frantzeskakis$^{1}$, N. I. Karachalios$^{2}$
and P. G. Kevrekidis$^{3}$}
%
\thanks
{$^{1}$ Department of Physics, University of Athens, Panepistimiopolis, Zografos, 
Athens 15784, Greece.
}
\thanks
{$^{2}$ Department of Mathematics, University of the Aegean, Karlovassi, 
83200, Samos, Greece.
}
\thanks
{$^{3}$ Department of Mathematics and Statistics, University of Massachusetts, 
Amherst MA 01003-4515, USA
}
\subjclass{35Q53, 35Q55, 35B45, 35B65, 37K40}
\keywords {Extended NLS equation, vortex filaments, optical fibers, metamaterials, energy equations, solitons, modulation instability}
%
%
%
%
\begin{abstract}
We study various properties of solutions of an extended nonlinear  Schr\"{o}dinger (ENLS) equation, which  
arises in the context of geometric evolution problems -- 
including vortex filament dynamics -- and governs  
propagation of short pulses in optical fibers and nonlinear metamaterials. 
For the periodic initial-boundary value problem, 
we derive conservation laws satisfied by local in time, weak $H^2$ (distributional) solutions, and 
establish global existence of such weak solutions. 
The derivation is obtained by a regularization scheme under a balance condition on the coefficients 
of the linear and nonlinear terms -- namely, the Hirota limit of the considered ENLS model. 
Next, we investigate conditions for the existence of traveling wave solutions, 
focusing on the case of bright and dark solitons. 
The balance condition on the coefficients is found to be 
essential for the existence of exact analytical 
soliton solutions; furthermore, we obtain conditions which define parameter regimes for the 
existence of traveling solitons for various 
linear dispersion strengths. Finally, we study 
the modulational instability of plane waves of the ENLS equation, and identify 
important differences between the ENLS case and the corresponding NLS counterpart. 
The analytical results are corroborated by numerical simulations, which reveal notable differences 
between the bright and the dark soliton 
propagation dynamics, and are in excellent agreement with the analytical predictions of the modulation instability analysis.
\end{abstract}
\maketitle
%
%
%
%
\section{Introduction}
\subsection{The model and its physical significance}
The present paper deals with an extended 
nonlinear Schr\"{o}dinger (ENLS) equation, expressed in the following dimensionless form:
\begin{equation}\label{introeq1}
\phi_t+3\alpha|\phi|^2\phi_x-{\rm
i}\,\rho\,\phi_{xx}+\sigma\,\phi_{xxx} -{\rm
i}\,\delta |\phi|^2\,\phi=0, 
\end{equation}
%
where $\phi(x,t)$ is the unknown complex field, subscripts denote partial derivatives, and 
$\alpha$, $\rho$, $\sigma$ and $\delta$ are real constants. 
Equation (\ref{introeq1}) has important applications in distinct physical and mathematical contexts. 
First, from the physical point of view, this model is one of the nonlinear dispersive equations 
which under the Hasimoto thransformation, $\phi(x,t)=\kappa(x,t)\,\exp\left({\rm i}\,\int_0^{x}
\tau(\sigma,t)\,d\sigma\right)$ can be used to determine 
the motion of vortex filaments 
\cite{Chorin,Chorin2,Nsol1,JFM1991,Hasi,Hopfin1,Kolmogorov,Hopfin2}: here, 
the functions $\kappa(x,t)$ and $\tau(x,t)$ denote the curvature and torsion of the evolving
filament, respectively, while $x$ represents length measured along the filament. When $\kappa(x,t)$ and 
$\tau(x,t)$ are specified, the shape of the filament as a curve is uniquely determined as it evolves in space \cite{Hasi}. 

It is crucial to remark that the derivation of (\ref{introeq1}) as an evolution equation for the Hasimoto transformation is valid only in the case where the coefficients $\alpha, \rho,\sigma,\delta$
satisfy the balance condition:
\begin{equation}\label{cruc2008}
\alpha\rho=\sigma\delta.
\end{equation}
Then, Eq.~\eqref{introeq1} is known as \emph{Hirota equation} 
\cite{Hirotar}, which is a combination of the nonlinear Schr{\"o}dinger (NLS) and the Korteweg-de Vries (KdV) 
models: in particular, the NLS limit corresponds to the case $\alpha=\sigma=0$, while the complex modified KdV 
(cmKDV) limit corresponds to the case $\rho=\delta=0$. Note that the Hirota equation is a completely integrable model, belonging to 
the Ablowitz-Kaup-Newell-Segur (AKNS) hierarchy \cite{AKNS}. Its derivation in the context of the vortex filament motion is justified in \cite[Sec. 3]{JFM1991}. Briefly speaking, this derivation begins from a natural generalization of the localized induction equation (LIE) which governs the velocity of the vortex filament. The generalization takes account of the axial-flow effect up to the second-order (see \cite[Eqs.~(3.1)-(3.2)]{JFM1991}). Then, the Hirota equation is derived by repeating the original Hasimoto's procedure \cite{Hasi}, which proved the equivalence between LIE and the integrable NLS equation $\mathrm{i}\phi_t+\phi_{xx}+\frac{1}{2}|\phi|^2\phi=0$ (see also \cite{lamb,lambn}). This equivalence implies that LIE is completely integrable. Thus, since the generalized LIE of \cite{JFM1991} preserves integrability, the equivalent evolution equation which is in this case ENLS equation  (\ref{introeq1}), should be also integrable. This is the reason why condition (\ref{cruc2008}) is essential for the association of Eq.~(\ref{introeq1}) with the vortex filament dynamics.
%

On the other hand, non-integrable versions of (\ref{introeq1}), i.e., when the balance condition (\ref{cruc2008}) is 
violated, can be derived in the context of geometric evolution equations. In particular, it was shown in \cite{Onodera1} 
that certain differential equations of one-dimensional dispersive flows into compact Riemann surfaces, may be reduced by a definition of a generalized Hasimoto transform, to ENLS type equations.  See also \cite{Onodera2, Onodera3}, and references therein.

Another important physical context, where certain versions of the ENLS equation 
have important applications, is that of nonlinear optics: such models have been used to describe 
femtosecond pulse propagation in nonlinear optical fibers; in such a case, $t$ and $x$ in Eq.~(\ref{introeq1}) play, 
respectively, the role of the propagation distance along the optical fiber and the retarded time (in a reference frame moving 
with the group velocity), while $\phi(x,t)$ accounts for the (complex) electric field envelope. 
In the context of nonlinear fiber optics, the propagation of short 
optical pulses (of temporal width  
comparable to the wavelength) is accompanied by effects, such as 
higher-order (and, in particular, third-order) linear dispersion, as well as 
nonlinear dispersion (the latter is also called self-steepening term, due to its effect on 
the pulse envelope) \cite{KodHas87}. 
These effects cannot be 
described within the framework of the standard NLS equation,   
but are taken into regard in Eq.~(\ref{introeq1}): 
in the latter, the terms $\propto \phi_{xxx}$ and $|\phi|^2\phi_x$ account, 
respectively, for the third-order (linear) 
dispersion and self-steepening effects. 
Here, it should be noted that while the NLS equation is derived from Maxwell's equations 
in the second-order of approximation in the reductive perturbation method (see, e.g., \cite{KodHas87}), 
the above mentioned higher-order linear and nonlinear 
dispersion effects appear in the third-order of approximation, 
giving rise to ENLS-type equations, \cite{Bindu, Dodd, DJFa1, DJF95a, GromTal2000, KodHas87, DJF95b, Nakk, Potasek87}. 
We also note that a similar situation occurs in the context of nonlinear metamaterials: in such media, propagation 
of ultrashort pulses (in the microwave of optical regimes) is also governed by various versions of ENLS models 
\cite{scalora,tsitsas,wen1,wen2}.

The strong  manifestation of the ENLS (\ref{introeq1}) in many applications, such as the ones mentioned above, motivates our study. This starts from basic analytical results concerning the derivation of conservation laws and global existence of weak solutions of (\ref{introeq1}), proceeds to the investigation of appropriate conditions for the existence of exact soliton solutions (in terms of bright and dark solitons), and finally addresses the modulational instability of plane waves, one of the fundamental mechanisms responsible for the generation of localized nonlinear structures \cite{ZO09}. 


\subsection{Structure of the presentation and main findings}
%
In Section~\ref{SECTION_III}, we focus on the global existence of $H^2$-weak solutions. The result concerns the ENLS (\ref{introeq1}) supplemented with periodic boundary conditions.
The proof is based on the derivation of a differential equation for functionals, up to the level of the $H^2$-norm. We consider this differential equation for brevity, as an $H^2$-``energy equation'' (borrowing the terminology of \cite{JBall2,JBall3} for relevant differential equations on functionals), and stress that it is not 
a conservation law. The main finding is that condition \eqref{cruc2008} is essential for the justification 
of the reduction of higher derivative terms, through a regularization and passage
to the limit procedure leading to its rigorous derivation for weak solutions.  
In this regard, let us recall from the Sobolev embedding theorems, that weakly smooth $H^m(\mathbb{R})$- functions, 
for $m\geq 1$, are $C^{m-1}(\mathbb{R})$-smooth, and that the ENLS equation does not possess any smoothing effect.  Furthermore, we identify an important difference form the NLS limit \cite{XWang}, related to additional implications on the application of the regularized approximation scheme, when the latter should be applied to the ENLS equation: 
in the case of the ENLS, we observe that in the passage to the limit, one has to deal with the appearance and handling
of nonlinear multipliers, instead of the NLS-limit, where only linear multipliers are involved. 
For the global existence of weak solutions, the validity of the conservation of the Hamiltonian corresponding to the NLS-limit for weak solutions under the condition \eqref{cruc2008}, as well as of the $L^2$-norm without any assumption on the coefficients is also crucial. While these conservation laws are known from earlier works \cite{Kim}, 
we revisit  
their derivation noting that the required formal computations are valid at least for solutions initiating from 
$H^3$-smooth initial data. However, we choose to work in the space of weak solutions $H^3$, only 
in order to simplify the presentation of the derivation of the conservation laws: comparing to 
the justification of the $H^2$-energy equation (relevant details are provided in Appendix~\ref{SECTION_II}), 
the revision highlights that the conservation laws can be derived under weaker assumptions on the initial data --see Remark \ref{issue2}. This is a new result showing that conservation laws are valid even in the case of very weak 
assumptions on the smoothness of the initial conditions. 

As a first application of the above results, we apply the 
$H^2$-energy equation
to prove global existence of the weak solutions at the level of that norm, by combining it
with appropriate alternative theorems \cite{Alb}. It is interesting to recover that in addition to the balance condition (\ref{cruc2008}), the presence of the second-order dispersion $\rho\neq 0$ is required, as a necessary condition for global existence.
We note that the periodic case is in general quite different from the Cauchy
case since dispersive effects are not expected (cf.
\cite{BourgainAMS}), and can lead even to blow-up phenomena in
finite time for such types of equations, despite the fact that they
may contain just cubic nonlinear terms (see \cite{HorikiriDJF} for a ENLS-type equation, and \cite{Zhang08}  on the periodic
Dullin-Gottwald-Holm equation). Furthermore, our lines of approach are justified by the important outcomes of Kato's theory \cite{Kato0}-\cite{Kato3}, that the blow-up occurs in any $H^k$, $k\geq2$-norm, if it occurs at all; see also \cite{cazS}. 
The global well-posedness of the Cauchy problem for ENLS type equations in $H^k(\mathbb{R})$,  
$k\geq 2$ has been addressed  in \cite{Laurey97},
with most recent results those of \cite{Guo05,Segata08} for the global well
posedness for $1\leq k\leq 2$.
We remark that the importance of the rigorous justification of conservation laws and energy
equations has been also underlined in \cite{JBall1, JBall2, JBall3}, and in \cite{Ozawa2005} which considers the conservation laws for the Cauchy problem of the higher dimensional NLS
equation. 

In Section \ref{SECTION_IV}, we investigate conditions under which the ENLS (\ref{introeq1}) 
may support traveling wave solutions. The analysis is based on the reduction of (\ref{introeq1}) to 
an ordinary differential equation (ODE), by means of a traveling-wave ansatz. 
This method has been extensively used for the derivation of various types of traveling waves 
-- such as periodic \cite{Feng,DJF95b}, solitary \cite{Kl1, DJF95b, Kl2} and unbounded \cite{DJF95b, Kl2} ones -- 
of the Hirota and other versions of the higher-other NLS model. 
%
We revisit  
the phase-plane analysis of the reduced dynamical system 
to obtain conditions for the existence of  
bright and dark solitons, as well as  
the solutions themselves. 
The main findings of this section (see also Remark \ref{issue4}) are the following. 
First, traveling solitons exist only under the balance condition (\ref{cruc2008}). 
Second, we reveal the role of the wavenumber $k_0$ of the carrier wave: if $k_0 \ne 0$ and $k_0 \ne \rho/3\sigma$ 
traveling wave solutions exist only in the presence of both second ($\rho\neq 0$) and third-order ($\sigma\neq 0$) 
dispersion; if $k_0=0$ traveling waves may exist in both cases $\rho \ne 0$ and $\rho =0$, corresponding to the so-called 
``zero-dispersion point'' (ZDP) in the context of nonlinear fiber optics \cite{CPag95}. Note that in the ZDP case, 
we find that the carrier frequency should also vanish, in which case the balance condition (\ref{cruc2008}) yields 
$\delta=0$, and the ENLS~(\ref{introeq1}) becomes the cmKdV model.
This way, we also elucidate the importance of the linear dispersion effect on the 
existence of soliton solutions. 
Third, yet another interesting finding is that standing wave solutions are not supported by the ENLS (\ref{introeq1}).

%

We also present results of direct numerical simulations 
illustrating that, when \eqref{cruc2008} is satisfied, the solitons are robust 
under small-amplitude random-noise perturbations. 
On the other hand, when \eqref{cruc2008} is violated, our numerical 
results illustrate that an 
arbitrary initial solitary pulse is always followed by emission of radiation (which tends to disperse at later times.  
The results on the robustness of the solitons, as well as on the radiation emission when 
\eqref{cruc2008} is violated, are in accordance to the findings of Ref.~\cite{GromTal2000} 
and the analysis of \cite{YangPeli2, Yang1}, which refer to the case of the bright solitons. 
However, the numerical simulations also reveal important differences, regarding the dynamics of bright and dark pulses. 
For instance, regarding the emission of radiation, we find that a dark solitary wave 
initial condition tends to a single dark pulse, which is eventually 
separated by 
the emitted radiation; the latter, is traveling faster from the soliton, and assumes the
form of an apparent dispersive shock wave~\cite{DSW}.
On the contrary, in the case of a bright solitary wave the radiation arises
behind the principal pulse. Another 
important difference concerns the direction of the propagation of the pulses: while 
the evolution of bright solitons 
is associated with a turning effect -- i.e., the soliton initially starts moving towards the positive $x$-direction, but eventually travels with an almost constant velocity towards the negative $x$-direction -- this effect is not observed 
in the evolution of the dark soliton. This effect for the bright solitary wave 
structures is somewhat reminiscent of direction-reversing solitary waves, so-called ``boomerons''; see, e.g.,  Ref.~\cite{degasp} and references therein.
%
%
We note that all numerical experiments have been performed by using a  pseudo-spectral method, under which the numerical quantities related to the conservation laws of the ENLS equation under the balance condition (\ref{cruc2008}), are also conserved at a high order of accuracy.

In Section \ref{SECTION_V}, we perform a modulation instability (MI) analysis of plane wave solutions of 
(\ref{introeq1}).  Here, we should recall that MI is the process by which a plane wave of the form 
$\phi(x,t)=\phi_0\mathrm{e}^{\mathrm{i}(kx-\omega t)}$, $\phi_0\in\mathbb{R}$, becomes unstable and gives rise 
to the emergence of nonlinear structures \cite{ZO09}. 
One of the main findings of the MI analysis, is the detection of the drastic differences between the conditions for MI on the ENLS equation (\ref{introeq1}) and its NLS-limit; in particular, we find the following.
If the wavenumber of the plane wave solution is $k=0$ then the MI conditions for the focusing counterpart of the ENLS equation (\ref{eq3.18}) are exactly the same with its focusing NLS-limit. Additionally, the plane waves are always modulationally stable in the defocusing ENLS equation, as it happens in its defocusing NLS-limit. 
On the other hand, when $k\neq 0$, we reveal MI conditions for both the focusing and the defocusing 
counterparts of the ENLS equation that are clearly
{\it distinct} from their NLS counterparts. Specifically, the MI conditions  
in the focusing case are similar 
to those of the focusing NLS-limit but the instability band may decrease or increase depending on the values of the parameters involved;
on the other hand, we find that in the defocusing case of the ENLS equation, MI may be manifested. 
This is a drastic difference with the defocusing NLS limit where the plane waves are always stable. 
The analytical considerations of the MI analysis 
are concluded by the justification that, in the MI regime, the Fourier modes of the unstable plane wave solution 
are the integer multiples of the unstable wave number of the perturbation
(the well known ``sidebands'' pertaining to the MI). 
The results of the numerical studies are in excellent agreement with the analytical predictions of the MI analysis.

Finally, Section  \ref{SECTION_VI} offers a discussion and a summary of our results, and discusses some future directions.


%
\section{Conservation laws and global existence of
solutions}
\label{SECTION_III}
%
%
In this section, we consider the
derivation of conservation laws and energy-type 
equations for the ENLS equation (\ref{introeq1}). These quantities will have a crucial role in proving global existence of weak solutions,
for the periodic initial-boundary value problem of this equation.   More precisely, we supplement the ENLS equation (\ref{introeq1})  with space-periodic boundary conditions, 
\begin{equation}\label{bc}
\phi(x+L,t)=\phi(x,t),
\quad\forall\,x\in{\mathbb R},\;\;\forall\,t\in{\mathbb R},
\end{equation}
where $L>0$  is given, and with the initial condition
\begin{equation}\label{introeq2}
\phi (x,0)=\phi_0(x),\;\;\forall\,x\in{\mathbb R}.
\end{equation}
%
%
%
%
\par Let us recall some preliminary information on the functional setting of the problem, as well as \textit{a local existence result}. 
Problem (\ref{introeq1})-(\ref{bc})-(\ref{introeq2}) will be considered in
complexifications of the  Sobolev spaces $H^m_{per}(\Omega)$ of
{\em real} periodic functions, locally in $H^m(\mathbb{R})$, where
$m$ is a nonnegative integer and $\Omega=(0,L)$. These spaces
$H^m_{per}(\Omega)$ are endowed with the scalar product and the
induced from it norm, given below
\begin{equation}\label{pernorm}
(\phi,\psi)_{\ssy
H^m(\Omega)}:=\sum_{j=0}^m
\int_{\ssy\Omega}\partial^j\phi\,\partial^j\overline{\psi}
\;dx
\quad\text{\rm and}\quad
\|\phi\|_{\ssy H^m(\Omega)}:=\sqrt{(\phi,\phi)_m},
\end{equation}
where $\partial^j:=\frac{\partial^j}{\partial x^j}$.
The case $m=0$, simply corresponds to $L^2(\Omega)$. The spaces
$H^m_{per}(\Omega)$ can be studied by using Fourier series
expansion, and they can be characterized as
\begin{equation*}
H^m_{per}(\Omega)=\left\{\sum_{k\in\mathbb{Z}}\widehat{\phi}_k\,\mathrm{e}^{{\rm
2\pi i}\,\frac{k}{L}\,x}:
\quad\widehat{\phi}_k=\overline{{\widehat\phi}_{-k}}\quad
\forall\,k\in{\mathbb Z},
\quad\sum_{k\in\mathbb{Z}}(1+|k|^2)^m\,|\widehat{\phi}_k|^2
<\infty\right\},
\end{equation*}
where $\widehat{\phi}_k=\overline{\widehat{\phi}_{-k}}$ are the
(real) Fourier coefficients~\cite{RTem2}. The scalar product and the norm in
$H^m_{per}(\Omega)$ given by \eqref{pernorm}, are equivalent to
those defined in terms of Fourier coefficients
\begin{eqnarray}
\label{perf}
(\phi,\psi)_{m,*}=\sum_{k\in\mathbb{Z}}(1+|k|^2)^m\,\widehat{\phi}_k
\,\widehat{\psi}_k
\quad\text{\rm and}\quad
\|\phi\|_{m,*}=\left[\,\sum_{k\in\mathbb{Z}}(1+|k|^2)^m\,
|\widehat{\phi}_k|^2\right]^{\frac{1}{2}},
\end{eqnarray}
i.e., there exist constants $c_1,c_2>0$ such that
\begin{equation}
\label{eqSn}
c_1\,\|\phi\|_{\ssy H^m(\Omega)}\,\leq\,\|\phi\|_{m,*}
\,\leq\,c_2\,\|\phi\|_{\ssy H^m(\Omega)}.
\end{equation}
The complexified space of $H^m_{per}(\Omega)$ becomes a {\em real}
Hilbert space, (which shall be still denoted for convenience by
$H^m_{per}(\Omega)$), if it is endowed with the inner product and
norm
%
\begin{equation}\label{complf}
\begin{gathered}
{[}\phi,\psi]_{\ssy H^m(\Omega)}:= \mathrm{Re}
\left[\,\sum_{j=0}^m \int_{\ssy\Omega}\partial^j\phi
\,\partial^j\overline{\psi}\;dx\,\right]
=(\phi_1,\psi_1)_{\ssy H^m(\Omega)}+(\phi_2,\psi_2)_{\ssy H^m(\Omega)},\\
\|\phi\|_{\ssy H^m(\Omega)}:=\sqrt{[\phi,\phi]_{\ssy
H^m(\Omega)}},
\end{gathered}
\end{equation}
where $\phi=\phi_1+{\rm i}\,\phi_2$ and $\psi=\psi_1+{\rm
i}\,\psi_2$. The equivalent norms in terms of Fourier coefficients
are defined, taking into account \eqref{perf} and \eqref{eqSn}, in
an analogous manner to \eqref{complf}.
\par
For $m\geq 1$, the following embedding relations
\begin{equation}\label{prop1}
H^m_{per}(\Omega)\subseteq C^{m-1}(\Omega),
\quad
H^m_{per}(\Omega)\subseteq H^{m-1}_{per}(\Omega),
\end{equation}
are compact. Moreover $H^m_{per}(\Omega)$ is a generalized Banach
algebra: there exists a constant $c$ depending only on $\Omega$,
such that if $z,\,\eta\in H^m_{per}(\Omega)$, then $z\cdot\eta\in
H^m_{per}(\Omega)$ and
\begin{equation}\label{prop2}
\|z\cdot\eta\|_{\ssy H^m(\Omega)}\leq\,c\,\|z\|_{\ssy
H^m(\Omega)}\,\|\eta\|_{\ssy H^m(\Omega)}.
\end{equation}
We also recall the Gagliardo-Nirenberg inequality for the one
space dimension case.
Let $\Omega\subseteq{\mathbb R}$, $1\leq p,q,r\leq\infty$, $j$  an
integer, $0\leq j\leq m$ and $\frac{j}{m}\leq\theta\leq 1$. Then
\begin{equation}\label{gnin}
\|\partial^j\phi\|_{\ssy L^p(\Omega)}\leq\,C\,
\|\phi\|^{1-\theta}_{\ssy L^q(\Omega)}\,\left(\,\sum^m_{k=0}
\|\partial^k\phi\|_{\ssy L^r(\Omega)}\,\right)^{\theta},
\quad\forall\,\phi\in L^q(\Omega)\cap W^{m,r}(\Omega),
\end{equation}
where
$\frac{1}{p}=j+\theta\left(\frac{1}{r}-m\right)+\frac{1-\theta}{q}$.
If $m-j-\frac{1}{r}=0$ and $r>1$, then the
inequality holds for $\theta <1$.
\par
One of the methods which can be used in order to prove a local existence result, is the method of  \textit{semibounded evolution equations} which was developed in \cite{Katoat}. The application of this method to \eqref{in1}-\eqref{introeq2}-\eqref{bc} although involving lengthy computations, is now considered as standard. Thus, we refrain to show the details here, and we just state the local existence result in
%
\begin{thm}\label{thmloc}
Let $\phi_0\in H^k_{per}(\Omega)$ for any integer $k\geq2$, and $\alpha, \rho,\sigma,\delta \in\mathbb{R}$.
Then there exists $T>0$, such that the problem
\eqref{introeq1}-\eqref{bc}-\eqref{introeq2}, has a unique
solution 
\begin{equation*}
\phi\in C([0,T], H^k_{per}(\Omega)) \quad\mbox{and}\quad\phi_t\in
C([0,T], H^{k-3}_{per}(\Omega)).
\end{equation*}
\end{thm}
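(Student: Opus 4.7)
The plan is to follow the quasilinear semigroup approach of Kato \cite{Katoat}, which is the route indicated by the authors. First I would recast \eqref{introeq1} in the abstract form
\begin{equation*}
\phi_t + A(\phi)\phi = f(\phi), \qquad \phi(0) = \phi_0,
\end{equation*}
with
\begin{equation*}
A(\phi) \;=\; \sigma\,\partial_x^3 \;-\; \mathrm{i}\rho\,\partial_x^2 \;+\; 3\alpha|\phi|^2\,\partial_x, \qquad f(\phi) \;=\; \mathrm{i}\delta|\phi|^2\phi,
\end{equation*}
and take as functional framework the pair $X = L^2(\Omega)$ (base space), $Y = H^k_{per}(\Omega)$ (regularity space), with intertwining isomorphism $S = (I-\partial_x^2)^{k/2}\colon Y\to X$. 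Kato's abstract local existence theorem for quasilinear evolution equations then requires three ingredients: (a) for each fixed $\phi\in Y$, the operator $A(\phi)$ is quasi-$m$-accretive on $X$ with accretivity constant controlled by $\|\phi\|_Y$; (b) the commutator bound $\|(SA(\phi)S^{-1}-A(\phi))v\|_X \leq C(\|\phi\|_Y)\|v\|_X$; and (c) local Lipschitz dependence of $\phi\mapsto A(\phi)$ and $\phi\mapsto f(\phi)$.

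Ingredient (a) is verified via semiboundedness. The constant-coefficient part $\sigma\partial_x^3 - \mathrm{i}\rho\partial_x^2$ is skew-adjoint on $X$, since its Fourier symbol $\mathrm{i}(\sigma k^3 + \rho k^2)$ is purely imaginary, hence it generates a unitary $C_0$-group. For the variable-coefficient first-order piece, periodicity and integration by parts give
\begin{equation*}
\mathrm{Re}\,\int_{\Omega} 3\alpha|\phi|^2 u_x\,\overline{u}\,dx \;=\; -\frac{3\alpha}{2}\int_{\Omega}(|\phi|^2)_x\,|u|^2\,dx,
\end{equation*}
which is bounded by $C\|\phi\|_Y^2\|u\|_X^2$ using the embedding $H^k_{per}\subseteq C^1(\Omega)$ from \eqref{prop1} and the algebra property \eqref{prop2}, both valid for $k\geq 2$. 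Standard bounded-perturbation results then upgrade the sum to a quasi-$m$-accretive generator on $X$.

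The main technical obstacle is ingredient (b). Since $S$ commutes with the constant-coefficient operators $\partial_x^2$ and $\partial_x^3$, the full commutator reduces to $3\alpha[S,|\phi|^2]\partial_x S^{-1}$, which is a Kato-Ponce-type commutator. Its boundedness on $X$ by $C\|\phi\|_Y^2$ follows from the classical Moser estimate $\|[S,g]h\|_X \leq C(\|g\|_Y\|h\|_{H^{k-1}_{per}} + \|g_x\|_{L^\infty}\|h\|_X)$ applied with $g = |\phi|^2$ and $h = \partial_x S^{-1}v$, and closes using \eqref{prop1}-\eqref{prop2} together with the mapping properties $S^{-1}\colon X\to Y$, $\partial_x\colon Y\to H^{k-1}_{per}$. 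Ingredient (c) is then a routine consequence of the same two properties, via the identity $|\phi|^2-|\psi|^2=(\phi-\psi)\overline{\phi}+\psi\overline{(\phi-\psi)}$.

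Granting (a)-(c), Kato's theorem delivers a unique $\phi \in C([0,T], H^k_{per}(\Omega))$ on an interval $[0,T]$ with $T$ depending only on $\|\phi_0\|_{H^k_{per}}$. The stated regularity of the time derivative, $\phi_t\in C([0,T], H^{k-3}_{per}(\Omega))$, is then read off directly from \eqref{introeq1}: given $\phi\in C([0,T], H^k_{per})$, the linear terms $\phi_{xxx}, \phi_{xx}$ lie in $C([0,T], H^{k-3}_{per})$ trivially, while the nonlinear terms $|\phi|^2\phi_x$ and $|\phi|^2\phi$ enjoy even higher regularity by \eqref{prop2}. Uniqueness follows from a Gronwall estimate in $X$ on the difference of two solutions, exploiting exactly the semiboundedness established in (a).
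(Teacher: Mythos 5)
Your proposal follows exactly the route the paper indicates -- Kato's quasilinear/semibounded evolution-equation framework of \cite{Katoat}, with $X=L^2(\Omega)$, $Y=H^k_{per}(\Omega)$ and $S=(I-\partial_x^2)^{k/2}$ -- and since the paper deliberately omits the details as ``standard'', your sketch is a faithful implementation of the declared method, with the key semiboundedness computation and the Kato--Ponce commutator step correctly identified. The one imprecision worth repairing is the claim that ``standard bounded-perturbation results'' handle the term $3\alpha|\phi|^2\partial_x$: this operator is unbounded on $L^2(\Omega)$, so quasi-$m$-accretivity of $A(\phi)$ must instead come from your (correct) integration-by-parts semiboundedness estimate combined with a range argument, e.g.\ relative boundedness of the first-order term with respect to the principal part when $\sigma\neq 0$ (or $\rho\neq 0$), or the hyperbolic-systems machinery invoked in Remark~\ref{issue1} in the degenerate case $\rho=\sigma=0$.
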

%
%
\setcounter{remm}{1}
\begin{remark}\label{issue1}
(Generalized equation). {\em Except for the method of \cite{Kato2,Katoat}, a modification of the methods for local existence of solutions for  nonlinear hyperbolic systems is applicable to the generalized ENLS equation,
\begin{equation}\label{geneq1}
\phi_t+\beta(|\phi|^2)\,\phi_x-{\rm
i}\,\rho\,\phi_{xx}+\sigma\,\phi_{xxx} -{\rm
i}\,\gamma(|\phi|^2)\,\phi=0\quad\text{\rm on}\quad{\mathbb
R}\times(0,T],
\end{equation}
where the given functions
\begin{equation}\label{regul}
\beta,\gamma\in C^1({\mathbb R};{\mathbb R}).
\end{equation}
For instance, it can be proved in a similar manner (see also \cite[Theorem 1.2, pg. 362 \& Proposition 1.3, pg. 364]{TaylorII}), that if $\phi_0\in H^{k}_{per}(\Omega)$, $k\geq 2$, then $\phi\in C([0,T], H^k_{per}(\Omega))$. Note that the problem does not possess, a-priori, any smoothing effect. Moreover, the arguments of \cite{Kato2, Katoat, TaylorII}, imply continuous dependence on initial data in $H^k_{per}(\Omega)$.}
\end{remark}
%
%
It is known (see, e.g., Ref.~\cite{Kim}), that  solutions of the problem \eqref{introeq1}-\eqref{introeq2}-\eqref{bc} 
possess conservation laws similar to the conservation of the $L^2$-norm and energy as the 
the solutions of its NLS-limit, i.e. the quantities
\begin{gather}
\mathbf{N}(\phi):=\int_{\ssy\Omega}|\phi|^2\;dx
=\|\phi\|^2_{\ssy L^2(\Omega)},\label{law1}\\
\mathbf{J}(\phi):=\tfrac{\rho}{2}\,\int_{\ssy\Omega}|\phi_x|^2\;dx
-\tfrac{\delta}{4}\,\int_{\ssy\Omega}|\phi|^4\;dx.\label{law2}
\end{gather}
are conserved. Notice that while the first $L^2$ norm is conserved
for {\it all} parameter combinations, the second conservation law
only holds if  Eq.~(\ref{cruc2008}) is satisfied. 
Here, first we briefly discuss their derivation, in the case of sufficiently smooth initial data, 
with the regularity suggested by Theorem 2.1 for $k=3$; nevertheless, below we extend our considerations 
and show that the conservation laws are satisfied by weak solutions, even when the latter are 
initiating for considerably weak initial data, belonging to an appropriate class of Sobolev spaces 
(cf. Remark 2.6 below).
%

Hence, assuming $H^3$-smoothness of the initial data, we recall the following
%
%
\setcounter{lem}{2}
\begin{lem}\label{conslaws}
For $\phi_0\in H^3_{per}(\Omega )$, let $T^*$ be the maximum time
for which for all $0<T<T^*$, the solution $\phi$ of
\eqref{introeq1}-\eqref{bc}-\eqref{introeq2} lies in
$C([0,T],H^3_{per}(\Omega))\cap C^1([0,T], L^2(\Omega))$. 
Then for any $t\in [0,T^*)$, the solution satisfies the conservation law
\begin{eqnarray}
\mathbf{N}(\phi(t))=\mathbf{N}(\phi_0),\label{consch}
\end{eqnarray}
for all $\alpha,\rho,\sigma,\delta\in\mathbb{R}$. Furthermore, if (\ref{cruc2008}) is satisfied, 
then the NLS Hamiltonian is conserved, that is, 
\begin{eqnarray} 
\mathbf{J}(\phi(t))=\mathbf{J}(\phi_0).\label{coneneg}
\end{eqnarray}
\end{lem}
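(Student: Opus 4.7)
The plan is to obtain both conservation laws by differentiating the functionals in time, substituting $\phi_t$ from the equation, and reducing every resulting integrand to a perfect $x$-derivative (which vanishes by periodicity) or to a purely imaginary quantity (whose real part is zero). The regularity $\phi\in C([0,T],H^3_{per}(\Omega))\cap C^1([0,T],L^2(\Omega))$ supplied by Theorem 2.1 is precisely what is needed: $\phi,\phi_x,\phi_{xx},\phi_{xxx}\in C([0,T],L^2)$ and $\phi_t\in C([0,T],L^2)$ allow all $L^2$-pairings and integrations by parts in $x$ to be carried out in a classical sense.

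For $\mathbf{N}$, I would pair \eqref{introeq1} with $\bar\phi$ in $L^2(\Omega)$ and take real parts. The time-derivative term gives $\tfrac{1}{2}\tfrac{d}{dt}\mathbf{N}(\phi)$. The dispersive terms give $\operatorname{Re}(-i\rho\phi_{xx}\bar\phi)$ and $\operatorname{Re}(\sigma\phi_{xxx}\bar\phi)$, which after one integration by parts become, respectively, $\rho\,\partial_x\operatorname{Im}(\phi_x\bar\phi)$ and $\tfrac{\sigma}{2}\,\partial_x|\phi_x|^2$ modulo cancellations; integrated against periodic boundary data they vanish. The nonlinear terms give $\operatorname{Re}(3\alpha|\phi|^2\phi_x\bar\phi)=\tfrac{3\alpha}{4}\partial_x|\phi|^4$ and $\operatorname{Re}(-i\delta|\phi|^4)=0$. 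No balance condition is used, so \eqref{consch} holds for all real coefficients.

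For $\mathbf{J}$, I differentiate and integrate by parts once in $x$ to obtain
\[
\tfrac{d}{dt}\mathbf{J}(\phi)=-\operatorname{Re}\int_{\Omega}\phi_t\,\overline{H}\,dx,\qquad H:=\rho\phi_{xx}+\delta|\phi|^2\phi .
\]
Writing the equation as $\phi_t=iH-\sigma\phi_{xxx}-3\alpha|\phi|^2\phi_x$, the contribution $\operatorname{Re}\int iH\,\overline{H}\,dx$ vanishes identically. Expanding the remaining integrand against $\overline{H}=\rho\bar\phi_{xx}+\delta|\phi|^2\bar\phi$ gives four integrals: two of them, $\sigma\rho\operatorname{Re}\int\phi_{xxx}\bar\phi_{xx}\,dx=\tfrac{\sigma\rho}{2}\int\partial_x|\phi_{xx}|^2\,dx$ and $3\alpha\delta\operatorname{Re}\int|\phi|^4\phi_x\bar\phi\,dx=\tfrac{\alpha\delta}{2}\int\partial_x|\phi|^6\,dx$, are perfect derivatives and drop out. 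What remains is
\[
3\alpha\rho\operatorname{Re}\!\int_{\Omega}|\phi|^2\phi_x\bar\phi_{xx}\,dx\;+\;\sigma\delta\operatorname{Re}\!\int_{\Omega}\phi_{xxx}|\phi|^2\bar\phi\,dx .
\]

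The key obstacle, and the place where \eqref{cruc2008} enters nontrivially, is to show these two surviving integrals are proportional with ratio $-3$. I would integrate the second one by parts once, using $\partial_x(|\phi|^2\bar\phi)=2|\phi|^2\bar\phi_x+\bar\phi^2\phi_x$, which produces a term $-2\operatorname{Re}\int|\phi|^2\phi_x\bar\phi_{xx}\,dx$ (after conjugating using reality of $|\phi|^2$) and a remainder $-\operatorname{Re}\int\phi_{xx}\phi_x\bar\phi^2\,dx$. The remainder is then rewritten by noting $\phi_{xx}\phi_x=\tfrac{1}{2}\partial_x(\phi_x^2)$, integrating by parts, and using $\operatorname{Re}(\phi_x\bar\phi)=\tfrac{1}{2}\partial_x|\phi|^2$; a further integration by parts identifies it with $\operatorname{Re}\int|\phi|^2\phi_x\bar\phi_{xx}\,dx$, yielding the identity
\[
\operatorname{Re}\!\int_{\Omega}\phi_{xxx}|\phi|^2\bar\phi\,dx=-3\operatorname{Re}\!\int_{\Omega}|\phi|^2\phi_x\bar\phi_{xx}\,dx .
\]
Substituting gives $\tfrac{d}{dt}\mathbf{J}(\phi)=3(\sigma\delta-\alpha\rho)\operatorname{Re}\int|\phi|^2\phi_x\bar\phi_{xx}\,dx$, which vanishes exactly under \eqref{cruc2008}, establishing \eqref{coneneg}. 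The bookkeeping of this last chain of integrations by parts, together with the repeated use of reality of $|\phi|^2$ to swap $\phi_{xx}\bar\phi_x\leftrightarrow\phi_x\bar\phi_{xx}$ under $\operatorname{Re}$, is the only delicate point; everything else is a sequence of boundary-free identities available in $H^3_{per}(\Omega)$.
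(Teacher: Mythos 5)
Your argument is correct and is essentially the paper's own proof: conservation of $\mathbf{N}$ by pairing with $\overline{\phi}$ and discarding perfect derivatives, and conservation of $\mathbf{J}$ by pairing $\phi_t$ with the variational derivative $\overline{H}=\rho\overline{\phi}_{xx}+\delta|\phi|^2\overline{\phi}$ (the paper multiplies by $\overline{\phi}_t$ and keeps imaginary parts, which after back-substitution of $\overline{\phi}_t$ is the same computation with a few self-cancelling cross terms carried along), everything reducing to the single obstruction $\mathrm{Re}\int_{\Omega}|\phi|^2\phi_x\overline{\phi}_{xx}\,dx$ via your identity $\mathrm{Re}\int_{\Omega}\phi_{xxx}|\phi|^2\overline{\phi}\,dx=-3\,\mathrm{Re}\int_{\Omega}|\phi|^2\phi_x\overline{\phi}_{xx}\,dx$, which is exactly the paper's \eqref{coen8}. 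The only blemish is a harmless sign slip at the very end: the prefactor should be $3(\alpha\rho-\sigma\delta)$ rather than $3(\sigma\delta-\alpha\rho)$ (compare \eqref{coen4FIN}), which is of course immaterial under \eqref{cruc2008}.
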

%
%
%
%
%
\begin{proof}
For convenience, we rewrite equation (\ref{introeq1}) as
\begin{equation}\label{in1}
\phi_t-{\rm i}\,\rho\,\phi_{xx}+\sigma\,\phi_{xxx}
=-3\alpha\,|\phi|^2\phi_x+{\rm i}\,\delta \,|\phi|^2\phi.
\end{equation}
Since $\phi\in C([0,T],H^3_{per}(\Omega))\cap C^1([0,T],
L^2(\Omega))$, it makes sense to multiply equation \eqref{in1} by
$\overline{\phi}$, integrate with respect to space and keep real
parts. We obtain the equation
\begin{equation}\label{coa}
\tfrac{1}{2}\,\tfrac{d}{dt}\|\phi(t)\|^2_{\ssy L^2(\Omega)}
+\tfrac{3\alpha}{2}\,\mathrm{Re}\left[\,\int_{\ssy\Omega}
|\phi|^2(|\phi|^2)_x\;dx\,\right]
+\sigma\mathrm{Re}\left[\,\int_{\ssy\Omega}
\phi_{xxx}\overline{\phi}\;dx\right]=0.
\end{equation}
By periodicity, the last two integrals vanish, and \eqref{consch}
follows. 

To derive the conservation law \eqref{coneneg}, we
multiply \eqref{in1} by $\overline{\phi}_t$, integrate in space,
keeping this time the imaginary parts. Now, the resulting equation is
\begin{equation}\label{coen1}
\tfrac{d}{dt}J(\phi(t))+3\,\alpha\,\mathrm{Im}\left[
\,\int_{\ssy\Omega}|\phi|^2\,\phi_x\overline{\phi}_t\;dx\,\right]
+\sigma\,\mathrm{Im}\left[\int_{\ssy\Omega}\phi_{xxx}
\,\overline{\phi}_t\;dx\,\right]=0.
\end{equation}
By substitution of $\overline{\phi}_t=-\mathrm{i}\rho\overline{\phi}_{xx}-\sigma\overline{\phi}_{xxx}-3\alpha|\phi|^2\overline{\phi}_x-\mathrm{i}\delta |\phi|^2\overline{\phi}$, into the two integral terms of the left-hand side of (\ref{coen1}), and integration by parts, we observe that these two terms become:
\begin{equation}\label{coen2}
\begin{split}
3\,\alpha\,\mathrm{Im}\left[\int_{\ssy\Omega}|\phi|^2
\,\phi_x\,\overline{\phi}_t\;dx\right]
=&3\,\alpha\,\mathrm{Im}\left[\int_{\ssy\Omega}|\phi|^2\,\phi_x
\left(-3\,\alpha\,|\phi|^2\,\overline{\phi}_x-{\rm
i}\,\rho\,\overline{\phi}_{xx} -\sigma\,\overline{\phi}_{xxx}
-{\rm i}\,\delta\,|\phi|^2\,\overline{\phi}\right)\;dx\right]\\
=&-3\,\alpha\,\rho\mathrm{Re}\left[\int_{\ssy\Omega}|\phi|^2
\,\phi_x\,\overline{\phi}_{xx}\;dx\right]
-3\,\alpha\,\sigma\,\mathrm{Im}\left[\int_{\ssy\Omega}|\phi|^2
\,\phi_x\,\overline{\phi}_{xxx}\;dx\right],\\
&-\tfrac{3\alpha\rho}{2}\,\left[\int_{\ssy\Omega}
|\phi|^2\,(|\phi_x|^2)_x\;dx\right]
+3\,\alpha\,\sigma\,\mathrm{Im}\left[\int_{\ssy\Omega}|\phi|^2
\,\overline{\phi}_x\,\phi_{xxx}\;dx\right],
\end{split}
\end{equation}
and
\begin{equation}\label{coen3}
\sigma\,\mathrm{Im}\left[\int_{\ssy\Omega}\phi_{xxx}
\,\overline{\phi}_t\;dx\right]=
-\sigma\,\delta\,\mathrm{Re}\left[\int_{\ssy\Omega}|\phi|^2
\,\phi_{xxx}\,\overline{\phi}\;dx\right]
-3\,\alpha\,\sigma\,\mathrm{Im}\left[\int_{\ssy\Omega}|\phi|^2
\,\overline{\phi}_x\,\phi_{xxx}\;dx\right].
\end{equation}
Then, by using \eqref{coen2} and \eqref{coen3}, equation \eqref{coen1} can be
written as
\begin{equation}\label{coen4}
\tfrac{d}{dt}J(\phi(t))
-\tfrac{3\alpha\rho}{2}\,\left[\int_{\ssy\Omega}
|\phi|^2\,(|\phi_x|^2)_x\;dx\right]
-\sigma\,\delta\,\mathrm{Re}\left[\int_{\ssy\Omega}
|\phi|^2\,\phi_{xxx}\,\overline{\phi}\;dx\right]=0.
\end{equation}
For the last term of the left-hand side of (\ref{coen4}), we observe that
\begin{gather}
-\sigma\,\delta\,\mathrm{Re}\left[\int_{\ssy\Omega}|\phi|^2
\,\phi_{xxx}\,\overline{\phi}\;dx\right]
=\tfrac{\sigma\delta}{2}\,\left[\int_{\ssy\Omega}
|\phi|^2\,(|\phi_x|^2)_x\;dx\right]
+\sigma\,\delta\,\mathrm{Re}\left[\int_{\ssy\Omega}
\phi_{xx}\,\overline{\phi}\,(|\phi|^2)_x\;dx\right],\label{coen5}
\end{gather}
while for the last term of the right-hand side of (\ref{coen5}),
\begin{gather}
\sigma\,\delta\,\mathrm{Re}\left[\int_{\ssy\Omega}
\phi_{xx}\,\overline{\phi}\,(|\phi|^2)_x\;dx\right]=
\tfrac{\sigma\delta}{2}\,\left[\int_{\ssy\Omega}
|\phi|^2\,(|\phi_x|^2)_x\;dx\right]
+\sigma\,\delta\,\mathrm{Re}\left[\int_{\ssy\Omega}
\phi_{xx}\,\phi_x\,\overline{\phi}^2\;dx\right].\label{coen6}
\end{gather}
Also, for the last term of the right hand side of \eqref{coen6}, we have
\begin{equation*}
\begin{split}
\sigma\,\delta\,\mathrm{Re}\left[\int_{\ssy\Omega}
\phi_{xx}\,\phi_x\,\overline{\phi}^2\;dx\right]
=&-\,\sigma\,\delta\,\mathrm{Re}\left[\int_{\ssy\Omega}
\phi_x\,\left(\phi_{xx}\,\overline{\phi}^2
+2\,\phi_x\overline{\phi}\,\overline{\phi}_x\right)\;dx\right]\\
=&-\,\sigma\,\delta\,\mathrm{Re}\left[\int_{\ssy\Omega}
\phi_{xx}\,\phi_x\,\overline{\phi}^2\,dx\right]
-2\,\sigma\,\delta\,\mathrm{Re}\left[
\int_{\ssy\Omega}|\phi_x|^2\,\phi_x\,\overline{\phi}\,dx\right].
\end{split}
\end{equation*}
Hence, we have for this term, that
\begin{equation}\label{coen7}
\begin{split}
\sigma\,\delta\,\mathrm{Re}\left[\int_{\ssy\Omega}
\phi_{xx}\,\phi_x\,\overline{\phi}^2\;dx\right]=&-\,\sigma\,\delta\,\mathrm{Re}\left[
\int_{\ssy\Omega}|\phi_x|^2\,\phi_x\,\overline{\phi}\,dx\right]\\
=&-\tfrac{\sigma\delta}{2}\,\left[
\int_{\ssy\Omega}(|\phi|^2)_x\,|\phi_x|^2\;dx\right]=
\tfrac{\sigma\delta}{2}\,\left[
\int_{\ssy\Omega}|\phi|^2\,(|\phi_x|^2)_x\;dx\right].
\end{split}
\end{equation}
Eventually, it follows from \eqref{coen5}-\eqref{coen7}, that the last term of (\ref{coen4}) is
\begin{equation}\label{coen8}
-\sigma\,\delta\,\mathrm{Re}\left[\int_{\ssy\Omega}
|\phi|^2\,\phi_{xxx}\,\overline{\phi}\;dx\right]
=\tfrac{3\delta\sigma}{2}\,\mathrm{Re}\left[
\int_{\ssy\Omega}|\phi|^2\,(|\phi_x|^2)_x\;dx\right].
\end{equation}
Therefore, equation (\ref{coen4}) is taking the form
\begin{equation}\label{coen4FIN}
\tfrac{d}{dt}J(\phi(t))=
\tfrac{3}{2}(\alpha\rho-\sigma\delta)\,\left[\int_{\ssy\Omega}
|\phi|^2\,(|\phi_x|^2)_x\;dx\right]=-\tfrac{3}{2}(\alpha\rho-\sigma\delta)\,\left[\int_{\ssy\Omega}
(|\phi|^2)_x\,|\phi_x|^2\;dx\right].
\end{equation}
The assumption (\ref{cruc2008}) on the coefficients, shall be used on the equation 
\eqref{coen4FIN}, implies that $\frac{d}{dt}\,J(\phi (t))=0$, i.e., the conservation law (\ref{coneneg}).
\end{proof}
%
%
%
\par
The conservation laws provided by Lemma~\ref{conslaws} will be
useful in order to establish global in time regularity for the
solutions of \eqref{introeq1}-\eqref{introeq2}-\eqref{bc}, at
$H^2_{per}(\Omega)$-level. For this purpose, we derive 
a differential equation for 
functionals involving 2nd-order derivatives. For brevity, we call this differential equation as ``energy equation'' 
(as per the terminology used in Refs.~\cite{JBall2,JBall3}), but stressing that it is not a conservation law.

%
\setcounter{lem}{3}
\begin{lem}
\label{weaksol1}
For $\phi_0\in H^3_{per}(\Omega )$, let $T^*$ be the maximum time
for which for all $0<T<T^*$, the solution $\phi$ of
\eqref{introeq1}-\eqref{bc}-\eqref{introeq2}, lies in
$C([0,T],H^3_{per}(\Omega))\cap C^1([0,T], L^2(\Omega))$. Moreover,
assume that the coefficients satisfy \eqref{cruc2008}. We consider the
functionals
\begin{equation}\label{pseudocons}
\mathbf{J_1}(\phi):=\rho\,\int_{\ssy\Omega}|\phi_{xx}|^2\;dx
-4\,\delta\,\left[\int_{\ssy\Omega}
|\phi|^2\,|\phi_x|^2\;dx\right]
-\delta\,\mathrm{Re}\left[\int_{\ssy\Omega}
\phi^2\,\overline{\phi}^2_x\;dx\right],
\end{equation}
and
\begin{equation}\label{finmultip}
\begin{split}
\mathbf{E_1}(t):=&\,9\,\alpha\,\delta\,
\left[\int_{\ssy\Omega}|\phi|^2\,|\phi_x|^2\,(|\phi|^2)_x\;dx\right]
-6\,\alpha\,\delta\,\mathrm{Re}\left[\int_{\ssy\Omega}|\phi|^2
\,\phi_x\,\overline{\phi}^2\,\phi_{xx}\;dx\right]\\
&\,-8\,\delta\,\rho\,\mathrm{Im}\left[\int_{\ssy\Omega}
|\phi|^2\,\phi\,\overline{\phi}_{xx}\;dx\right]
-4\,\delta\,\rho\,\mathrm{Im}\left[\int_{\ssy\Omega}
|\phi_x|^2\,\overline{\phi}\,\phi_{xx}\;dx\right]\\
&\,+6\,\delta^2\,\mathrm{Im}\left[\int_{\ssy\Omega}|\phi|^4
\,\overline{\phi}\,\phi_{xx}\;dx\right]
-6\,\delta^2\,\int_{\ssy\Omega}|\phi|^2\,\phi^2\overline{\phi}_x\;dx.\\
\end{split}
\end{equation}
Then the solution $\phi$, for any $t\in [0,T]$ satisfies the
energy equation
\begin{equation}\label{hirotaenergy}
\tfrac{d}{dt}\,\mathbf{J_1}(\phi)=\mathbf{E_1}(t).
\end{equation}
\end{lem}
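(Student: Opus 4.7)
The plan is to differentiate $\mathbf{J_1}(\phi(t))$ termwise in $t$, substitute the evolution equation (\ref{in1}) for $\phi_t$ (and its $x$-derivative when needed), and then reduce every resulting integrand to a polynomial expression in $\phi,\phi_x,\phi_{xx}$ by repeated integration by parts, which is legitimate because the periodic boundary conditions make all boundary contributions vanish and because $\phi\in C([0,T],H^3_{per}(\Omega))$ guarantees that every integrand obtained during the manipulations is at least in $L^1(\Omega)$ (using the Banach algebra property (\ref{prop2}) and the embedding $H^3_{per}\hookrightarrow C^2$). For the time-derivative on the principal piece $\rho\int|\phi_{xx}|^2\,dx$, I first integrate by parts once in space so as to arrive at $-2\rho\,\mathrm{Re}\int\phi_{xt}\,\overline\phi_{xxx}\,dx$, which keeps the order of derivatives acting on $\phi$ at $3$, compatible with our regularity class.

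The next step is to plug $\phi_{xt}=\partial_x\bigl(-3\alpha|\phi|^2\phi_x+\mathrm{i}\rho\phi_{xx}-\sigma\phi_{xxx}+\mathrm{i}\delta|\phi|^2\phi\bigr)$ into that expression, and similarly substitute $\phi_t$ and $\overline{\phi}_t$ into the time derivatives of the two nonlinear pieces $-4\delta\int|\phi|^2|\phi_x|^2\,dx$ and $-\delta\,\mathrm{Re}\int\phi^2\,\overline\phi_x^2\,dx$. The purely dispersive contributions of the type $\mathrm{Re}\,\mathrm{i}\rho\int|\phi_{xxx}|^2\,dx$ and $\sigma\int(|\phi_{xxx}|^2)_x\,dx$ vanish identically after integration by parts, so no genuinely third-order-in-derivative remainder survives from $\dot{\mathcal A}$. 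The remaining contributions from $\dot{\mathcal A}$ are the cross terms $\alpha\rho$ and $\rho\delta$ paired against $\overline\phi_{xxx}$, which I reduce to integrals of $(\phi,\phi_x,\phi_{xx})$-polynomials by one or two integrations by parts, following the same pattern that produced the identities (\ref{coen5})--(\ref{coen7}) in the proof of \lemref{conslaws}.

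The decisive step is the cancellation driven by the balance condition (\ref{cruc2008}). After the reductions, the third-order dispersion terms that appear with coefficient $\sigma\delta$ inside $\dot{\mathcal B}+\dot{\mathcal C}$ (coming from the $\sigma\phi_{xxx}$ factor in $\phi_t$ multiplied against the cubic nonlinear test functions built from $\phi,\phi_x$) and the mixed nonlinear-dispersive terms with coefficient $\alpha\rho$ that survive in $\dot{\mathcal A}$ (from the $3\alpha|\phi|^2\phi_x$ factor in $\phi_t$ paired with $\overline\phi_{xxx}$) appear only through the combination $\alpha\rho-\sigma\delta$, which is zero by hypothesis. This is precisely the one-derivative-higher analogue of the mechanism that in \eqref{coen4FIN} produced the conservation of $\mathbf{J}$, and it is where the hypothesis $\alpha\rho=\sigma\delta$ is genuinely used. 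What remains after these cancellations are exactly the six integrals that define $\mathbf{E_1}(t)$, yielding the identity $\tfrac{d}{dt}\mathbf{J_1}(\phi)=\mathbf{E_1}(t)$.

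The main obstacle is the bookkeeping: each integration by parts on a mixed expression such as $|\phi|^2\phi_x\overline\phi_{xxx}$ produces several structurally distinct terms — for instance $|\phi|^2|\phi_x|^2(|\phi|^2)_x$, $|\phi|^2\phi_x\overline\phi^2\phi_{xx}$, and $|\phi_x|^2\overline\phi\phi_{xx}$ — and these must be brought to a normal form before the $\alpha\rho-\sigma\delta$ factor becomes visible. I expect that, as in the derivation of \eqref{coneneg}, one has to iterate the integration by parts on mixed conjugate products like $\phi_{xx}\phi_x\overline\phi^2$ and $|\phi_x|^2\phi_x\overline\phi$ (compare \eqref{coen7}) so that the same nonlinear template is shared by all candidates for cancellation. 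Once this normalization is done, the identification with the right-hand side of \eqref{finmultip} is a matter of collecting coefficients. The regularity $\phi\in C([0,T],H^3_{per})\cap C^1([0,T],L^2)$ ensures that every manipulation occurs in a classical sense, so no further approximation is needed at this stage (such approximation will enter only when one extends \eqref{hirotaenergy} to merely $H^2$-weak solutions in the next step of the program).
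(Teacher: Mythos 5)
Your algebraic skeleton -- substitute \eqref{in1} for $\phi_t$, integrate by parts repeatedly, and watch the offending terms assemble into the factor $\alpha\rho-\sigma\delta$ -- is the same bookkeeping that the paper carries out (compare \eqref{multip5}--\eqref{multip7}, where the surviving coefficients of $\mathrm{Re}\int|\phi|^2(|\phi_{xx}|^2)_x\,dx$ and $\mathrm{Re}\int\overline{\phi}_{xx}^2\phi_x\phi\,dx$ are exactly $12(\alpha\rho-\sigma\delta)$ and $-6(\alpha\rho-\sigma\delta)$). The gap is analytical, not algebraic: your closing claim that the regularity $C([0,T],H^3_{per})\cap C^1([0,T],L^2)$ lets ``every manipulation occur in a classical sense, so no further approximation is needed'' is precisely what fails, and the paper's entire proof consists of repairing it. Already the first step on the principal term is undefined at this regularity: $\tfrac{d}{dt}\|\phi_{xx}\|^2_{L^2}=2\,\mathrm{Re}\int\phi_{xxt}\overline{\phi}_{xx}\,dx$ would pair $\phi_{xxt}\in C([0,T],H^{-2}_{per})$ against $\phi_{xx}\in H^1_{per}$ only, and the Lions--Magenes chain rule does not apply. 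Your proposed remedy, rewriting it as $-2\rho\,\mathrm{Re}\int\phi_{xt}\overline{\phi}_{xxx}\,dx$, does not help: $\phi_{xt}\in H^{-1}_{per}$ while $\overline{\phi}_{xxx}$ is merely in $L^2$, so this is not a legitimate duality pairing; and after substituting $\phi_{xt}=\partial_x(\dots-\sigma\phi_{xxx}+\dots)$ you face the product $\phi_{xxxx}\overline{\phi}_{xxx}$ with $\phi_{xxxx}\in H^{-1}_{per}$ and $\phi_{xxx}\in L^2$, which is not an $L^1$ integrand and not even a well-defined distribution. The ``identity'' $\sigma\int(|\phi_{xxx}|^2)_x\,dx=0$ that you invoke is therefore a purely formal statement at $H^3$ regularity. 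The same difficulty, one order lower, affects $\tfrac{d}{dt}\int|\phi|^2|\phi_x|^2\,dx$, whose computation produces $\phi_{xt}$ tested against objects that are only in $L^2$.

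What is actually needed -- and what constitutes the substance of the paper's proof -- is a regularization scheme: project $\phi$ onto the first $n$ Fourier modes and mollify in time to obtain smooth approximants $v_n=J_{1/n}*\phi_n$, perform all the computations for $v_n$ inside duality brackets (Lemmas \ref{xwa1}--\ref{xwa2}, formulas \eqref{multip1}, \eqref{multip3Aa}), and pass to the limit using the convergences \eqref{propvn} together with Aubin--Lions compactness, which is required because the multipliers are \emph{nonlinear} (e.g.\ $|v_n|^2\overline{v}_{nxx}$) and weak convergence alone does not suffice. The treatment of $\|\phi_{xx}\|^2_{L^2}$ also exploits a structural trick you do not mention: individually $\langle\phi_t,\partial_x^4\phi\rangle$ and $\langle\phi_{xxx},\partial_x^4\phi\rangle$ are meaningless, but the combination $\phi_t-\mathrm{i}\rho\phi_{xx}+\sigma\phi_{xxx}$ equals the nonlinearity $-3\alpha|\phi|^2\phi_x+\mathrm{i}\delta|\phi|^2\phi$, which lies in $H^2_{per}$ by the Banach algebra property, so that the pairing against $\partial_x^4\phi\in H^{-2}_{per}$ is legitimate and passes to the limit (see \eqref{multip3Aa}--\eqref{multip3Ab}). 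Without this approximation-and-duality apparatus the lemma is only verified formally, i.e.\ for data smooth enough ($H^4$ or better) that the statement loses its point.
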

%
%
%
%
%
\begin{proof}
An attempt for the formal derivation of (\ref{hirotaenergy}) readily reveals the generation of higher order derivative terms, on which, direct calculations are not permitted by the weak regularity assumptions on the initial data. Therefore, 
we shall rigorously calculate the time-derivative of the functional (\ref{pseudocons})  (as suggested by the right-hand side of the claimed equation (\ref{hirotaenergy})), by following an
approximation scheme (close to that developed in \cite{XWang} for the linearly damped and forced NLS equation).  The scheme is based on a
smooth approximation of the solution of
\eqref{introeq1}-\eqref{introeq2}-\eqref{bc}, both in space and time. 
We denote by
\begin{equation}\label{defOm}
\ld\phi,\psi\rd_{H^m,H^{-m}}=\mathrm{Re}\left[
\int_{\ssy\Omega}\phi\,\psi\;dx\right],
\end{equation}
the duality bracket between the space $H^m$ and its dual $H^{-m}$,
$m\geq 1$, to elucidate that several quantities act as
functionals during our calculations. Note also that due to the (compact)
embeddings $H^m_{per}(\Omega)\subset L^2(\Omega)\subset
H^{-m}_{per}(\Omega)$, $m\geq 1$, the identification
\begin{equation}\label{Sembe}
\ld\phi,\psi\rd_{H^m,H^{-m}}=(\phi,\psi)_{\ssy L^2(\Omega)},
\quad\mbox{for every}\quad\phi\in H^m_{per}(\Omega),\quad\psi\in
L^2(\Omega),
\end{equation}
holds. 

For the smooth approximation in space,  we consider for
arbitrary $n\in\mathbb{N}$, the low frequency part of the unique solution $\phi\in
C([0,T],H^3_{per}(\Omega))\cap C^1([0,T], L^2(\Omega))$ of
\eqref{introeq1}-\eqref{introeq2}-\eqref{bc},
\begin{equation}\label{Tem1}
\phi_n(t)=\sum_{|k|\leq n}\widehat{\phi}_k(t)\,\mathrm{e}^{{\rm
2\pi i}\,\frac{k}{L}\,x},
\end{equation}
which is the projection of the solution $\phi$ onto the first $n$
Fourier modes. The function $\phi_n$ is a smooth analytic
function with respect to the space variable.  A regularization of
$\phi_n$ in time, can be obtained by a standard mollification in
time. For instance, we consider the standard mollifier in time
$J_{\epsilon}(t)=\epsilon^{-1}J(t/\epsilon)$, $\epsilon>0$, with
the properties $J_{\epsilon}(t)=0$ if $|t|\geq\epsilon$, and
$\int_{\ssy\mathbb{R}}J_{\epsilon}(t)dt=1$. Recall that $J\in
C_0^{\infty}(\mathbb{R})$, $J\geq 0$ on $\mathbb{R}$, and
$\int_{\ssy\mathbb{R}}J(x)dx=1$. Then, the function
\begin{equation*}
J_{\epsilon}*\phi_n\,(t)=\int_{\ssy\mathbb{R}}J_{\epsilon}(t-s)\phi_n(s)ds,
\end{equation*}
is infinitely differentiable function on $\mathbb{R}$ into
$H^3_{per}(\Omega)$. Furthermore, by setting
$\epsilon=\frac{1}{n}$, we see that the regularized
sequence $v_n=J_{\frac{1}{n}}*\phi_n$, satisfies the key convergence relations
to $\phi\in L^2([0,T],X)$, $X=H^m_{per}(\Omega)$, $-2\leq m\leq 3$, given in (\ref{propvn}) --cf. Appendix~\ref{SECTION_II}.

We indicatively demonstrate the calculations required for the time derivative $\tfrac{d}{dt}\left( |\phi_{x}|^2,|\phi|^2\right)_{\ssy
L^2(\Omega)}$-the time derivative of the second term of of the functional (\ref{pseudocons}).
We note again that more details can be found in Appendix~\ref{SECTION_II}. By using the approximating
sequence $v_n$,  Lemmas \ref{xwa1} and \ref{xwa2}, and performing the calculations for $v_n$ in the sense of the dual spaces as in (\ref{multip1}), we derive the equation
\begin{equation}\label{prepassage}
\begin{split}
\left( |v_{nx}(t)|^2,|v_n(t)|^2\right)_{\ssy L^2(\Omega)} -\left(
|v_{nx}(0)|^2,|v_n(0)|^2\right)_{\ssy L^2(\Omega)}
=&-2\int_{0}^{t}\ld |v_n|^2\overline{v}_{nxx},v_{nt}
\rd_{\ssy H^1,H^{-1}}\;d\tau\\
&-2\int_0^{t}\ld \overline{v}_{nx}^2 v_n,v_{nt}\rd_{\ssy
H^2,H^{-2}}d\tau,
\end{split}
\end{equation}
which holds for every $t\in [0,T]$. A difference from the NLS 
equation \cite{XWang}, is that in the passage to
the limit to (\ref{prepassage}) (or its equivalent \eqref{multip1}), we have to deal with the appearance
of nonlinear multipliers in the right-hand side. We first observe
that by using \eqref{propvn} and Aubin-Lions \cite[Corollary 4,
p.85]{sim90}, we may derive the strong convergence
\begin{equation}
\begin{array}{lllll}
\label{multip1a}
v_{n}&\rightarrow& \phi,&\mbox{in}&L^2([0,T], H^{2}_{per}(\Omega)),\\
v_{nxx}&\rightarrow& \phi_{txx},&\mbox{in}&L^2([0,T], L^2(\Omega)).
\end{array}
\end{equation}
On the other hand, we may see that there exists some $c>0$ such
that
\begin{equation*}
\int_{0}^{\ssy
T}\|\,|v_n|^2{v}_{nxx}-|\phi|^2{\phi}_{xx}\,\|_{\ssy
L^2(\Omega)}^2\,ds\leq\, c\,\left\{\int_{0}^{\ssy
T}\|{v}_{nxx}-\phi_{xx}\|^2_{\ssy L^2(\Omega)}\;ds+\int_{0}^{\ssy
T}\|{v}_{n}-\phi\|^2_{\ssy L^2(\Omega)}\;ds\right\}\rightarrow 0,
\end{equation*}
as $n\rightarrow\infty$; the second term in the right hand
side of \eqref{prepassage} may be treated similarly.  Letting
$n\rightarrow\infty$,  equation \eqref{prepassage} (or \eqref{multip1}) converges to
\begin{equation}\label{multip1b}
\tfrac{d}{dt}\left( |\phi_{x}|^2,|\phi|^2\right)_{\ssy
L^2(\Omega)}=-2\ld |\phi|^2\overline{\phi}_{xx},\phi_{t}\rd_{\ssy
H^1,H^{-1}}-2\ld \overline{\phi}_{x}^2 \phi,\phi_{t}\rd_{\ssy
H^2,H^{-2}}.
\end{equation}
From \eqref{multip1b}, we may proceed by inserting the expression for $\phi_t$ provided by
\eqref{in1}. Manipulating the  right-hand side of \eqref{multip1b} as shown in (\ref{multip1c}), we derive the equation (\ref{multip2}), which is valid for the solution $\phi\in C([0,T],H^3_{per}(\Omega))$, since it  only contains (weak) derivatives up to the third order. This fact allows for further integration by parts to the right-hand side of (\ref{multip2}).  Eventually, we arrive in 
\begin{equation}\label{multip5}
\begin{split}
\tfrac{d}{dt}\int_{\ssy\Omega}|\phi|^2\,|\phi_x|^2\;dx
=&\,3\,\alpha\,\mathrm{Re}\left[\int_{\ssy\Omega}|\phi|^4\,(|\phi_x|^2)_x\;dx\right]
+3\,\sigma\,\mathrm{Re}\left[\int_{\ssy\Omega}|\phi|^2\,(|\phi_{xx}|^2)_x\;dx\right]\\
&+3\,\alpha\,\mathrm{Re}\left[\int_{\ssy\Omega}|\phi|^2\,|\phi_x|^2
\,(|\phi|^2)_x\;dx\right]
+2\,\delta\,\mathrm{Im}\left[\int_{\ssy\Omega}|\phi|^4\,\phi
\,\overline{\phi}_{xx}\;dx\right]\\
&+2\,\rho\,\mathrm{Im}\left[\int_{\ssy\Omega}
\overline{\phi}_x^2\,\phi\,\phi_{xx}\;dx\right]
+2\,\delta\,\mathrm{Im}\left[
\int_{\ssy\Omega}\overline{\phi}_x^2\,\phi^2\,|\phi|^2\;dx\right].\\
\end{split}
\end{equation}
For the time derivatives of the rest of the terms of (\ref{pseudocons}), we work similarly (see  Appendix \ref{SECTION_II}): for the first term, the derivative is
\begin{equation}\label{multip6}
\begin{split}
\tfrac{d}{dt}\int_{\ssy\Omega}|\phi_{xx}|^2dx=&\,
12\,\alpha\,\mathrm{Re}\left[\int_{\ssy\Omega}
|\phi|^2\,(|\phi_{xx}|^2)_x\;dx\right]
-6\,\alpha\,\mathrm{Re}\left[\int_{\ssy\Omega}
\overline{\phi}_{xx}^2\,\phi_x\,\phi\;dx\right]\\
&-8\,\delta\,\mathrm{Im}\left[\int_{\ssy\Omega}
|\phi_x|^2\,\phi\,\overline{\phi}_{xx}\;dx\right]
-4\,\delta\,\mathrm{Im}\left[\int_{\ssy\Omega}
\phi_x^2\,\overline{\phi}_{xx}\,\overline{\phi}\;dx\right]\\
&-2\,\delta\,\mathrm{Im}\left[
\int_{\ssy\Omega}\phi^2\,\overline{\phi}_{xx}^2\;dx\right],\\
\end{split}
\end{equation}
and for the third term,
\begin{equation}\label{multip7}
\begin{split}
\tfrac{d}{dt}\int_{\ssy\Omega}\phi^2\,\overline{\phi}_x^2\;dx=&\,
3\,\alpha\,\mathrm{Re}\left[\int_{\ssy\Omega}
|\phi|^2\,|\phi_x|^2\,(|\phi|^2)_x\;dx\right]
-6\,\sigma\,\mathrm{Re}\left[\int_{\ssy\Omega}
\overline{\phi}_{xx}^2\,\phi_x\,\phi\;dx\right]\\
&+6\,\alpha\,\mathrm{Re}\left[
\int_{\ssy\Omega}|\phi|^2\,\phi_x
\,\overline{\phi}^2\,\phi_{xx}\;dx\right]
-2\,\rho\,\mathrm{Im}\left[\int_{\ssy\Omega}\phi_{xx}\,\phi
\,\overline{\phi}_x^2\;dx\right]\\
&-2\,\delta\,\mathrm{Im}\left[\int_{\ssy\Omega}|\phi|^2
\,\phi^2\,\overline{\phi}_x^2\;dx\right]
+4\,\rho\,\mathrm{Im}\left[\int_{\ssy\Omega}|\phi_x|^2
\,\overline{\phi}\,\phi_{xx}\;dx\right]\\
&+2\,\rho\,\mathrm{Im}\left[\int_{\ssy\Omega}
\overline{\phi}^2\,\phi_{xx}^2\;dx\right]
+2\,\delta\,\mathrm{Im}\left[\int_{\ssy\Omega}
|\phi|^4\,\overline{\phi}\,\phi_{xx}\;dx\right].
\end{split}
\end{equation}
Constructing the time derivative of the functional
$\mathbf{J_1}(\phi)$ from the left-hand side of
\eqref{multip5}-\eqref{multip7}, we get as a consequence of the
assumption \eqref{cruc2008} on the coefficients,  the equation
\eqref{hirotaenergy}.
\end{proof}
%
%
We shall use Lemma \ref{weaksol1}, to prove
\setcounter{thm}{4}
\begin{thm}\label{globexh2}
Let $\phi\in C([0,T],H^3_{per}(\Omega))\cap C^1([0,T],
L^2(\Omega))$, be the local in time solution of
\eqref{introeq1}-\eqref{introeq2}-\eqref{bc}. Assume that (\ref{cruc2008}) holds, and in addition, that $\rho\neq 0$. Then
$\phi\in C([0,\infty);H^2_{per}(\Omega))
\cap\mathrm{C}([0,\infty);H^1_{per}(\Omega))$,
i.e the solution $\phi$ does not blow up in finite time, in the
$H^m_{per}(\Omega )$ norm, $m=0,1,2$.
\end{thm}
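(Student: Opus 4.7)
The strategy is to combine the conservation laws of Lemma~\ref{conslaws} with the energy equation of Lemma~\ref{weaksol1}, and then invoke an alternative-theorem argument (in the spirit of \cite{Alb} and Kato's theory) to rule out finite-time blow-up in the $H^2_{per}(\Omega)$-norm. The condition $\rho\neq 0$ will enter in two places: to extract a uniform $H^1$-bound from $\mathbf{J}$, and to extract a uniform $H^2$-bound from $\mathbf{J_1}$.

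First, from $\mathbf{N}(\phi(t))=\mathbf{N}(\phi_0)$ we immediately obtain a uniform bound on $\|\phi(t)\|_{L^2(\Omega)}$. Next, under the balance condition \eqref{cruc2008} the functional $\mathbf{J}$ is conserved, and the one-dimensional Gagliardo–Nirenberg inequality \eqref{gnin} gives $\|\phi\|_{L^4}^4 \leq C\|\phi_x\|_{L^2}\,\|\phi\|_{L^2}^3$. Combining this with $\mathbf{J}(\phi(t))=\mathbf{J}(\phi_0)$ and Young's inequality, the hypothesis $\rho\neq 0$ allows one to absorb the $L^4$ term and deduce a uniform $H^1$-bound:
\begin{equation*}
\sup_{t\in[0,T^*)}\|\phi(t)\|_{H^1(\Omega)} \leq M_1,
\end{equation*}
with $M_1$ depending only on $\phi_0$, $\rho$, $\delta$. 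The Sobolev embedding $H^1_{per}(\Omega)\hookrightarrow L^\infty(\Omega)$ then yields a uniform $L^\infty$-bound as well.

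Second, I would exploit the energy equation $\tfrac{d}{dt}\mathbf{J_1}(\phi)=\mathbf{E_1}(t)$ from Lemma~\ref{weaksol1}. Because $\rho\neq 0$, the functional $\mathbf{J_1}$ is coercive over $\|\phi_{xx}\|_{L^2}^2$ modulo lower-order terms: the two nonlinear contributions in \eqref{pseudocons} are controlled by $\|\phi\|_{L^\infty}^2\,\|\phi_x\|_{L^2}^2$, which is uniformly bounded by the previous step. Hence there are constants $c,C>0$ with
\begin{equation*}
c\,\|\phi_{xx}(t)\|_{L^2(\Omega)}^2 - C \;\leq\; |\mathbf{J_1}(\phi(t))| \;\leq\; |\rho|\,\|\phi_{xx}(t)\|_{L^2(\Omega)}^2 + C.
\end{equation*}
The key estimate to carry out is then the pointwise-in-time bound $|\mathbf{E_1}(t)|\leq K(1+\|\phi_{xx}(t)\|_{L^2(\Omega)}^2)$, where $K$ depends only on $M_1$, the coefficients, and $L$. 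Each of the six terms in \eqref{finmultip} involves at most second-order derivatives, together with factors of $\phi$ and $\phi_x$, and after using Hölder, the uniform $L^\infty$-bound on $\phi$, and the Gagliardo–Nirenberg interpolation $\|\phi_x\|_{L^\infty}\leq C(\|\phi_{xx}\|_{L^2}^{1/2}\|\phi_x\|_{L^2}^{1/2}+\|\phi_x\|_{L^2})$, every term is majorized by $K(1+\|\phi_{xx}\|_{L^2}^{3/2})$, which is in turn dominated by $K(1+\|\phi_{xx}\|_{L^2}^2)$.

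Combining the last two displays, I obtain the differential inequality
\begin{equation*}
\tfrac{d}{dt}|\mathbf{J_1}(\phi(t))| \;\leq\; K_1\bigl(1+|\mathbf{J_1}(\phi(t))|\bigr),\qquad t\in[0,T^*),
\end{equation*}
from which Gronwall's lemma gives a bound on $|\mathbf{J_1}(\phi(t))|$ on every finite interval $[0,T]$, and hence a bound on $\|\phi_{xx}(t)\|_{L^2(\Omega)}$ depending only on $T$ and $\phi_0$. Finally, I appeal to the alternative theorem of \cite{Alb} (consistent with Kato's theory \cite{Kato0}-\cite{Kato3}), which asserts that either $T^*=\infty$ or $\limsup_{t\uparrow T^*}\|\phi(t)\|_{H^2(\Omega)}=\infty$. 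The previous step excludes the second possibility, giving $\phi\in C([0,\infty);H^2_{per}(\Omega))\cap C([0,\infty);H^1_{per}(\Omega))$, as claimed.

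The main obstacle is the pointwise bound on $\mathbf{E_1}(t)$: all six multipliers in \eqref{finmultip} are sextic in $\phi$ with up to two derivatives, so one must be careful in choosing which factors to place in $L^\infty$ (controlled by the $H^1$-bound) and which in $L^2$ (absorbed into the coercive part of $\mathbf{J_1}$), in order to avoid supercritical powers of $\|\phi_{xx}\|_{L^2}$ that would allow blow-up.
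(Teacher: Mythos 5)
Your proposal is correct and follows essentially the same route as the paper: $L^2$ and $\mathbf{J}$ conservation plus Gagliardo--Nirenberg and Young give the uniform $H^1$ bound (using $\rho\neq 0$), the energy equation for $\mathbf{J_1}$ combined with interpolation estimates on the sextic terms of $\mathbf{E_1}$ yields a Gronwall-type inequality for $\|\phi_{xx}\|_{L^2}^2$ (the paper isolates $\rho\|\phi_{xx}\|_{L^2}^2$ by moving the remaining terms of $\mathbf{J_1}$ into an auxiliary quantity $\mathbf{E_2}$, which is exactly your coercivity step), and the continuation alternative of \cite{Alb} closes the argument. The only cosmetic difference is that you run Gronwall on $|\mathbf{J_1}|$ rather than on $\|\phi_{xx}\|_{L^2}^2$ after dividing by $|\rho|$, and you state the blow-up alternative in terms of the $H^2$-norm rather than $\|\phi\|_{L^\infty}+\|\phi_x\|_{L^\infty}$; both are equivalent here via the Sobolev embedding.
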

%
%
%
%
%
\begin{proof}
By the conservation law \eqref{law1}, follows the equation
\begin{equation}\label{gs1}
\|\phi\|_{\ssy L^2(\Omega)}^2=\|\phi_0\|_{\ssy
L^2(\Omega)}^2,\quad\forall\,t\in [0,T].
\end{equation}
From the conservation law \eqref{law2},  equation \eqref{gs1}, and inequality
\eqref{gnin}, we have that
\begin{equation}\label{gs2}
\begin{split}
\|\phi_x\|^2_{\ssy L^2(\Omega)}\leq&\,\|\phi_{0x}\|_{\ssy
L^2(\Omega)}^2 +c_{\delta\rho}\,\|\phi_{0}\|_{\ssy L^4(\Omega)}^4
+c_{\delta\rho}\,\|\phi\|_{\ssy L^4(\Omega)}^4\\
\leq&\,M_1+c_{\delta\rho}\,\|\phi_{x}\|_{\ssy L^2(\Omega)}
\,\|\phi_{0}\|_{\ssy L^2(\Omega)},\\
\end{split}
\end{equation}
where $c_{\delta\rho}=|\frac{\delta}{\rho}|$, and $M_1$ depends
only on $\|\phi_0\|_{\ssy H^1(\Omega)}$ and the coefficients. Then,
from \eqref{gs2} and Young's inequality, we derive the estimate
\begin{eqnarray}\label{gs3}
\|\phi_{x}\|_{\ssy L^2(\Omega)}^2\leq\, M_2,
\end{eqnarray}
where $M_2$ is independent of $t$.
\par
The next step is to derive an estimate of $\|\phi_{xx}(t)\|_{\ssy
L^2(\Omega)}$. Note that through the procedure of the proof
of the energy equation \eqref{hirotaenergy}, the integrals
$\mathrm{Re}\left[\int_{\ssy\Omega}|\phi|^2\,(|\phi_{xx}|^2)_x\;dx\right]$
and $\mathrm{Re}\left[\int_{\ssy\Omega}\overline{\phi}_{xx}^2
\,\phi\,\phi_x\;dx\right]$ have been eliminated from the
quantities \eqref{multip5}-\eqref{multip7}. Thus, integration with
respect to time of \eqref{hirotaenergy}, implies the inequality
\begin{equation}\label{h1}
\begin{split}
\|\phi_{xx}(t)\|^2_{\ssy L^2(\Omega)}\leq&
|\rho|^{-1}\left\{\int_{0}^{t}\left(\,|\mathbf{E_1}(s)|
+|\mathbf{E_2}(s)|\,\right)\;ds
+|\mathbf{J_1}(0)|\right\},\\
\mathbf{E_2}(t)=&\,4\,\delta\,\mathrm{Re}\left[
\int_{\ssy\Omega}|\phi|^2|\phi_x|^2dx\right]
+\delta\,\mathrm{Re}\left[\int_{\ssy\Omega}\phi^2
\,\overline{\phi}_x^2\;dx\right].\\
\end{split}
\end{equation}
The various quantities appearing in \eqref{finmultip} can be
estimated with the help of either \eqref{gnin} or \eqref{prop2}.
For example, we may apply \eqref{gnin} for $u=\phi_x$ with
$j=0,\,p=4,\,q=r=2,\,m=1$, and for $u=\phi$ with
$j=1,\,p=3,\,q=r=2,\,m=2$, to obtain respectively, the following
estimates:
\begin{equation}\label{h2}
\|\phi_{x}\|_{\ssy L^4(\Omega)}^2 \leq\,c\,\|\phi_{x}\|_{\ssy
L^2(\Omega)}^{\frac{3}{2}} \,\,\|\phi_{xx}\|^{\frac{1}{2}}_{\ssy
L^2(\Omega)},
\quad
\|\phi\|^{3}_{\ssy L^3(\Omega)}\leq\,c\,\|\phi\|_{\ssy
L^2(\Omega)}^{\frac{5}{4}} \,\,\|\phi_{xx}\|^{\frac{7}{4}}_{\ssy
L^2(\Omega)}.
\end{equation}
The constant $c$ in (\ref{h2}) depends on the constants $M_1,M_2$
of \eqref{gs2} and \eqref{gs3}. Then from Young's inequality and
the estimate \eqref{gs3}, we get that
\begin{equation}\label{h3}
\begin{split}
\int_{\ssy\Omega}|\phi_x|^2\,|\phi|\,|\phi_{xx}|\;dx\leq&
\,\|\phi\|_{\ssy L^{\infty}(\Omega)}\,\|\phi_x\|_{\ssy
L^4(\Omega)}^2
\,\|\phi_{xx}\|_{\ssy L^2(\Omega)}\\
\leq&\,M+c\,\|\phi_{xx}\|^2_{\ssy L^2(\Omega)},\\
\left|\mathrm{Re}\left[\int_{\ssy\Omega}|\phi|^2\,|\phi_x|^2
\,(|\phi|^2)_x\;dx\right]\right|\leq&
\,c\,\int_{\ssy\Omega}|\phi|^3\,|\phi_x|^3\;dx\\
\leq&\,M+c\,\|\phi_{xx}\|^2_{\ssy L^2(\Omega)}.
\end{split}
\end{equation}
The rest of the terms can be estimated in a similar manner,
taking of course into account the estimate of
$\|\phi\|_{\ssy L^{\infty}(\Omega)}$, provided by \eqref{gs3}. The
constant $M$ depends only on the coefficients and
$\|\phi_0\|_{\ssy H^2(\Omega)}$. Eventually, we get the inequality
\begin{equation}\label{h4}
\|\phi_{xx}(t)\|^2_{\ssy L^2(\Omega)}\leq
M+c\int_{0}^t\|\phi_{xx}(s)\|^2_{\ssy L^2(\Omega)}\;ds,
\quad\forall\,t\in[0,T],
\end{equation}
where now, $c$ denotes a constant depending only on the
coefficients and $\|\phi_0\|_{\ssy H^2(\Omega)}$.
\par
The fact that the $H^2_{per}(\Omega)$-solution of
\eqref{introeq1}-\eqref{introeq2}-\eqref{bc} exists for all time
follows essentially by applying the criterion of \cite{Alb}
(see also \cite{Bona} for the modified KdV-equation in
weighted-Sobolev spaces).
\par
Since  $\phi_0\in H^3_{per}(\Omega)$ by assumption, the local existence Theorem~\ref{thmloc} allows us to let  $T_{\max}$ be the maximum
time, such that,  for all $T$ with $0<T<T_{\max}$, the solution
$\phi\in C([0,\infty);H^2_{per}(\Omega))
\cap\mathrm{C}([0,\infty);H^1_{per}(\Omega))$. The claim is that the following alternative holds, \cite[Theorem
2, p. 6]{Alb}:
\begin{equation*}
T_{\max}=\infty,\quad\mbox{or} \sup_{0\leq t<T_{\max}}\left\{\|\phi(t)\|_{\ssy L^{\infty}(\Omega)}+\|\phi_x(t)\|_{\ssy L^{\infty}(\Omega)}\right\}
=\infty.
\end{equation*}
From inequality
\eqref{h4}, we get
\begin{equation}\label{h5}
\|\phi_{xx}(\cdot,t)\|_{\ssy L^2(\Omega)}^2\leq
\,M\,e^{ct},\quad\forall\,t\in [0,T_{\max}).
\end{equation}
Then, it follows that $T_{\max}=\infty$. This can be seen by
adapting in our case the last lines of \cite[Theorem 2, pg.
8]{Alb}: For instance, if $T_{\max}<\infty$, then there exists a
finite constant $M_3$ such that
\begin{equation*}
\|\phi(\cdot,t)\|_{\ssy H^2_{per}(\Omega)} \leq\,M_3,\quad\forall\,t\in
(0,T_{\max}).
\end{equation*}
We may  apply next,  the local existence result given by
Theorem~\ref{thmloc} with the initial condition
$$\widetilde{\phi_0}(\cdot):=\phi (\cdot, T_{\max}-\frac{1}{2}T(M_3)).$$
With this initial condition, Theorem~\ref{thmloc} implies the existence of a solution in
$C([0,{\widetilde{T}}];H^2_{per}(\Omega))$ with $\widetilde{T}$
being at least $T_{\max}+\frac{1}{2}\,T(M_3)$. This $\widetilde{T}$  contradicts obviously
the maximality of $T_{\max}$.  The contrary  of the alternative criterion follows again by the Sobolev
embeddings \eqref{prop1}.
\end{proof}
%
%
%
%
%
%
\setcounter{remm}{5}
\begin{remm}\label{issue2} (Conservation laws and energy equations).
{\em The derivation of the conservation law \eqref{law2}, and of the  $H^2$-energy equation depends
heavily on the assumption \eqref{cruc2008} on the  coefficients. Sharing the same difficulty with KdV-type equations, 
it is not
easy to recover explicit formulas for the possible invariants due
to the rapid proliferation of terms with increasing rank
\cite{RTem88}. As it was already mentioned in the proof of
Theorem~\ref{globexh2}, the assumption \eqref{cruc2008} has a strong
effect in the global solvability of
\eqref{introeq1}--\eqref{bc}-\eqref{introeq2}, since it allows for the
elimination of such terms. To the best of our knowledge, it is unclear if generalized ENLS equations of the form \eqref{geneq1} conserve a similar energy functional to (\ref{law2}), as the NLS counterparts of the form 
$u_t-\mathrm{i}\rho u_{xx}+f(|u|^2)u=0$, with a general class of nonlinearities  $f$ --see \cite{cazS,cazh}. 
As an example for \eqref{geneq1}, one
may consider the case of a generalized Hirota equation with power
nonlinearities $\beta(s)=s^p=\gamma(s)$, or
$\beta(s)=s^p,\,\gamma(s)=s^q$, where $p,q$ are non-negative
integers. This case obviously satisfies the conservation law
\eqref{law1}. However, it does not seem to have similar energy
invariants with those of NLS and generalized KDV equations, with such a type of nonlinearities, \cite[p.12]{Bona}.

\par
Finally, we stress that the derivation of the conservation laws (\ref{law1})-(\ref{law2}), and of the $H^2$-energy equation (\ref{hirotaenergy}), can be produced even under weaker assumptions on the initial data.  Actually, a similar approximation procedure to that followed in the proof of Lemma \ref{weaksol1}, can be used to prove the conservation laws (\ref{law1})-(\ref{law2}), even when the initial condition $\phi_0\in H^1_{per}(\Omega)$, and the associated solution  $\phi\in C([0,T],H^1_{per}(\Omega))\cap C^1([0,T], H^{-2}_{per}(\Omega))$. Furthermore, revisiting the formulas \eqref{devfunct}-\eqref{intt7} under the assumption that the initial condition $\phi_0\in H^1_{per}(\Omega)$, it can be shown that the $H^2$-energy equation is valid for weak solutions $\phi\in C([0,T],H^2_{per}(\Omega))\cap C^1([0,T], H^{-1}_{per}(\Omega))$. The choice of initial condition $\phi_0\in H^3_{per}(\Omega)$ in the proofs has been made for the simplicity of the presentation; under this assumption, it is only necessary to present the approximation scheme only for the proof of Lemma \ref{weaksol1}.} 
\end{remm}
\section{Exact 
solitary wave solutions}
\label{SECTION_IV}
This section is devoted to the investigation of appropriate conditions for the existence of 
solitary wave solutions for (\ref{introeq1}). The analysis
employs a travelling-wave ansatz, and the relevant phase-plane considerations 
of \cite{DJF95b}-see also \cite{DJF95a}. 
In particular, we seek travelling wave solutions solutions in the form:
\begin{eqnarray}
\label{eq3.1}
\phi(x,t)=\mathrm{e}^{\mathrm{i} (k_0 x-\omega_0 t)}F(\xi), \qquad \xi=x-\upsilon t.
\end{eqnarray}
Here, $F(\xi)$ is 
an unknown envelope function (to be determined) which
will be assumed to be 
real. The real parameters $k_0$ and $\omega_0$ are the wavenumber and frequency of the travelling wave, respectively, 
while the parameter $\upsilon$ stands for the group velocity of the wave in the $x$-$t$ frame and is connected with the shift of the original group velocity $v_g$ in the context of optical fibers, see \cite{DJFa1}. The parameters $k_0$, $\omega_0$ and $\upsilon$ are 
unknown parameters which shall be determined. 

Substituting 
(\ref{eq3.1}) into (\ref{introeq1}), we obtain 
the equation:
\begin{eqnarray}
\label{eq3.4a}
&-\upsilon F'-\mathrm{i}\omega_0 F+3\alpha|F|^2(F'+\mathrm{i}k_0 F)-\mathrm{i}\rho (F''+2\mathrm{i}k_0 F'-k_0^2F)\nonumber\\
&+\sigma (F'''+3\mathrm{i}k_0 F''-3k_0^2F'-\mathrm{i}k_0^3F)-\mathrm{i}\delta|F|^2F=0,
\end{eqnarray}
where primes denote derivatives with respect to $\xi$. 
Separating the real and imaginary parts of (\ref{eq3.4a}), we derive the system of equations
\begin{eqnarray}
\label{eq3.4}
&F'''-\frac{\upsilon+3k_0^2\sigma-2\rho k_0}{\sigma}F'+\frac{3\alpha}{\sigma}F^2F'=0,\\
\label{eq3.5}
&F''-\frac{\omega_0-\rho k_0^2+\sigma k_0^3}{3k_0\sigma-\rho}F+\frac{3\alpha k_0-\delta}{3 k_0\sigma-\rho}F^3=0,
\end{eqnarray}  
under the assumptions
\begin{eqnarray}
\label{eq3.6}
\sigma\neq 0,\;\;\mbox{and}\;\;\rho\neq 3k_0\sigma.
\end{eqnarray}
To derive a consistent (i.e., not overdetermined) system, 
Eq.~(\ref{eq3.5}) should be differentiated once again. Hence, the system of Eqs.~(\ref{eq3.4})-(\ref{eq3.5})
will be consistent under the conditions
\begin{eqnarray}
\label{eq3.9}
\frac{\upsilon+3k_0^2\sigma-2\rho k_0}{\sigma}&=&\frac{\omega_0-\rho k_0^2+\sigma k_0^3}{3k_0\sigma-\rho}=\lambda,\;\;\mbox{and}\\
\label{eq3.10}
\frac{\alpha}{\sigma}&=&\frac{3\alpha k_0-\delta}{3k_0\sigma-\rho}=\mu,
\end{eqnarray}
where $\lambda$ and $\mu$ are real nonzero constants. 

It is interesting to recover the balance condition (\ref{cruc2008}) anew, this time as a necessary condition for the existence of travelling-wave solutions of the form (\ref{eq3.1}).  In particular, the consistency condition (\ref{eq3.10}) implies directly that $\alpha\rho=\sigma\delta$ for any $k_0\in\mathbb{R}$.  Furhtermore, by solving (\ref{eq3.9}) in terms of $\omega_0$, we find that the equation (\ref{introeq1}) supports travelling-wave solutions of the form (\ref{eq3.1}) for any $k_0\neq\frac{\rho}{3\sigma}$,  if the condition (\ref{cruc2008}) holds, and $\omega_0$, $\upsilon$ satisfy
\begin{eqnarray}
\label{eq3.4aa}
\omega_0=\frac{1}{\sigma}\left[(\upsilon+3k_0^2\sigma-2k_0\rho)(3k_0 \sigma-\rho)+\sigma(\rho k_0^2-\sigma k_0^3)\right].
\end{eqnarray}
In the particular case of travelling wave solutions  with wavenumber $k_0=0$, we see directly from (\ref{eq3.4aa}) that $\omega_0$, $\upsilon$ should satisfy the relation
\begin{eqnarray}
\label{eq3.4A0}
\upsilon\rho=-\omega_0\sigma.
\end{eqnarray}

With the consistency conditions (\ref{eq3.9}) and (\ref{eq3.10}) in hand, 
the system 
(\ref{eq3.4})-(\ref{eq3.5}) is equivalent to the unforced and undamped Duffing oscillator equation 
\begin{eqnarray}
\label{eq3.14}
F''+\lambda F+\mu F^3=0,
\end{eqnarray}
with the associated Hamiltonian
\begin{eqnarray}
\label{eq3.15}
\mathcal{H}(F,F')=\frac{F'^2}{2}+\lambda\frac{F^2}{2}+\mu\frac{F^4}{4}.
\end{eqnarray}
The exact 
solitary wave solutions 
of (\ref{introeq1}), as well as their dynamics will then be determined by the corresponding solutions and dynamics of (\ref{eq3.14}) [or (\ref{eq3.15})], according to the analysis carried out e.g. in \cite{DJF95b}. 
To reduce the number of parameters involved in this analysis we use 
the change of variables $t\rightarrow\rho t$, and the transformation 
$(\delta/\rho)|\phi|^2\rightarrow s|\phi|^2$,
and 
reduce equation (\ref{introeq1})
to the following form
\begin{equation}\label{eq3.18}
\phi_{t}+3A s|\phi|^2\phi_x-{\rm
i}\phi_{xx}+B\phi_{xxx} -{\rm
i}s |\phi|^2\,\phi=0,
\end{equation}
where the constants $A$, $B$ and $s$ are defined as
\begin{eqnarray}
\label{eq3.19}
A=\frac{\alpha}{\delta},\;\;B=\frac{\sigma}{\rho},\;\;s=\mathrm{sign}\left(\frac{\delta}{\rho}\right).
\end{eqnarray}
%
Accordingly, Eqs.~(\ref{eq3.9}) and (\ref{eq3.10}) respectively become, 
\begin{eqnarray}
\label{eq3.21}
\lambda&=&\frac{\upsilon+3k_0^2B-2k_0}{B}=\frac{\omega_0-k_0^2+Bk_0^3}{3k_0 B-1},\\
\label{eq3.22}
\mu&=&\frac{A s}{B}=\frac{3A sk_0 -s}{3Bk_0-1},
\end{eqnarray}
%
and the conditions for the existence of 
travelling-wave solutions can be restated for 
(\ref{eq3.18}) as: 
\begin{eqnarray}
\label{eq3.23}
A=B\;\;\mbox{and}\;\;\;\omega_0=k_0^2-Bk_0^3+\frac{(\upsilon+3k_0^2B-2k_0)(3k_0 B-1)}{B}.
\end{eqnarray}
The first condition $A=B$ in (\ref{eq3.23}) is a consequence of the requirement 
$\alpha\rho=\sigma\delta$, i.e., the condition (\ref{cruc2008}). The second condition in (\ref{eq3.23}) is obtained 
by solving (\ref{eq3.21}) in terms of $\omega_0$.

We now proceed with the travelling wave solutions, which can be clasified according to 
Eqs.~(\ref{eq3.14}) and (\ref{eq3.15})
as follows:\newline
{\em Bright Solitons.} Bright solitons 
correspond to the symmetric homoclinic orbits, on the saddle point $(F,F')=(0,0)$,
which
are described by:
\begin{eqnarray}
\label{eq3.24}
F(\xi)=\pm\sqrt{2|\lambda|}\,\mathrm{sech}\left(\sqrt{|\lambda|}\xi\right),
\end{eqnarray}
where the coefficients $\lambda$ and $\mu$ given in (\ref{eq3.21}) and (\ref{eq3.22}) satisfy
\begin{eqnarray}
\label{eq3.25}
\mu=1,\;\;\mbox{and}\;\;\lambda<0.
\end{eqnarray}
\newline
{\em Dark Solitons.} Dark solitons 
correspond to the symmetric heteroclinic orbits connecting the saddle points $(F,F')=(\pm \sqrt{|\lambda/\mu|},0)$ which are described by:
\begin{eqnarray}
\label{eq3.26}
F(\xi)=\pm\sqrt{|\lambda|}\,\mathrm{tanh}\left(\sqrt{\frac{|\lambda|}{2}}\xi\right),
\end{eqnarray}
where the coefficients $\lambda$ and $\mu$ satisfy
\begin{eqnarray}
\label{eq3.27}
\mu=-1,\;\;\mbox{and}\;\;\lambda<0.
\end{eqnarray}

Another interesting observation is that the ENLS (\ref{introeq1}) does not support standing-wave solutions
of the form $\phi(x,t)=F(x)\exp(-i\omega_0 t)$.
Such a solution 
is a special case of the solutions (\ref{eq3.1}) with $k_0=\upsilon=0$. However, it follows from (\ref{eq3.4A0}) that if $\upsilon=0$, then $\omega_0=0$. Hence, in this case $\phi$ cannot be time-periodic, but is a time-independent, steady-state solution.  
\setcounter{remm}{0}
\begin{remm}\label{issue4} 
(Conditions for the existence of exact soliton solutions). {\em Recovering (\ref{cruc2008}) as a necessary condition for the existence of exact soliton solutions, it is also interesting to find that the additional conditions  (\ref{eq3.6}) and (\ref{eq3.4A0}) elucidate further the role of the linear dispersion strengths $\rho$ and $\sigma$ 
on the support of such waveforms by the ENLS (\ref{introeq1}). For instance, when $k_0\neq 0$, the ENLS equation supports soliton solutions when both second and third-order dispersion are present, i.e., $\rho\neq 0$ and $\sigma\neq 0$, excluding  solutions of wavenumber $k_0\neq\frac{\rho}{3\sigma}$. Then, the dispersion effects should be counterbalanced 
by the nonlinear effects as described by the balance condition (\ref{cruc2008}). On the other hand, 
for traveling-wave solutions of wavenumber $k_0=0$, in addition to the balance condition (\ref{cruc2008}), the
constraint (\ref{eq3.4A0}) should be satisfied. 
Since (\ref{eq3.6}) assumes that $\sigma\neq 0$, the case $\rho=0$ is allowed from (\ref{eq3.4A0}), only when $\omega_0=0$ and $\upsilon\neq 0$, in order to get a traveling wave solution. In this case, 
the system supports traveling wave solutions of the form $F(x-\upsilon t)$. However, in this case we may recover again the cmKDV limit. Indeed, for $\sigma\neq 0$ and $\rho=0$, the balance condition (\ref{cruc2008}) implies that $\delta=0$.  
Therefore, if $k_0=0$ traveling waves may exist in both cases $\rho \ne 0$ and $\rho =0$, with the 
latter case corresponding to the so-called ``zero-dispersion point'' (ZDP) in the context of nonlinear fiber optics \cite{CPag95}. Finally, the case $\rho\neq 0$, $\sigma=0$, leads via the balance condition~(\ref{cruc2008}) 
to $\alpha=0$, i.e., to the NLS-limit.
}
\end{remm}	
\subsection{Numerical Study 1: Solitary wave solutions} In this section, we initiate our numerical studies by testing the conditions for the existence of bright and dark solitons. 
We focus on the case of a stationary carrier [with wave number $k_0=0$ in Eq.~(\ref{eq3.1})]; then, it follows from (\ref{eq3.4A0}) and (\ref{eq3.19}), that
travelling wave solutions (\ref{eq3.1}) exist if
\begin{eqnarray}
\label{eq3.23a}
A=B\;\;\mbox{and}\;\;\;\frac{\upsilon}{B}=-\omega_0.
\end{eqnarray}
For bright soliton solutions  
the parameters $\mu$,
$\upsilon,\,B,\,$ and $\omega_0$ are such that
\begin{eqnarray}
\label{eq3.25a}
\mu=1,\;\;\omega_0<0\;\;\mbox{and}\;\;\frac{\upsilon}{B}>0,
\end{eqnarray}
while for dark solitons
\begin{eqnarray}
\label{eq3.27a}
\mu=-1,\;\;\omega_0>0\;\;\mbox{and}\;\;\frac{\upsilon}{B}<0.
\end{eqnarray}
\begin{figure}
\label{fig2}
\begin{center}
    \begin{tabular}{cc}
   \includegraphics[scale=0.5]{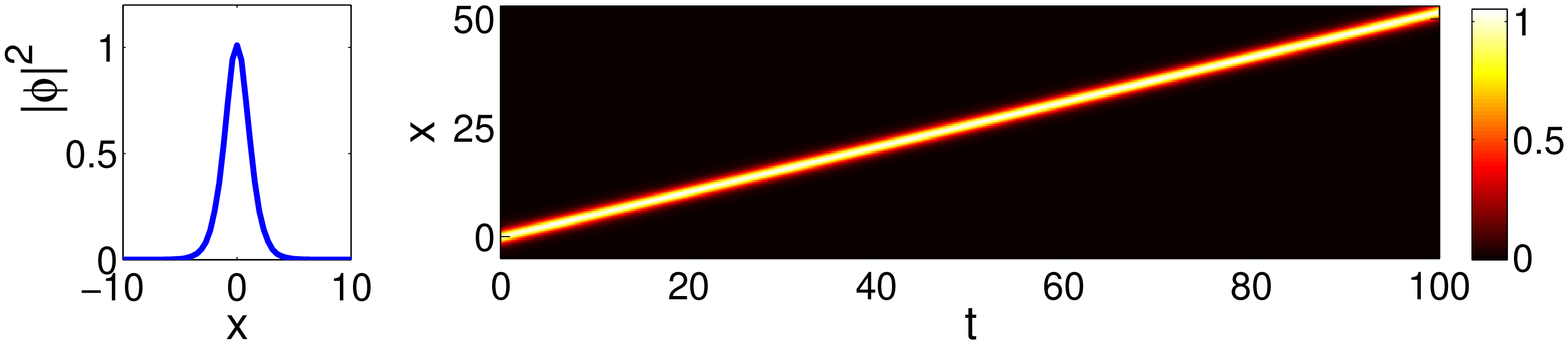}\\
    \includegraphics[scale=0.5]{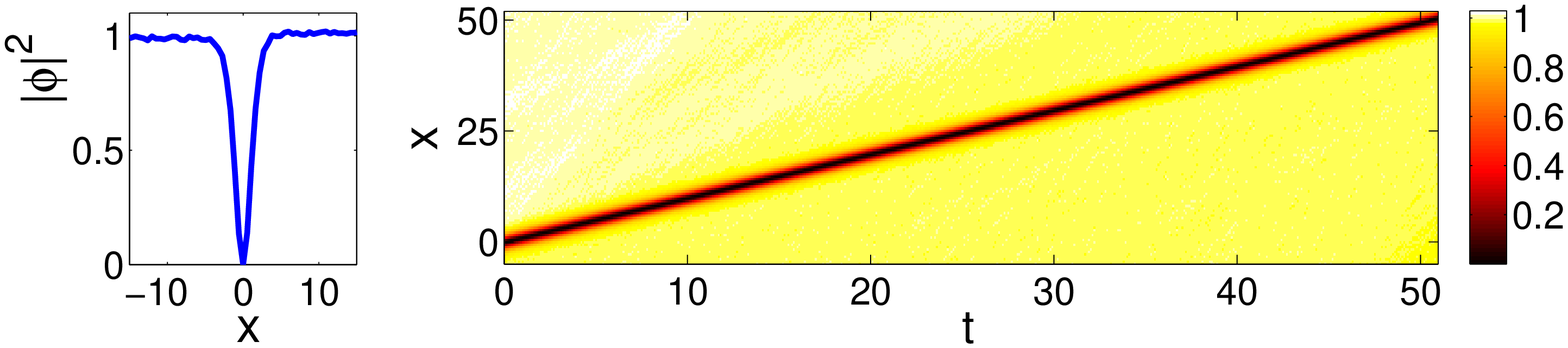} 
    \end{tabular}
\caption{\small{
Top panel: The initial density $|\phi|(x,0)|^2$ (left panel)
of a bright soliton [cf. 
Eq.~(\ref{inBS1})], and a contour plot showing its evolution 
under a random small-amplitude 
perturbation. 
Parameters: $A=B=1$, $\omega_0=-0.5$, $\upsilon=0.5$, $s=1$. 
Bottom panel: Same as the top panel, but for a dark soliton 
[cf. Eq.~(\ref{inDS1})] 
with parameters: $A=B=-1$, $\omega_0=1$, $\upsilon=1$, $s=-1$.}}
\end{center}
\end{figure}

The 
top left panel of Fig.~1 
presents the 
density profile $|\phi|^2$ of the initial condition 
\begin{eqnarray}
\label{inBS1}
\phi_0(x)=\mathrm{sech}(0.5x)+\epsilon\cdot\mathrm{noise},\;\;\epsilon=10^{-3},
\end{eqnarray}
which describes a bright soliton perturbed by a small-amplitude noise. The 
top right panel of Fig.~1 is a contour plot showing the evolution of the density of the above state. 
The unperturbed initial bright soliton $u_{bs}(x)=\mathrm{sech}(0.5x)$ 
has the form of (\ref{eq3.24}), for the following choice of parameters suggested from (\ref{eq3.23a})-(\ref{eq3.27a}):  $A=B=1$, and $\mu=1$. For this choice of $\mu$, the definition 
in (\ref{eq3.22}) implies that $s=1$. We have also chosen $\omega_0=-0.5$, and since  $k_0=0$, it turns out from (\ref{eq3.21}) that $\lambda=0.5$. 

The result shown in the top panel of Fig.~1 
confirms the existence of the solitary pulse which should be traveling with speed $\upsilon=0.5$ (as it is expected from (\ref{eq3.23a})); this is confirmed by the simulation. 
Additionally, we find 
that the solitary pulse is robust under this small-amplitude random perturbation; in fact, we have found that the 
solitary wave persists even for noise amplitudes up to $\epsilon=0.1$.
\begin{figure}
\label{fig3}
\begin{center}
    \begin{tabular}{cc}
   \includegraphics[scale=0.5]{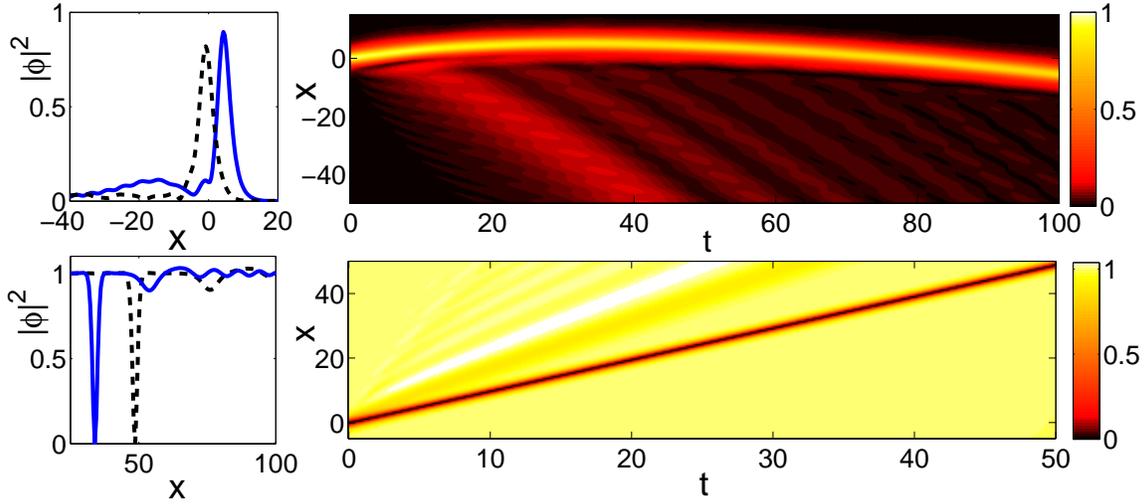}\\
    \end{tabular}
\caption{\small{
Top panel: Contour plot (right panel) showing the evolution of a
bright solitary wave corresponding to the initial condition given in Eq.~(\ref{inBS1}) with $\epsilon=0$, 
for $A/B=0.76$.
The bright pulse is accompanied by a small-amplitude radiation as depicted by  both the solid (blue)- and the dashed (black) line
 in the left panel, showing the soliton profile at $t=20$ and $t=60$ respectively.
Bottom panel: same as 
top panel, but for a dark solitary wave, with the initial condition 
given by Eq.~(\ref{inDS1}) with $\epsilon=0$,
for $A=-1$, $B=-0.8$, $s=-1$. The 
left panel shows the 
density profile at time $t=35$ [dashed (black) curve] and at 
$t=50$ 
[solid (blue) curve]. 
The radiative part emerging on top of the background moves faster than the solitary wave so that, eventually, 
the solitary wave is separated from the radiation. 
}
}
\end{center}
\end{figure}
We have also studied the case of a dark soliton; 
corresponding numerical results are presented in the bottom panels of Fig.~1. 
As before, the initial condition is taken to be a 
dark soliton perturbed by a small-amplitude 
uniformly-distributed noise, namely:
\begin{eqnarray}
\label{inDS1}
\phi_0(x)=\mathrm{tanh}(0.5x)+\epsilon\cdot\mathrm{noise},\;\;\epsilon=10^{-3}.
\end{eqnarray}
The density of the initial condition (\ref{inDS1}) is shown in the bottom left  panel of Fig.~1. 
The unperturbed initial dark soliton $u_{ds}(x)=\mathrm{tanh}(0.5x)$ is taken from 
Eq.~(\ref{eq3.26}), for the following choice of parameters suggested from (\ref{eq3.23a})-(\ref{eq3.27a}): 
$A=B=-1$ and $\mu=-1$. For this choice of $\mu$, (\ref{eq3.22}) implies 
that $s=-1$. We  have also chosen $\omega_0=1$, and since $k_0=0$, it turns out from (\ref{eq3.21}) that $\lambda=1$. 
The time evolution of the density corresponding to the initial data (\ref{inDS1}) is shown in the contour-plot of the 
bottom right panel of Fig.~1. 
As in the previous (bright soliton) case, the existence of the dark solitary pulse 
travelling with speed $\upsilon=1$ is confirmed [as 
predicted 
by (\ref{eq3.23a})]; furthermore, the solitary wave is found to be robust 
under the random perturbation.

Next, we consider the case where the condition for the existence of traveling pulses is violated, 
i.e., $A\neq B$. Examples corresponding to such a case are shown 
in Fig.~2, for bright (top panels) and dark (bottom panels) solitons.
In particular, the top right panel of Fig.~2 presents the evolution of the density of 
a bright soliton with the initial form $u_{bs}(x)=\mathrm{sech}(0.5x)$ [corresponding to 
(\ref{inBS1}) with $\epsilon=0$], with 
parameter values $A=1$ and $B=1.3$ (i.e. $A/B=0.76$).
It is observed that the bright soliton is a traveling one, and emits radiation during its evolution;
the radiation is stronger at early times and becomes smaller for later times, tending to disperse away from the pulse; this is depicted in the top left panel,
where the soliton profile is shown for  $t=20$ (solid line) and $t=60$ (dashed line).
Such a behavior has also been observed in the numerical results of \cite{GromTal2000}, and analyzed in \cite{YangPeli2, Yang1}. 
As seen in the contour plot, the soliton initially starts moving towards the positive $x$-direction, but eventually travels with an almost
constant velocity towards the negative $x$-direction.


The emission of radiation by the soliton, 
is also observed in the case of a dark solitary wave, as shown in the bottom panels of Fig.~2; 
nevertheless, notable differences can be identified in comparison with the bright soliton dynamics, as is explained below.
. 
The left panel of Fig.~2 shows snapshots of the density at two different time instants ($t=35$ and $t=50$), while 
the initial form of the solitary wave is 
$u_{ds}(x)=\mathrm{tanh}(0.5x)$ 
(with $A=-1$, $B=-0.8$, $s=-1$); on the other hand, the right panel shows the evolution of the density. 
A significant difference with the bright solitary wave is the absence of the turning 
effect observed in the propagation of the bright soliton: the dark pulse is moving continuously  
to the positive $x$-direction. As a relevant aside, we also note that ahead
of the dark soliton we observe the traces of an apparent dispersive shock
wave~\cite{DSW}, while in the case of the bright soliton, the relevant
radiative wavepackets can only be observed behind the solitary
wave. It is also observed that, contrary to the bright solitary wave case, 
the initial condition tends to a single dark solitary wave moving with an almost constant velocity and a separate 
radiative part. The latter, travels on top of the continuous-wave background and moves with a velocity larger
than that of the dark pulse (the radiative small-amplitude waves in this case travel with the speed of sound), 
so they eventually separate from each other. Another notable feature is that radiation is emitted in the 
rear (front) of the bright (dark) solitary wave, due to the different sign of the third-order dispersion parameter $B$: 
it is positive (negative) for the bright (dark) pulse and, thus, the emission of radiation is directional. 

%

\section{Modulation instability}

\label{SECTION_V}
In this section, we will investigate conditions for the modulation instability (MI) 
of plane-wave solutions of Eq.~(\ref{eq3.18}) of the form 
\begin{eqnarray}
\label{eq3.61}
\phi^{(0)}(x,t)=\phi_0\mathrm{e}^{\mathrm{i}(kx-\omega t)}. 
\end{eqnarray}
Here, the plane wave is assumed to have 
a constant amplitude $\phi_0\in\mathbb{R}$, a wavenumber $k$ and frequency $\omega$. 
We will show that for $k\neq 0$ 
there exist 
essential differences on the 
MI conditions between the ENLS equation (\ref{eq3.18}) and its NLS limit $A=B=0$.

We start by substituting 
the solution (\ref{eq3.61}) into 
(\ref{eq3.18}). We thus obtain 
the following dispersion relation $\omega(k;\phi_0^2)$: 
\begin{eqnarray}
\label{eq3.62}
\omega=-s\phi_0^2+3Ask\phi_0^2+k^2-Bk^3.
\end{eqnarray}  
We now consider a perturbation of the solution (\ref{eq3.61}) of the form:
\begin{eqnarray}
\label{eq3.63}
\phi(x,t)=\phi^{(0)}(x,t)+\phi^{(1)}(x,t),\;\;\phi^{(1)}(x,t)\in\mathbb{C},\;\;\mbox{for all $x\in\mathbb{R},\;t\in\mathbb{R}$},
\end{eqnarray}
where $\phi^{(1)}(x,t)$ is assumed to be small in the sense
\begin{eqnarray}
\label{eq3.64}
|\phi^{(1)}(x,t)|< < |\phi^{(0)}|,\;\;\;\;\mbox{for all $x\in\mathbb{R},\;t\in\mathbb{R}$}, 
\end{eqnarray}
%
while $\phi^{(1)}(x,t)$ is taken to be of the form:
%
\begin{eqnarray}
\label{eq3.65}
\phi^{(1)}(x,t)=\mathrm{e}^{\mathrm{i}(kx-\omega t)}\phi_1(x,t).
\end{eqnarray}
%
Then, $\phi_1$ can be decomposed 
in its real and imaginary parts as:
\begin{eqnarray}
\label{eq3.66}
\phi_1(x,t)=U(x,t)+\mathrm{i}W(x,t), 
\end{eqnarray}
and $U$ and $W$ are considered to be harmonic, i.e.,  
  \begin{equation}
\label{eq3.67}
\left(
  \begin{array}{cc}
    U \\
    W\\
  \end{array}
\right)
=
\left(
  \begin{array}{c}
    U_0 \\
    W_0 \\
  \end{array}
\right)\mathrm{e}^{\mathrm{i}(Kx-\Omega t)}+c.\;c.\;,\;\;U_0,\;W_0\in\mathbb{R}, 
\end{equation}
where $K$ and $\Omega$ are the wave number and frequency of the perturbation, respectively.

\begin{thm}
\label{MITHk}
We consider the equation (\ref{eq3.18}) with $A,B\in\mathbb{R}$. The plane-wave solutions (\ref{eq3.61})
are modulationally unstable (i.e., the perturbations $\phi_1(x,t)$ defined by (\ref{eq3.65}), 
(\ref{eq3.66}), and (\ref{eq3.67}) 
are growing at exponential rate), if and only if
\begin{eqnarray}
\label{SW04k}
1\neq 3Bk,\;\;\mbox{and}\;\;K^2<2s\phi_0^2\left(\frac{1-3kA}{1-3kB}\right).
\end{eqnarray} 
\end{thm}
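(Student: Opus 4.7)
\medskip
\noindent
\textbf{Proof proposal.}
The plan is to substitute $\phi=\phi^{(0)}+\phi^{(1)}$ into \eqref{eq3.18}, linearize in $\phi^{(1)}$, and derive a $2\times 2$ algebraic system for $(U_0,W_0)$, whose solvability condition will give an explicit quadratic dispersion relation for $\Omega(K)$; MI is then equivalent to the existence of a non-real root.

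First, I would factor out the carrier by writing $\phi=e^{\mathrm{i}(kx-\omega t)}\psi$ with $\psi=\phi_0+\phi_1$. Direct differentiation gives
\begin{equation*}
\phi_x=e^{\mathrm{i}(kx-\omega t)}(\psi_x+\mathrm{i}k\psi),\quad \phi_{xx}=e^{\mathrm{i}(kx-\omega t)}(\psi_{xx}+2\mathrm{i}k\psi_x-k^2\psi),
\end{equation*}
and similarly for $\phi_{xxx}$ and $\phi_t$, and $|\phi|^2=|\psi|^2=\phi_0^2+\phi_0(\phi_1+\overline{\phi_1})+O(|\phi_1|^2)$. Substituting these into \eqref{eq3.18}, the zeroth order in $\phi_1$ cancels precisely by the dispersion relation \eqref{eq3.62}. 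Retaining only first order terms in $\phi_1$ and using \eqref{eq3.62} to eliminate $\omega$, I expect an equation of the form
\begin{equation*}
\phi_{1t}+\alpha_1\phi_{1x}+\mathrm{i}\alpha_2\phi_{1xx}+B\phi_{1xxx}+2\mathrm{i}s\phi_0^{2}(3Ak-1)\,U=0,
\end{equation*}
where $\alpha_1=2k-3Bk^2+3As\phi_0^2$ and $\alpha_2=3Bk-1$. The key observation is that, after using \eqref{eq3.62}, the nonlinear coupling collapses into the single real driving term proportional to $U=\mathrm{Re}\,\phi_1$, with coefficient $(3Ak-1)$, which is the mechanism producing the $A$--$B$ asymmetry in the final threshold.

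Next, I would decompose $\phi_1=U+\mathrm{i}W$ and take real and imaginary parts to obtain
\begin{align*}
U_t+\alpha_1 U_x-\alpha_2 W_{xx}+B U_{xxx}&=0,\\
W_t+\alpha_1 W_x+\alpha_2 U_{xx}+B W_{xxx}+2s\phi_0^{2}(3Ak-1)U&=0.
\end{align*}
Inserting the harmonic ansatz \eqref{eq3.67} produces a homogeneous linear system for $(U_0,W_0)$; setting $\Lambda:=\Omega-\alpha_1 K+BK^3$, the system reads
\begin{equation*}
-\mathrm{i}\Lambda U_0+\alpha_2 K^2 W_0=0,\qquad \bigl[-\alpha_2 K^2+2s\phi_0^{2}(3Ak-1)\bigr]U_0-\mathrm{i}\Lambda W_0=0.
\end{equation*}
A non-trivial solution requires the determinant to vanish, which yields
\begin{equation*}
\Lambda^{2}=(3Bk-1)K^{2}\bigl[(3Bk-1)K^{2}+2s\phi_0^{2}(1-3Ak)\bigr].
\end{equation*}

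Finally, MI is equivalent to $\mathrm{Im}\,\Omega\neq 0$, i.e.\ $\Lambda^{2}<0$. If $3Bk=1$ the right-hand side vanishes, giving only real $\Lambda$, hence no MI; this yields the first condition $1\neq 3Bk$. Assuming $3Bk\neq 1$, factoring $(3Bk-1)^{2}K^{2}>0$ from the two brackets shows that $\Lambda^{2}<0$ is equivalent to
\begin{equation*}
(3Bk-1)K^{2}+2s\phi_0^{2}(1-3Ak)\quad\text{and}\quad (3Bk-1)
\end{equation*}
having opposite signs, which after division by $(3Bk-1)$ is precisely $K^{2}<2s\phi_0^{2}(1-3kA)/(1-3kB)$, giving \eqref{SW04k}. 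The main (mild) obstacle is the bookkeeping in the linearization, particularly making sure the term $|\psi|^{2}\psi$ contributes both $\phi_1$ and $\overline{\phi_1}$ with the right coefficients so that the dispersion relation \eqref{eq3.62} cleanly eliminates the diagonal linear terms, leaving the real off-diagonal coupling $2s\phi_0^{2}(3Ak-1)U$ that drives the instability; the remaining sign analysis is elementary.
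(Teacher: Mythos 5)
Your proposal is correct and follows essentially the same route as the paper: linearize about the plane wave using the dispersion relation \eqref{eq3.62} to cancel the carrier terms, split $\phi_1=U+\mathrm{i}W$, insert the harmonic ansatz, set the $2\times 2$ determinant to zero, and read off the sign condition. Your compact form with $\Lambda=\Omega-\alpha_1K+BK^3$ and $\alpha_2=3Bk-1$ reproduces exactly the paper's equations \eqref{eq3.68}--\eqref{eq3.73}, and the concluding sign analysis agrees with \eqref{SW04k}.
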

\begin{proof} 
Substituting the perturbed solution (\ref{eq3.63}) in equation (\ref{eq3.18}), linearising 
as per the smallness condition (\ref{eq3.64}), and 
using the dispersion relation (\ref{eq3.62}), 
we obtain the following equation for $\phi_1(x,t)$:
%
\begin{eqnarray}
\label{eq3.68}
\phi_{1t}&+&3As\phi_0^2[\phi_{1x}+\mathrm{ik}(\phi_1+\bar{\phi}_1)]-\mathrm{i}(1-3kB)\phi_{1xx}
\nonumber\\
&+&(2k-3Bk^2)\phi_{1x}+B\phi_{1xxx}-\mathrm{i}s\phi_0^2(\phi_1+\bar{\phi}_1)=0.
\end{eqnarray}
Inserting (\ref{eq3.66}) into (\ref{eq3.68}), we derive the following equations for $U$ and $W$, 
\begin{eqnarray}
\label{eq3.69}
&&U_t+3As\phi_0^2U_x+(1-3kB)W_{xx}+(2k-3Bk^2)U_x+BU_{xxx}=0,\\
\label{eq3.70}
&&W_t+3As\phi_0^2(W_x+2kU)-(1-3kB)U_{xx}+(2k-3Bk^2)W_x+BW_{xxx}-2s\phi_0^2U=0.
\end{eqnarray}
Next, 
substitution of the expression (\ref{eq3.67}) for $U$ and $W$ into (\ref{eq3.69})-(\ref{eq3.70}) 
yields the following algebraic system for $U_0$ and $W_0$,
\begin{eqnarray}
\label{eq3.71}
&&\left\{-\mathrm{i}\Omega+
\left[3As\phi_0^2+k(2-3Bk)\right]\mathrm{i}K-\mathrm{i}BK^3\right\}U_0-(1-3KB)K^2W_0=0,\\
\label{eq3.72}
&&\left[2s\phi_0^2(3Ak-1)+(1-3kB)K^2\right]U_0
+ \left\{-\mathrm{i}\Omega+
\left[3As\phi_0^2+k(2-3Bk)\right]\mathrm{i}K-\mathrm{i}BK^3\right\}W_0=0.
\end{eqnarray}
Seeking 
for nontrivial solutions $U_0$ and $W_0$ of the system (\ref{eq3.71})-(\ref{eq3.72}), we require the 
relevant determinant to be zero; this way, we 
obtain the following dispersion relation:
%
\begin{eqnarray}
\label{eq3.73}
&&\left\{\Omega-
\left[3As\phi_0^2+k(2-3Bk)\right]K+BK^3\right\}^2\nonumber\\
&&=K^2\left[(1-3kB)^2K^2-2s\phi_0^2(1-3Ak)(1-3kB)\right].
\end{eqnarray}
From (\ref{eq3.73}) the MI condition for plane waves (\ref{eq3.61}) readily follows. For instance, assuming that $1\neq 3kB$, it is required that 
\begin{eqnarray}
\label{eq3.74}
(1-3kB)^2K^2<2s\phi_0^2(1-3Ak)(1-3kB),
\end{eqnarray}
or equivalently, (\ref{SW04k}).
\end{proof}

It is relevant to consider certain limiting cases of (\ref{eq3.73}). First, if $k=0$ (i.e., the plane wave is stationary), 
Eq.~(\ref{eq3.73}) is reduced to the form:
\begin{eqnarray}
\label{eq3.55}
(\Omega-3AKs\phi_0^2+BK^3)^2=K^2(K^2-2s\phi_0^2).
\end{eqnarray}
The above equation has two solutions
\begin{eqnarray}
\label{eq3.60}
\Omega=3AKs\phi_0^2-BK^3\pm K\sqrt{K^2-2s\phi_0^2}.
\end{eqnarray}
The 
frequency $\Omega$
becomes complex (i.e., $\Omega=\Omega_{\rm r}+i\Omega_{\rm i}$) if and only if 
\begin{eqnarray}
s=1, \qquad K^2-2\phi_0^2<0.
\label{nlsmi}
\end{eqnarray}
The above are the same as the conditions for MI of plane wave solutions of the 
NLS equation ($A=B=0$); note that in the case $s=-1$, (\ref{eq3.60}) implies that 
the plane wave is always modulationally stable, as in the (defocusing) NLS case. Thus, for $k=0$, the only difference 
from the NLS limit is that in the ENLS case 
there exists a nonvanishing real (oscillatory) part of the perturbation frequency, i.e., 
$\Omega_{\mathrm{r}}=3AKs\phi_0^2-BK^3$, 
as suggested by (\ref{eq3.60}). Additionally, it is observed that in the NLS limit  
$A=B=0$, 
the well-known result (see, e.g., \cite{CPag95,KodHas87}) 
that the MI condition for the focusing NLS model ($s=1$) does not depend on $k$ is recovered. 

On the other hand, in the framework of the ENLS with 
$A,B\neq 0$, and $k\neq 0$, condition (\ref{SW04k}) reveals 
some important differences 
between 
the ENLS equation with its NLS limit. First, the instability criterion 
 is modified for the focusing case ($s=+1$), 
as seen by a direct comparison of (\ref{SW04k}) and (\ref{nlsmi}). Secondly, 
and 
even more importantly, condition (\ref{SW04k}) shows that plane waves (\ref{eq3.61}) 
can be modulationally unstable {\it also} in the defocusing ($s=-1$) case, contrary to what happens 
in the defocusing NLS limit where such an instability is absent. 
Thus concluding, for the ENLS model, 
continuous waves with $k=0$
can be modulationally unstable 
only in the focusing case $s=1$, 
while plane waves 
of the form of equation (\ref{eq3.61})
with $k\neq 0$, can be modulationally unstable in both the focusing $s=1$ and the defocusing case $s=-1$ 
(for an appropriate choice of parameters).   

Let us now define the instability band, as 
the closed interval $B_I=[K_{\mathrm{min}},K_{\mathrm{max}}]$ (for fixed $\phi_0$) 
for the wavenumbers of the perturbations of (\ref{eq3.65}) with the following property: 
For every $K\in B_I$, the condition (\ref{SW04k}) is satisfied, and the solutions of the quadratic equation
(\ref{eq3.73}) in terms of $\Omega$ are complex conjugates. The imaginary parts of these solutions 
are responsible for the exponential growth of the perturbations (\ref{eq3.65}),
implying  the  MI of the plane wave (\ref{eq3.61}). In particular, 
$\mathrm{Im}(\Omega)=\Omega_{i}$ can be
plotted as a function of $K\in B_I$. The possible maximum of this curve, 
denoted by $\Omega_{\mathrm{crit}}$, attained on $K_{\mathrm{crit}}\in B_I$ is associated with 
the maximum growth of the unstable perturbations. 

We conclude the discussion on the MI effects of the solutions (\ref{eq3.61}) with $k\neq
0$, by 
explicitly showing, that a choice of a wave number $K=K_0\in B_I$ for the perturbations (\ref{eq3.65}),
implies the excitation of the integer multiples of $K_0$ for the unstable wave solutions, i.e., the wave numbers of the 
unstable plane wave solutions are $k\pm nK_0$, $n\in \mathbb{N}$.
%
To show this, 
we start by rewriting 
the perturbed plane wave (\ref{eq3.60}) accounting the complex conjugates (c.c.) of the perturbations (\ref{eq3.65}), i.e., 
\begin{eqnarray}
\label{eq3.75}
\phi(x,t)=\left[\phi_0+U_0\mathrm{e}^{\mathrm{i}(Kx-\Omega t)}+c.c. +\mathrm{i}(W_0\mathrm{e}^{\mathrm{i}(Kx-\Omega t)}+c.c)\right]\mathrm{e}^{\mathrm{i}\theta},\;\;\theta=kx-\omega t.
\end{eqnarray}
We consider the unstable wave number $K_0\in  B_I$ and the associated imaginary part $\Omega_i=\Omega_i(K_0)$. Then,
(\ref{eq3.75}) may be written as $\phi(x,t)=R\mathrm{e}^{\mathrm{i}\varphi}\mathrm{e}^{\mathrm{i}\theta}$, 
where $R$ and $\varphi$ are 
respectively given by:
\begin{eqnarray}
\label{eq3.77}
R&=& \left[\phi_0^2+4\phi_0U_0\mathrm{e}^{\Omega_it}+4(U_0^2+W_0^2)\mathrm{e}^{2\Omega_it}\cos^2(K_0x)\right]^{\frac{1}{2}},
\end{eqnarray}
\begin{eqnarray}
\label{eq3.79}
\varphi=\arctan\left[\frac{2W_0\mathrm{e}^{\Omega_it}\cos(K_0x)}{\phi_0+2U_0\mathrm{e}^{\Omega_it}\cos(K_0x)}\right].
\end{eqnarray} 
Note that since $U_0, W_0\ll 1$, equations (\ref{eq3.77}) and (\ref{eq3.79}) 
can be approximated as:
\begin{eqnarray}
\label{eq3.78}
R\approx \phi_0\left[1+\frac{2U_0}{\phi_0}\mathrm{e}^{\Omega_it}\cos(K_0x)\right], 
\end{eqnarray}
%
%
and 
\begin{eqnarray}
\label{eq3.80}
\varphi\approx\frac{2W_0}{\phi_0}\mathrm{e}^{\Omega_it}\cos(K_0x).
\end{eqnarray} 
Then, 
using 
(\ref{eq3.78}) and (\ref{eq3.80}), the perturbed plane wave 
(\ref{eq3.75}) can be 
written as 
\begin{eqnarray}
\label{eq3.81}
\phi(x,t)\approx \phi_0\left[1+\frac{2U_0}{\phi_0}\mathrm{e}^{\Omega_it}\cos(K_0x)\right]
\mathrm{e}^{\mathrm{i}\mathcal{R}\sin\chi}
\mathrm{e}^{\mathrm{i}\theta},
\end{eqnarray}
where 
$\mathcal{R}=\frac{2W_0}{\phi_0}\mathrm{e}^{\Omega_it}$ and $\chi=\frac{\pi}{2}-K_0x$. However, using the formula (see \cite{AS,GRary}), 
\begin{eqnarray*}
\mathrm{e}^{\mathrm{i}\mathcal{R}\sin\chi}=\sum_{n=-\infty}^{+\infty}J_n(\mathcal{R})\mathrm{e}^{\mathrm{i}n\chi},
\end{eqnarray*}
where $J_n$ denotes the Bessel function of the first kind of integer order $n$, we find that (\ref{eq3.81}) eventually is approximated as
\begin{eqnarray}
\label{eq3.82}
\phi(x,t)\approx \phi_0\left[1+\frac{2U_0}{\phi_0}\mathrm{e}^{\Omega_it}\cos(K_0x)\right]\sum_{n=-\infty}^{+\infty}J_n(\mathcal{R})\mathrm{e}^{\mathrm{i}(\theta +n\chi)},
\end{eqnarray}
where 
\begin{eqnarray}
\label{eq3.83}
\mathrm{e}^{\mathrm{i}(\theta +n\chi)}=\exp\left\{\mathrm{i}\left[kx-\omega t+n\left(\frac{\pi}{2}-K_0x\right)\right]\right\}.
\end{eqnarray}
It is evident from the expression of the solution (\ref{eq3.82})-(\ref{eq3.83}) that the wave numbers of the unstable plane wave solution are $k\pm nK_0$, $n\in\mathbb{Z}$, when  the choice $K_0\in B_I$ is made.  
Notice that the above calculation is valid even in the case
where $\Omega_i=0$, however in the latter case the 
wavenumbers $k\pm nK_0$ are not subject to growth.
\subsection{Numerical study 2: Modulation instability of plane waves} 
\begin{figure}
\label{fig4}
\begin{center}
    \begin{tabular}{c}
   \includegraphics[scale=0.5]{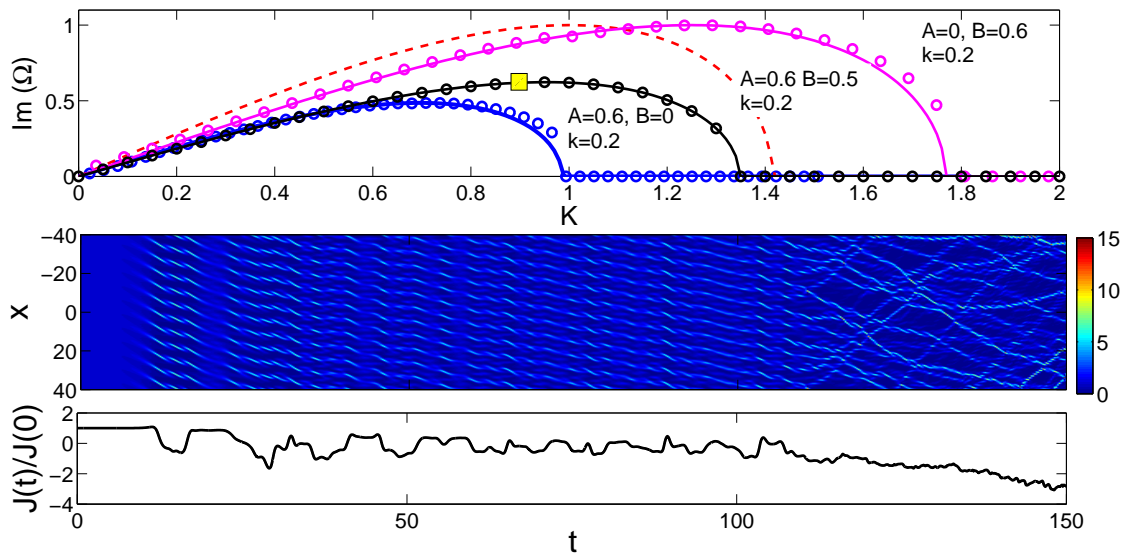} 
    \end{tabular}
\caption{\small{Top panel: 
The growth rate $\mathrm{Im}(\Omega)=\Omega_i$ as a function of the wavenumber $K$ 
of the perturbation, and the associated instability bands $B_I=[0,K_{\mathrm{max}}]$, 
for the ENLS equation (\ref{eq3.18}) with $s=+1$. 
The solid lines and circles correspond, respectively, to the analytical 
and numerical results, while the dotted (red) line corresponds to the 
focusing NLS limit with $A=B=0$. Other parameter values are:  
$k=0.2$, $\phi_0=1$.  
Middle panel:  Contour plot showing the evolution of the density $|\phi|^2$ 
as found numerically for 
$A=0.6, B=0.5$, and 
initial data as per (\ref{eq3.84}) with $k=0.2$, $\epsilon=2U_0=10^{-3}$ and $K=0.9$ (i.e., pertaining to the square of the top panel). 
Bottom panel: Evolution of the normalized functional 
$\mathbf{J}(t)/\mathbf{J}(0)$. 
}}
\end{center}
\end{figure}

In this section, we numerically integrate equation (\ref{eq3.18}), using as an initial condition the
perturbed plane-wave: 
%
\begin{eqnarray}
\label{eq3.84}
\phi(x,0)\approx\phi_0\left[1+\frac{2U_0}{\phi_0}\cos(K_0x)\right]\mathrm{e}^{\mathrm{i}kx},
\end{eqnarray}
%
with  $k\neq 0$ and parameters $\phi_0=1$ and $2U_0=\epsilon=10^{-3}$. 
We then calculate the growth rate of the amplitude of the initial plane wave, and compare it with the
above analytical results. 
Figure~3 
presents the comparison of numerical computations with the 
ENLS equation (\ref{eq3.18}) and $s=1$, for various values of the parameters $A,B\in\mathbb{R}$ and for fixed $k=0.2$. The upper panel of Fig.~3  
shows the analytically computed curves $\mathrm{Im}(\Omega)(K)=\Omega_i(K)$, and the relative instability bands $B_I=[0,K_{\mathrm{max}}]$ against their numerically computed counterparts. The analytical curves $\Omega_i(K)$ 
are depicted by solid lines, while the respective numerical results 
are shown 
by circles. The dotted (red) line corresponds to the analytical
result for the focusing NLS-limit ($A=B=0$). The 
solid lines correspond to the ENLS equation for 
$A=0, B=0.6$ [upper (magenta) line], $A=0.6, B=0.5$ [middle (black)
line] and $A=0.6, B=0$ [lower (blue) line]. In all cases  we observe the excellent agreement of the analytical 
prediction for the growth rates of the plane wave with the numerical results. 
Another interesting observation is the progressive increase of the length of the MI-bands,
from the limit 
of $A\neq 0$ and $ B=0$, 
i.e., from the NLS equation with the self-steepening effect,
to the limit of $A=0, B\neq0$,  
i.e., to the NLS equation with the third-order dispersion effect  
(see, e.g.~\cite{CPag95} for a discussion of these effects on the NLS model).
It is clear that if $1-3k B>0$ in (\ref{eq3.18}), increase of 
$A$ will have a stabilizing effect, while for $1-3k B<0$, it will
have a destabilizing effect. The role of
$B$ is, roughly, reversed with respect to that of $A$, although the associated
functional
dependence is structurally somewhat more complex.

The contour plot in the middle panel of Fig.~3 shows the evolution of the density $|\phi|^2$ 
for a case where both $A$ and $B$ are nonzero, namely for  
$A=0.6, B=0.5$; here,  
the wavenumbers of the plane wave and of the perturbation are chosen as $k=0.2$ and 
$K=0.9$, respectively. The 
growth rate for this case, $\Omega_i(0.9)$, is marked with the (yellow) 
square in the 
top panel of Fig.~3, in the relevant 
curve. The contour plot illustrates the manifestation of the MI and the concomitant dynamics.
In particular, at the initial stage of the evolution, one can observe the 
formation of an almost periodic pattern characterized by the excitation of the unstable wavenumber $K=K_0=0.9$, and at a later stage (for $t \gtrsim 120$) the   
formation of traveling localized structures induced by the manifestation of MI. 

It is also interesting to make the following observation. 
In the bottom panel of Fig.~3, we show the evolution of the normalized 
functional $\mathbf{J}(t)/\mathbf{J}(0)$, for the parameter values used in the middle panel of the same figure 
(recall that $A\ne B$ in this case), i.e., in the case where MI manifests itself. It is observed that, 
prior to the onset of MI, the above functional 
assumes a constant value which, however, changes upon the onset of MI.
%
%
%
In particular, as seen in this panel, the initial value 
$\mathbf{J}(0)$ is preserved
until the appearance of the first pattern of localized structures: at that instance, 
$\mathbf{J}$ first decreases and, for later times, fluctuates.
This effect can be explained by the equation 
\begin{eqnarray}
\label{MIc1}
\frac{d}{dt}\mathbf{J}(\phi(t)+\frac{3}{2}\mathrm{Re}\left[\int_{\Omega}(|\phi|^2)_x|\phi_x|^2dx\right](\delta\sigma-\alpha\rho)=0.
\end{eqnarray}
derived in the proof of Lemma 
\ref{conslaws} and (\ref{coen4FIN}). 
If 
plane waves of the form of Eq.~(\ref{eq3.61})
are modulationally stable, 
then $|\phi|^2 \sim \phi_0^2$ and $|\phi|_x^2\sim 0$, for all times. Therefore, (\ref{MIc1}) implies that 
when $A\neq B$, this quantity should be (roughly)
conserved and, thus, $\frac{d}{dt}\mathbf{J}(\phi(t)=0$ in the 
modulationally stable regime. 
On the other hand, in the modulationally unstable regime, there exists a time 
$T_{\mathrm{MI}}$, such that the perturbation $\phi_1(x,t)$ should become significant for all 
$t\geq T_{\mathrm{MI}}$. Therefore, in the case $A\neq B$ and 
in the MI regime, 
$\mathbf{J}$ cannot be (nearly) 
conserved for all times, and $\frac{d}{dt}\mathbf{J}(\phi(t)\neq 0$, for all $t\geq T_{\mathrm{MI}}$. 
Thus, in this set of cases the NLS-based (energy) functional $\mathbf{J}(t)$ may be used as a 
diagnostic of MI for the nonintegrable case of $A\ne B$.
However, this effect is not apparent in the integrable case of $A=B$, where $\mathbf{J}$ 
is always conserved, by virtue of Eq.~(\ref{MIc1}) and for 
this reason, the non-conservation of the Hamiltonian 
energy $\mathbf{J}(t)$ cannot be used as a generic diagnostic tool for the 
detection of MI. In the latter case of $A=B$ (as well as for
$A \neq B$), other tools can be used to detect the instability,
such as, e.g., the functional
$\tilde{{\bf J}}(t)=\int \phi(x,t) \phi^{\star}(x+a,t) dx$ (for an arbitrary fixed $a$), 
by analogy to the quantity $\sum_j \phi_j(t) \phi_{j+1}^{\star}(t)$ used for discrete
NLS models. Here, for the latter case of $A=B$, we will resort to Fourier
space diagnostics that will also clearly allow us to detect the relevant
instability.
%

Next, in Fig.~4 we present 
numerical results for the 
ENLS equation (\ref{eq3.18}) with $s=-1$ (i.e., in 
defocusing settings). The solid lines (circles) correspond to the analytical (numerical) results 
for three different cases: 
(i) $A=0, B=0.6$, $k=0.72$, characterized by the ``large'' instability band $B_I$ (magenta); 
(ii) $A=1, B=0.5$, $k=0.5$, with the ``medium'' 
instability band 
(black); 
(iii) $A=0.6, B=0$, $k=1$, with the ``small'' instability 
band (black). Once again, we observe the excellent agreement 
between analytical and numerical results for the growth rates.
Notice that, in this figure, 
instability band for the defocusing NLS-limit does not exist since 
plane waves are modulationally stable in this limit.
The broadening of the instability band from the limits corresponding to the NLS with the self-steepening term 
($A=0$, $B\ne 0$) to the NLS with the third-order dispersion term ($A\ne0$, $B=0$) observed in Fig.~4, can also 
be observed in this case.
%

In the 
bottom panel of Fig.~4, we show the space-time evolution of the density $|\phi|^2$, and the 
manifestation of the MI, for the more general case of $A,B \ne 0$, i.e., for parameter values $A=1, B=0.5$, as well as 
for $k=0.5$ and for an unstable wavenumber $K=K_0=1$.
The perturbation growth $\Omega_i(1)$ is marked with the (green) 
square in the 
top panel of the same figure 
in the respective 
$\Omega_i(K)$ curve. The initial stage of the dynamics is qualitatively similar to that in the middle panel of 
Fig.~3, but the later stage (for times $t \gtrsim 120$) is not characterized by the emergence of travelling localized 
structures: in the specific parameter regime, we have checked (results not shown here) 
that such structures are not supported by the system, contrary to the case of Fig.~3 where such states do exist.

For completeness, in Fig.~5 we show the Fourier transform $|\Phi(k)|$, of a modulationally unstable solution,
corresponding to the ENLS equation in the focusing nonlinearity case of $s=1$, with parameters $A=B=0.6$, $k=0.2$ and $K_0=0.9$; shown is the Fourier spectrum at a time instant after the instability has set in. The figure clearly
demonstrates
the fact that once MI manifests itself, 
wave numbers $k\pm n K_0$, $n\in \mathbb{N}$ are generated, 
as was illustrated above.

\begin{figure}
\label{fig5}
\begin{center}
    \begin{tabular}{c}
   \includegraphics[scale=0.5]{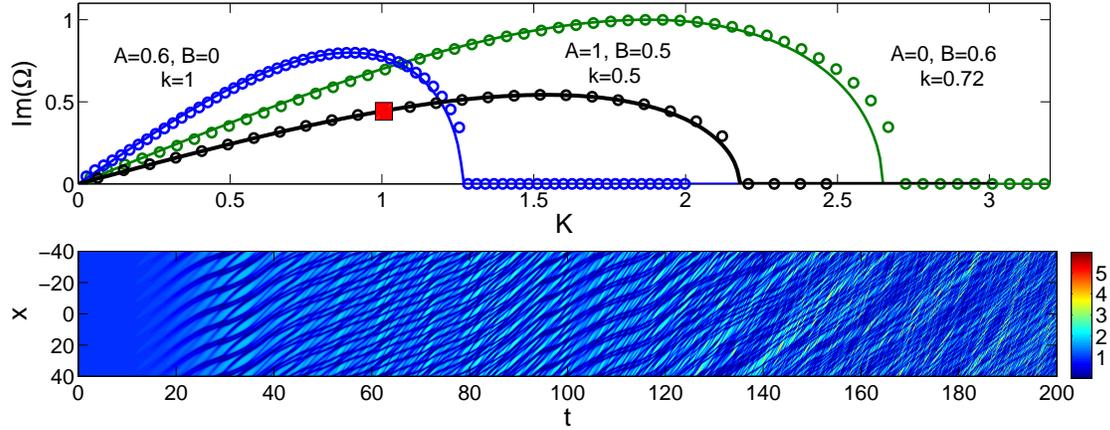} 
    \end{tabular}
\caption{\small{Top and bottom panels are same as corresponding ones in Fig.~3, 
but for the case of defocusing nonlinearity $s=-1$, and 
for $A=1, B=0.5$, 
$k=0.5$, $\epsilon=2U_0=10^{-3}$ and $K=1$. 
}}
\end{center}
\end{figure}
\begin{figure}
\label{fig6}
\begin{center}
    \begin{tabular}{c}
   \includegraphics[scale=0.65]{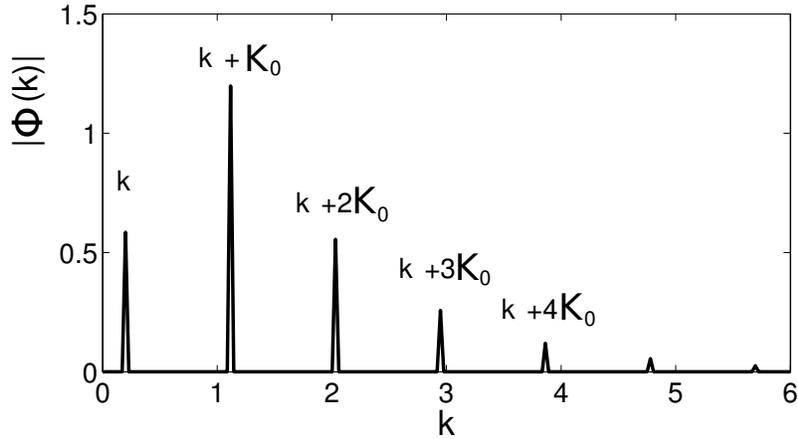} 
    \end{tabular}
\caption{\small{Fourier spectrum $\Phi(k,t=50)$ of the unstable 
plane wave solution in the focusing case $s=1$, for
parameters  $A=B=0.6$, and perturbed plane wave initial data (\ref{eq3.84}) with $\phi_0=1$, $k=0.2$, $\epsilon=2U_0=10^{-3}$ and $K_0=0.9$; one can observe the excitation of the wave numbers $k+nK_0$. }}
\end{center}
\end{figure}
%
%
\section{Discussion and conclusions}
\label{SECTION_VI}
In this work, analytical studies corroborated by numerical simulations, considered various aspects 
of the solutions of the ENLS equation (\ref{introeq1}) 
which, under 
the balance condition (\ref{cruc2008}), 
corresponds to the 
integrable Hirota equation. The ENLS equation is an important model finding applications in 
various mathematical and physical settings; these include 
geometric evolution equations, 
the evolution of vortex filaments (in the integrable focusing case), and propagation of short pulses in 
nonlinear optical fibers and nonlinear metamaterials.


Firstly, the global existence of $H^2$-weak solutions for the periodic initial-boundary value problem has been 
established. The global well-posedness in this regularity class, was an 
application of the justification of an $H^2$-type ``energy equation'' and the conservation laws 
satisfied by the weak solutions. 
Such questions are of importance, since the model may fall in the class of dispersive equations with 
lower-order non-linearities, but for which the periodic initial-boundary value problem may, in principle, exhibit finite 
time singularities. It was shown that the balance condition on the coefficients, and the assumption 
of non-vanishing second-order dispersion, are sufficient to guarantee the global-in-time solvability 
for the prescribed class of initial data.  

Next, we investigated the existence of exact travelling wave solutions, and focused --in particular-- 
on the existence of dark and bright solitary waves. 
The reduction of the problem to an ODE in the traveling wave frame
revealed that the balance 
condition (\ref{cruc2008}) is essential; in fact, it was found that it coincides with 
the compatibility assumptions needed to be satisfied by the system of 
ODE's governing the envelope of the desired travelling wave solution. 
The consideration 
of the reduced 2nd-order conservative dynamical system for the envelope, led to 
parameter regimes (including the frequency and the wavenumber of the travelling wave) for the existence 
of bright and dark solitary wave solutions; these soliton solutions are respectively given by:
\begin{eqnarray*}
	\phi(x,t)&=&\pm\sqrt{2\lambda}\,\mathrm{sech}\left[\sqrt{\lambda}\,(x-\upsilon t)\right]\mathrm{e}^{\mathrm{i}(k_0 x-\omega_0 t)},\;\;\lambda>0,\\
	\phi(x,t)&=&\pm \sqrt{|\lambda|}\mathrm{tanh}\left[\sqrt{\frac{|\lambda|}{2}}(x-\upsilon t)\right]\mathrm{e}^{\mathrm{i}(k_0 x-\omega_0 t)},\;\;\lambda<0,
\end{eqnarray*}
%
(cf. definitions and conditions on parameters $\lambda, \omega_0, k_0, \upsilon$ 
in Section \ref{SECTION_IV}). 
Importantly, we found that standing wave solutions, corresponding to $k_0=v=0$, do not exist --contrary to 
the case of the NLS model. Additionally, it was interesting to find that, apart from the balance condition, 
new conditions on dispersion and nonlinearity coefficients should be satisfied in order for the above travelling 
solitons to exist. For instance, we found that solitons may 
also exist at the zero dispersion point (where 2nd-order dispersion vanishes), in which case the ENLS is 
reduced to the cmKdV equation.

We have also presented results of direct numerical simulations 
of the ENLS equation, with initial data corresponding 
to randomly perturbed 
bright and dark solitons. Our results, corresponding to the integrable limit, 
indicate that the analytically determined soliton solutions 
are robust under perturbations.
On the other hand, numerical results corresponding to the case where 
the balance condition is not satisfied, have shown that bright and dark soliton initial conditions 
evolve to corresponding 
localized pulses that continuously emit 
radiation. 
Some of the results -- and particularly those corresponding to bright solitons -- were found to be 
in accordance with the findings of Refs.~\cite{GromTal2000,YangPeli2,Yang1}. Nevertheless, 
important differences were identified between the evolution of bright and dark pulses, concerning the radiation emission dynamics and the direction of their propagation.


Finally, we carried out a detailed analysis 
of the modulation instability (MI) of the plane wave solutions of the ENLS, namely, 
$$\phi(x,t)=\phi_0\mathrm{e}^{\mathrm{i}(kx-\omega t)},$$ and 
identified crucial differences 
between the ENLS equation and the corresponding NLS limit. For travelling plane waves, with wave number $k\neq 0$, 
it was shown that modulation instability 
can occur for both the focusing ($s=+1$) and the defocusing ($s=-1$) ENLS equation. The latter 
result is in contrast with the defocusing NLS limit, where such plane waves are always modulationally stable. 
In the case of stationary plane waves, with wave number $k=0$, the focusing ENLS equation 
possesses the same properties with its NLS counterpart, in the sense that MI conditions (and, as a result, 
the instability bands) coincide.
Also, interesting properties were revealed 
concerning the dependence of the 
instability bands 
on the higher-order effects.
These properties refer to the NLS equation with the self-steepening effect (corresponding to the case $A=0, B\neq 0$ of the ENLS 
(\ref{eq3.18})) and the NLS equation with the 3rd-order dispersion (corresponding to the case $A\neq 0, B=0$ of the ENLS 
(\ref{eq3.18})). It was found that the ENLS equation with $A,B\neq 0$ possesses a 
MI band of intermediate length between the respective MI 
band of the self-steepening limit (having the smallest MI band), 
and the 3rd-order dispersion limit (having the largest MI band). 
%

The results presented in this paper may pave the way for future work in many interesting directions. 
One such direction is to investigate the dynamics of weakly smooth initial data, when the sufficient conditions for global existence are not satisfied; is such a case, it would be relevant to seek for parameteric regimes in which possible instabilities, and even collapse, may emerge.
Another direction is to examine the applicability of our results --especially those pertaining to the defocusing case-- 
in the context of general curve evolution problems and geometric evolution equations. 
One could follow relevant studies on the geometric characterization of the defocusing NLS~\cite{DingInoguchi} 
or for the nonintegrable case~\cite{Onodera1,Onodera3}.
Additionally, regarding the stability of the solutions, it would be interesting to extend the program developed in
\cite{Cal2,Stephane2} (for studying the stability of closed solutions of the NLS equation and their correspondence to the
vortex filament motion) to the integrable focusing ENLS (Hirota) equation. Such studies are currently in progress and will 
be presented in future publications. 
%
%
%

\begin{thebibliography}{99}
%
%
\bibitem{AKNS} M. J. Ablowitz, D. J. Kaup, A. C. Newell, and H. Segur. {\emph 
The inverse scattering transform -- Fourier analysis for nonlinear problems.} Stud. Appl. Math. 
{\bf 53} (1973), 249-315.
%
%
\bibitem{AS} M. Abramowitz and I. A. Stegun. \emph{Handbook of Mathematical Functions With Formulas, Graphs and Mathematical Tables}. U. S. National Bureau of Standards, Mathematical Series 55, 1972.  
%
\bibitem{Alb}
%
J. P. Albert, J. L. Bona and M. Felland.
%
{\em A criterion for the formation of singularities for the
generalised Korteweg-de Vries equation}.
%
Mat. Apl. Comput. \textbf{7} (1988), 3--11.
%
%
%
%
\bibitem{CPag95}
%
G. P. Agrawal. {\em Nonlinear Fiber Optics, 2nd ed.} 
%
{\em Nonlinear Fiber Optics}.
%
Academic Press, 1995.
%
%
\bibitem{JBall1}
%
J. M. Ball.
%
{\em On the asymptotic behavior of generalized
processes, with applications to nonlinear evolution equations}.
%
J. Differential Equations \textbf{27} (1978), 224--265.
%
%
%
\bibitem{JBall2}
%
J. M. Ball.
%
{\em Continuity properties and global attractors of generalized
semiflows and the Navier-Stokes equations}.
%
J. Nonlinear Sci. \textbf{7} (1997), 475--502.
%
%
%
\bibitem{JBall3} J. M. Ball.
%
{\em Global attractors for damped semilinear wave equations}.
%
Discrete Cont. Dyn. Syst. -Series A. \textbf{10} (2004), 31--52.
%
%
%
\bibitem{Bindu}
%
S. G. Bindu, A. Mahalingam and K. Porsezian.
%
{\em Dark Soliton Solutions of the coupled Hirota equation in
nonlinear fiber}.
%
Phys. Lett. A  \textbf{286} (2001), 321--331.
%
%
\bibitem{BiswasOL}
A. Biswas. {\em Stochastic perturbation of optical solitons in Schr\"{o}dinger-Hirota equation}. Opt. Commun. \textbf{239} (2004), 461--466.
%
%
%
%
%
%
%
%
%
%
%
%
%
\bibitem{Bona}
%
J. L. Bona and J. C. Saut.
%
{\em Dispersive Blowup of Solutions of Generalised Korteweg-de
Vries Equation}.
%
J. Differential Equations \textbf{103} (1993), 3--57.
%
%
%
%
%
%
%
%
%
\bibitem{BourgainAMS}
%
J. Bourgain.
%
{\em Global Solutions of nonlinear Schr\"{o}dinger
equations}.
%
Colloquium Publications 46,
%
American Mathematical Society, Providence 1999.
%
%
%
\bibitem{Cal1}
A. Calini and T. Ivey.
%
{\em Connecting geometry topology and spectra for finite-gap NLS
potentials}.
%
Phys. D  \textbf{152/153} (2001), 9--19.
%
%
%
%
%
\bibitem{Cal2}
A. Calini, S. F. Keith and S. Lafortune.
%
{\em Squared eigenfunctions and linear stability properties
	of closed vortex filaments}.
%
Nonlinearity  \textbf{24} (2011), 3555--3583.
%
%
%
\bibitem{cazS}
%
T. Cazenave.
%
{\em Semilinear Schr{\"o}dinger equations}.
%
Courant Lecture Notes 10,
%
American Mathematical Society, Providence, 2003.
%
%
%
\bibitem{cazh}
%
T. Cazenave and A. Haraux.
%
{\em Introduction to Semilinear Evolution Equations}.
%
Oxford Lecture Series in Mathematics and its Applications 13,
1998.
%
%
%
\bibitem{Chorin}
A. J. Chorin.
%
{\em Vorticity and Turbulence}.
%
Applied Mathematical Sciences 103,
%
Springer-Verlag, New-York, 1994.
%
%
%
\bibitem{Chorin2}
%
A. J. Chorin and J. Akao.
%
{\em Vortex equilibria in turbulence theory and quantum
analogues}.
%
Physica D \textbf{51} (1991), 403--414.
%
%
\bibitem{DingInoguchi} Q. Ding and J. Inoguchi. {\em Schr{\"o}dinger flows, binormal motion for curves and the second AKNS-hierarchies}. Chaos, Solitons and Fractals \textbf{21} (2004), 669–-677.
%

\bibitem{degasp} 
A. Degasperis, M. Conforti, F. Baronio, S. Wabnitz, Effects of nonlinear wave
coupling: accelerated solitons, Eur. Phys. J. Special Topics 
{\bf 147} (2007) 233–252.

\bibitem{Dodd}
%
R. K. Dodd, J. C. Eilbeck, J. D. Gibbon and H. C. Morris.
%
{\em Solitons and Nonlinear Wave Equations}.
%
Academic Press, 1982.
\bibitem{Kl1} Z. Feng. {\em Duffing's equation and its applications to the Hirota equation}.
%
Phys. Lett. A  \textbf{317} (2003),  115--119.
%
%
%
\bibitem{RTem2}
%
C. Foias, O. Manley, R. Rosa and R. Temam
%
{\em Navier Stokes equations and turbulence}.
%
Cambridge University Press, 2001.
%
%
\bibitem{DJF95a}
D. J. Frantzeskakis
%
{\em Small-amplitude solitary structures for an extended
Schr\"{o}dinger equation}. 
%
J. Phys. A \textbf{29} (1996), 3631--3639.
%
\bibitem{DJFa1} K. Hizanidis K, D. J. Frantzeskakis, and C. Polymilis. 
%
{\em Exact travelling wave solutions for a generalized nonlinear
Schr¨odinger equation}.
%
J. Phys. A: Math. Gen. \textbf{29} (1996), 7687–7703.
%
%

%
\bibitem{Nsol1}
%
Y.~Fukumoto and T.~Miyazaki.
%
{\em N-solitons on a curved vortex filament, with axial flow}.
%
J. Phys. Soc. Japan \textbf{55} (1988), 3365--3370.
%
%
\bibitem{Feng} Z. Feng. {\em Duffing’s equation and its applications to the Hirota equation.} Physics Lett. A \textbf{317} (2003) 115–-119
\bibitem{JFM1991}
%
Y.~Fukumoto and T.~Miyazaki.
%
{\em Three--dimensional distortions of a vortex filament
with axial velocity}.
%
J. Fluid Mech.  \textbf{222} (1991), 369--416
%
%

\bibitem{Kl0}
M. Gedalin, T. C. Scott, and Y. B. Band. 
%
{\em Optical Solitary Waves in the Higher Order Nonlinear Schrödinger Equation}. 
%
Phys. Rev. Lett. \textbf{78} (1997),  448-451.
\bibitem{GoldS}
%
R. E. Goldstein and D. M. Petrich.
%
{\em Solitons, Euler's Equation and Vortex Patch Dynamics}.
%
Phys. Rev. Lett. \textbf{89} (1992), 555--558.
%
\bibitem{GRary} I. S. Gradshteyn and I. M. Ryzhik. \emph{Table of Integrals, Series, and Products}. Academic Press, 1996. 
\bibitem{GromTal2000} E. M. Gromov and V. I. Talanov. {\em Short Optical solitons in fibers}. Chaos \textbf{10}, No. 3 (2000), 551--558.
%
%
\bibitem{KodHas87} A.  Hasegawa and Y. Kodama. {\em Solitons in optical communications}. Oxford Univeristy Press, 1996.
%
\bibitem{Hasi}
%
H. Hasimoto.
%
{\em A Soliton on vortex filament}.
%
J. Fluid  Mech. \textbf{51} (1972), 477--485.
%

\bibitem{DSW}
M. Hoefer, M.J. Ablowitz,
{\it Dispersive Shock Waves},
Scholarpedia {\bf 4}, 5562 (2009).

%
\bibitem{Hirotar}
%
R. Hirota.
%
{\em Exact envelope-soliton solutions of a nonlinear wave
equation}.
%
J. Math. Phys. \textbf{14} (1973), 805--809.
%
%
\bibitem{DJF95b}
K. Hizanidis, D. J. Frantzeskakis and C. Polymilis.
%
{\em Exact travelling wave solutions for a generalized nonlinear
Schr\"{o}dinger equation}. 
%
J. Phys. A \textbf{29} (1996), 7687--7703.
%
%
\bibitem{Hopfin1}
%
E. J. Hopfinger, F. K. Browand, Y. Gagne.
%
{\em Turbulence and waves in a rotating tank}.
%
J. Fluid. Mech. \textbf{125} (1982), 505--534.
%
%
\bibitem{HorikiriDJF}
%
T. P. Horikis and D. J. Frantzeskakis.
%
{\em Dark solitons in the presence of higher-order effects}.
%
Opt. Lett. {\bf 38} (2013), 5098--5101.
%
%
%
\bibitem{Guo05}
%
Z. Huo and B. Guo.
%
{\em Well-posedness of the Cauchy problem for the Hirota equation
in Sobolev spaces $H^s$}.
%
Nonlinear Anal. \textbf{60} (2005), 1093--1110.
\bibitem{Kato0}T. Kato.  {\em Quasilinear equations of evolution with applications to partial differential 
equations}. Lecture Notes in Mathematics \textbf{448}, pp. 25--70. Springer-Verlag, 
New York, 1975.
\bibitem{Kato2}
%
T. Kato.
%
{\em Abstract Differential Equations and Nonlinear Mixed
Problems}.
%
Lezione Fermiane Pisa, 1985.
%
%
%
\bibitem{Katoat}
%
T. Kato and C. Y. Lai.
%
{\em Nonlinear evolution equations and the Euler flow}.
%
J. Funct. Anal. \textbf{56} (1984), 15--28.
%
%
\bibitem{Kato3} T. Kato. {\em Nonlinear Schr\"{o}dinger equations}.  Schr\"{o}dinger operators (S\o nderborg, 1988),  218–263, Lecture Notes in Phys. \textbf{345}, Springer, Berlin, 1989. 
%
\bibitem{Kolmogorov}
%
A. N. Kolmogorov.
%
{\em Local structure of turbulence in an incompressible fluid at
very high Reynolds number}.
%
C. R. (Doklady) Akad. Sci. URSS \textbf{30} (1941), 301--305.
%
%
\bibitem{Kim} J. Kim, Q. H. Park and H. J. Shin. 
{\em Conservation laws in higher-order nonlinear Schro¨dinger equations}. Phys. Rev. E \textbf{58} (1998), 6746--6751.
%
%
%
%
%
%
%
%
%
\bibitem{Stephane2}
%
S. Lafortune.
%
{\em Stability of solitons on vortex filaments}.
%
Phys. Lett. A \textbf{377} (2013), 766–-769.
%
%

\bibitem{lamb}
%
G. L. Lamb, Jr.
%
{\em Elements of Soliton Theory}.
%
Pure and Applied Mathematics, John Wiley \& Sons, 1980.
%
%
%
\bibitem{lambn}
%
G. L. Lamb, Jr.
%
{\em Solitons on moving space curves}.
%
J. Math. Phys. \textbf{18} (1977), 1654--1661.
%
%
%
\bibitem{Nsol2}
%
D. Levi, A. Sym and S. Wojciechowski.
%
{\em N-solitons on a vortex filament}.
%
Phys. Lett. \textbf{94}A (1983), 408--411.
%
%
%
%
%
%
%
%
%
\bibitem{Laurey97}
%
C. Laurey.
%
{\em The Cauchy problem for a third order nonlinear
Schr\"{o}dinger equation}.
%
Nonlinear Anal. \textbf{29} (1997), 121--158.
%
%
%
%
%
%
%
%
%
\bibitem{Hopfin2}
%
T. Maxworthy, E. J. Hopfinger and L. G. Redekopp.
%
{\em Wave motions on vortex cores}.
%
J. Fluid Mech. \textbf{151} (1985), 141--165.
%
%
%
%
%
%
%
%
%
\bibitem{Nakk}
%
K. Nakkeram and K. Porsezian.
%
{\em Coexistence of a self-induced transparency soliton and a
higher order nonlinear Schr\"{o}dinger soliton in an erbium doped
fiber}.
%
Optics Comm. \textbf{123} (1996), 169--174.
%
%
%
%
%
%
%
%
\bibitem{Onodera1} E. Onodera. {\em Generalized Hasimoto transform of one-dimensional dispersive flows into compact Riemann surfaces}. SIGMA Symmetry Integrability Geom. Methods Appl.  \textbf{4}, Paper 044 (2008), 10 pp.
\bibitem{Onodera2} H. Chihara and E. Onodera. {\em A third order dispersive flow for closed curves into almost Hermitian manifolds}. J. Funct. Anal.  \textbf{257} (2009), 388–-404.
\bibitem{Onodera3} E. Onodera. {\em A remark on the global existence of a third order dispersive flow into locally Hermitian symmetric spaces}. Comm. Partial Differential Equations  \textbf{35} (2010),  1130--1144.
\bibitem{Ozawa2005}
%
T. Ozawa.
%
{\em Remarks on proofs of conservation laws for nonlinear
Schr{\"o}dinger equations.}
%
Calc. Var. Partial Differential Equations \textbf{25} (2006),
403--408.
%
%
%
%
%
%
%
%
\bibitem{YangPeli2} D. Pelinovsky and J. Yang. {\em Stability analysis of embedded solitons in the generalized third-order nonlinear Schr\"{o}dinger equation}.  Chaos \textbf{15} (2005), 037115.
%
\bibitem{Potasek87}
%
M. J.  Potasek.
%
{\em  Modulational instability in an extended nonlinear
Schr{\"o}dinger equation }.
%
Optics Letters \textbf{12}, No. 11 (1998), 921--923.
%
%

\bibitem{scalora} M. Scalora, M. S. Syrchin, N. Akozbek, E. Y. Poliakov, G. D’Aguanno, N. Mattiucci, 
M. J. Bloemer, and A. M. Zheltikov.
{\em Generalized nonlinear Schr{\"o}dinger equation for dispersive susceptibility and permeability: 
application to negative index materials}.
Phys. Rev. Lett. {\bf 95} (2005), 013902.



\bibitem{Segata08}
%
J. Segata.
%
{\em On asymptotic behavior of solutions to Korteweg-de Vries type
equations related to vortex filament with axial flow}.
%
J. Differential Equations \textbf{245} (2008), 281--306.
%
%
%
\bibitem{sim90}
%
J. Simon.
%
{\em Compact Sets in the Space $L^p(0,T;B)$}.
%
Ann. Mat. Pura Appl. \textbf{146} (1987), 65--96.
%
%
%
\bibitem{TaylorII}
%
M. Taylor.
%
{\em Partial Differential Equations III}.
%
Applied Mathematical Sciences 117,
%
Springer-Verlag, New York, 1996.
%
%
%
\bibitem{RTem88}
%
R. Temam.
%
{\em Infinite-Dimensional Dynamical Systems in Mechanics and
Physics, 2nd edition}.
%
Applied Mathematical Sciences 68,
%
Springer-Verlag, New York, 1997.
%
%
%
\bibitem{tsitsas} N. L. Tsitsas, N. Rompotis, I. Kourakis, P. G. Kevrekidis, and D. J. Frantzeskakis.
{\em Higher-order effects and ultrashort solitons in left-handed metamaterials}.
Phys. Rev. E {\bf 79} (2009), 037601.

%
\bibitem{XWang}
%
X. Wang.
%
{\em An Energy Equation for the Weakly Damped Driven Nonlinear
Schr\"{o}dinger equation and its application to their
Attractors}.
%
Physica D \textbf{88} (1995), 167--175.
%
%
\bibitem{Kl2}
Q. Wang, Y. Chen, B. Li and H. Q. Zhang. 
{\em New Exact Travelling Wave Solutions to Hirota Equation and $(1+1)$-Dimensional
Dispersive Long Wave Equation}.  Commun. Theor. Phys. (Beijing, China) \textbf{41} (2004), 821--838.
%
\bibitem{wen1} S. C. Wen, Y. Wang, W. Su, Y. Xiang, X. Fu, and D. Fan.
{\em Modulation instability in nonlinear negative-index material}.
Phys. Rev. E {\bf 73} (2006), 036617.

\bibitem{wen2} S. C. Wen, Y. Xiang, X. Dai, Z. Tang, W. Su, and D. Fan. 
{\em Theoretical models for ultrashort electromagnetic pulse propagation in nonlinear metamaterials}. 
Phys. Rev. A {\bf 75} (2007), 033815.

\bibitem{Yang1} J. Yang. {\em Stable Embedded Solitons}. Phys. Rev. Lett. \textbf{91} (2003), 143903.
\bibitem{ZO09}
%
V. E. Zakharov and L. A. Ostrovsky.
%
{\em Modulation instability: The beginning}.
%
Physica D \textbf{238} (2009), 540--548.
%
%
%
%
%
%
%
%
%
%
%
%
%
%
\bibitem{Zhang08}
%
S. Zhang and Z. Yin.
%
{\em On the blow-up phenomena of the periodic Dullin-Gottwald-Holm
equation}.
%
J. Math. Phys. \textbf{49} (2008), 113504.
%
%
%
\end{thebibliography}
%

\appendix
\section{Approximation scheme for the derivation of the energy equations}
\label{SECTION_II}
\par
In this complementary section, we include for the sake of completeness, the details of the proof of Lemma \ref{weaksol1}. For the justification of various computations needed for the derivation of the energy equation, we recall some
useful lemmas concerning time differentiation of Hilbert space
valued functions \cite{RTem2,XWang}.
%
%
\begin{lem}\label{xwa1}
	Let $V_i\subset H\subset V_i'$, $i=1,2$ a sequence of real Hilbert
	spaces, with continuous inclusions. If $\phi_i\in L^2([0,T],V_i)$,
	$\partial_t\phi_1\in L^2([0,T],V_2')$ and $\partial_t\phi_2\in
	L^2([0,T],V_1')$ then
	\begin{equation*}
		\partial_t(\phi_1,\phi_2)=<\phi_2,\partial_t\phi_1>_{{\ssy V_2,V_2'}}
		+<\phi_1,\partial_t\phi_2>_{{\ssy V_1,V_1'}}.
	\end{equation*}
\end{lem}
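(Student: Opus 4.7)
The plan is to reduce the identity to the classical Leibniz rule for smooth Hilbert-space-valued functions of time, via a time-regularization argument in the same spirit as the mollification scheme already used for Lemma~\ref{weaksol1}. First I would extend each $\phi_i$ (by reflection at the endpoints, say) to a function $\widetilde{\phi}_i\in L^2(\mathbb{R}, V_i)$ with $\partial_t \widetilde{\phi}_i\in L^2(\mathbb{R}, V_j')$, $j=3-i$, and define $\phi_i^\varepsilon := J_\varepsilon * \widetilde{\phi}_i$ using the standard mollifier $J_\varepsilon$ on $\mathbb{R}$. Then $\phi_i^\varepsilon\in C^\infty(\mathbb{R}, V_i)$, $\partial_t \phi_i^\varepsilon = J_\varepsilon *\partial_t\widetilde{\phi}_i\in C^\infty(\mathbb{R}, V_j')$, and standard properties of mollifiers give
\[
\phi_i^\varepsilon\to \phi_i \text{ in } L^2([0,T], V_i),\qquad \partial_t\phi_i^\varepsilon\to \partial_t\phi_i \text{ in } L^2([0,T], V_j'),\quad j=3-i.
\]

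Since $\phi_i^\varepsilon$ is smooth in time with values in $V_i\subset H$, the classical product rule in the Hilbert space $H$ gives, pointwise in $t$,
\[
\tfrac{d}{dt}(\phi_1^\varepsilon,\phi_2^\varepsilon)_H = (\partial_t\phi_1^\varepsilon,\phi_2^\varepsilon)_H + (\phi_1^\varepsilon,\partial_t\phi_2^\varepsilon)_H,
\]
which, using the identifications $V_i\subset H\subset V_i'$ to rewrite each $H$-inner product as a duality bracket between $V_j$ and $V_j'$, becomes
\[
\tfrac{d}{dt}(\phi_1^\varepsilon,\phi_2^\varepsilon)_H = \ld \phi_2^\varepsilon,\partial_t\phi_1^\varepsilon\rd_{{\ssy V_2,V_2'}} + \ld \phi_1^\varepsilon,\partial_t\phi_2^\varepsilon\rd_{{\ssy V_1,V_1'}}.
\]

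Next I would test this equality against an arbitrary $\psi\in C_c^\infty((0,T))$, integrate by parts in time on the left, and pass to the limit $\varepsilon\to 0$. The right-hand side converges in $L^1([0,T])$ by bilinearity and boundedness of each duality pairing on $L^2([0,T], V_j)\times L^2([0,T], V_j')$, using the convergences above. For the left-hand side, the inclusions $V_i\subset H$ upgrade $\phi_i^\varepsilon\to\phi_i$ to convergence in $L^2([0,T], H)$, so by Cauchy--Schwarz $(\phi_1^\varepsilon,\phi_2^\varepsilon)_H\to(\phi_1,\phi_2)_H$ in $L^1([0,T])$. Identifying the distributional limits on $(0,T)$ yields the claimed formula, and since its right-hand side already lies in $L^1([0,T])$, the identity holds a.e.\ in $t$.

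The main obstacle is the asymmetry $\partial_t\phi_1\in L^2([0,T], V_2')$ rather than $L^2([0,T], V_1')$, which blocks a direct appeal to the standard Lions--Magenes embedding $L^2([0,T],V_1)\cap H^1([0,T],V_1')\hookrightarrow C([0,T],H)$ and its associated pointwise Leibniz formula. The regularization above circumvents this precisely by manipulating only the duality pairings in which the hypotheses are formulated. One needs some care that the extension step preserves the $V_j'$-regularity of $\partial_t\phi_i$; this can be arranged either by working first on an arbitrary compact subinterval $[a,b]\subset(0,T)$ and exhausting, or by choosing an extension compatible with both temporal regularities simultaneously.
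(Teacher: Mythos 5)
Your proof is correct. The paper does not actually prove Lemma~\ref{xwa1} --- it is only recalled from the cited references (Foias--Manley--Rosa--Temam and Wang) --- and your time-mollification argument (regularize, apply the classical Leibniz rule for $C^1$ $H$-valued functions, rewrite the inner products as the appropriate duality pairings via the Gelfand-triple identification, and pass to the limit in the sense of distributions, with the endpoint/extension issue handled as you indicate) is essentially the standard proof given in those references.
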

%
%
%
%
%
\begin{lem}\label{xwa2}
	Let $V\subset H\subset V'$ be three real Hilbert spaces, with
	continuous inclusion. We assume that $V$ is a generalized Banach
	algebra and
	\begin{equation*}
		\begin{gathered}
			<\phi,\psi>_{{\ssy V,V'}}=(\phi,\psi)_{\ssy
				H},\quad\forall\,\phi\in V,
			\quad\forall\,\psi\in H,\\
			<\phi,z\,\psi>_{{\ssy V,V'}}=<\phi\,\psi,z>_{{\ssy V,V'}},
			\quad\forall\,\phi,\psi\in V,\quad\forall\,z\in V'.
		\end{gathered}
	\end{equation*}
	Moreover, we assume that $\psi_i\in L^2([0,T],V)$ and
	$\partial_t\psi_i\in L^2([0,T],V')$ for $i=1,2$.
	Then the (weak) derivative $\partial_t(\psi_1\psi_2)$ exists and
	\begin{equation*}
		<\phi,\partial_t(\psi_1\psi_2)>_{{\ssy
				V,V'}}=<\phi,\partial_t\psi_1\,\psi_2
		+\psi_1\,\partial_t\psi_2>_{{\ssy V,V'}},
		\quad\forall\,\phi\in V.
	\end{equation*}
\end{lem}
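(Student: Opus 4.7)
The plan is to reduce the bilinear Leibniz identity to the scalar-valued Lemma \ref{xwa1} by using the multiplier hypothesis to rewrite the pairing $\langle\phi,\psi_1\psi_2\rangle_{V,V'}$ in a form in which the two time-dependent factors appear linearly. Fix an arbitrary test element $\phi\in V$ and set $F(t):=\langle\phi,\psi_1(t)\psi_2(t)\rangle_{V,V'}$. Applying the multiplier identity with the roles $\phi:=\phi$, $\psi:=\psi_2(t)\in V$ and $z:=\psi_1(t)\in V\subset V'$ gives
\[
F(t)\;=\;\langle\phi\,\psi_2(t),\,\psi_1(t)\rangle_{V,V'}\;=\;\bigl(\phi\,\psi_2(t),\,\psi_1(t)\bigr)_H,
\]
where the last equality uses the identification $\langle\cdot,\cdot\rangle_{V,V'}=(\cdot,\cdot)_H$ available whenever one argument lies in $V$ and the other in $H\supset V$.

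Next I would verify the hypotheses of Lemma \ref{xwa1} for $\Phi_1(t):=\phi\,\psi_2(t)$ and $\Phi_2(t):=\psi_1(t)$ in the triple $V\subset H\subset V'$, taking $V_1=V_2=V$. The Banach algebra property immediately yields $\|\Phi_1(t)\|_V\leq c\,\|\phi\|_V\|\psi_2(t)\|_V$, so $\Phi_1\in L^2([0,T],V)$. Dualising the multiplier identity furnishes the bound $\|\phi\,z\|_{V'}\leq c\,\|\phi\|_V\|z\|_{V'}$ for all $\phi\in V$ and $z\in V'$: for every $\eta\in V$ one has
\[
|\langle\eta,\phi\,z\rangle_{V,V'}|\;=\;|\langle\eta\,\phi,z\rangle_{V,V'}|\;\leq\;\|\eta\,\phi\|_V\,\|z\|_{V'}\;\leq\;c\,\|\eta\|_V\|\phi\|_V\,\|z\|_{V'}.
\]
Applied pointwise in $t$ with $z=\partial_t\psi_2(t)$, this gives $\phi\,\partial_t\psi_2\in L^2([0,T],V')$; a short distributional calculation using that $\phi$ is $t$-independent and that left-multiplication by $\phi$ is bounded on $V'$ then identifies $\partial_t\Phi_1=\phi\,\partial_t\psi_2$ as weak derivatives in $V'$. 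The remaining hypothesis on $\Phi_2$ is built into the assumptions on $\psi_1$.

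Lemma \ref{xwa1} then yields, for a.e.\ $t\in[0,T]$,
\[
\frac{d}{dt}F(t)\;=\;\langle\psi_1,\,\phi\,\partial_t\psi_2\rangle_{V,V'}+\langle\phi\,\psi_2,\,\partial_t\psi_1\rangle_{V,V'}.
\]
Applying the multiplier identity in reverse to each summand (with $z=\partial_t\psi_2$ and $z=\partial_t\psi_1$ respectively) converts both pairings back into ones with $\phi$ in the first slot, and gives
\[
\frac{d}{dt}F(t)\;=\;\langle\phi,\,\psi_2\,\partial_t\psi_1+\psi_1\,\partial_t\psi_2\rangle_{V,V'}.
\]
Since $\phi\in V$ was arbitrary, this is the asserted formula and simultaneously identifies $\partial_t(\psi_1\psi_2)$ as the functional $\psi_2\,\partial_t\psi_1+\psi_1\,\partial_t\psi_2$, which lies in $L^1([0,T],V')$ by the bounds just established.

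The main obstacle is pure bookkeeping: keeping track at each step of which factor of the pairing sits in $V$ and which in $V'$, so that every invocation of the multiplier identity has a legitimate type signature, together with the distributional verification $\partial_t(\phi\,\psi_2)=\phi\,\partial_t\psi_2$ for time-independent $\phi\in V$. Once the boundedness of left-multiplication by $\phi$ on $V'$ is extracted from the multiplier hypothesis as above, both points become routine and the argument collapses into a single application of Lemma \ref{xwa1} sandwiched between two uses of the multiplier identity.
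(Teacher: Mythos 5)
Your argument is correct. Note, however, that the paper does not actually prove this lemma: it is recalled verbatim from the literature (\cite{RTem2,XWang}) as a known tool for time-differentiation of Hilbert-space-valued functions, so there is no in-paper proof to compare against. Your route --- freezing a test element $\phi\in V$, using the multiplier identity to rewrite $\langle\phi,\psi_1\psi_2\rangle_{V,V'}$ as $(\phi\,\psi_2,\psi_1)_H$, checking via the Banach-algebra bound and its dualisation $\|\phi\,z\|_{V'}\leq c\,\|\phi\|_V\|z\|_{V'}$ that the pair $(\phi\,\psi_2,\psi_1)$ satisfies the hypotheses of Lemma~\ref{xwa1} with $V_1=V_2=V$, and then undoing the multiplier identity on each summand --- is exactly the natural reduction and is the argument one finds in the cited sources. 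The two points that genuinely require care are both handled: the distributional identification $\partial_t(\phi\,\psi_2)=\phi\,\partial_t\psi_2$ (which rests on left-multiplication by $\phi$ being a bounded operator on $V'$, extracted correctly from the multiplier hypothesis), and the final passage from the scalar identity holding for every $\phi\in V$ to an identity of weak derivatives in $V'$, which is legitimate because $\psi_2\,\partial_t\psi_1+\psi_1\,\partial_t\psi_2\in L^1([0,T],V')$ and elements of $V'$ are determined by their pairings against $V$. In the second application of the multiplier identity you also implicitly use commutativity of the algebra product to pass from $\langle\psi_1\phi,\partial_t\psi_2\rangle_{V,V'}$ to $\langle\phi,\psi_1\,\partial_t\psi_2\rangle_{V,V'}$; this is harmless here since the concrete algebras in the paper ($H^m_{per}$ under pointwise multiplication) are commutative, but it is worth stating as an assumption.
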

Now, let us consider the unique solution $\phi\in
C([0,T],H^3_{per}(\Omega))\cap C^1([0,T], L^2(\Omega))$ of
\eqref{introeq1}-\eqref{introeq2}-\eqref{bc}. From Lemmas \ref{xwa1} and \ref{xwa2}, it follows that (see also
 \cite[Proposition 2.1, p. 170, eq. (2.17)]{XWang})
 \begin{equation}\label{devfunct}
 	\begin{gathered}
 		\phi_{txx}\in L^2([0,T], H^{-2}_{per}(\Omega)),\quad
 		\phi_{tx}\in L^2([0,T], H^{-1}_{per}(\Omega)),\\
 		\phi_{xxxx}\in L^2([0,T],H^{-1}_{per}(\Omega)),\\
 		\phi_{txx}=\phi_{xtx}=\phi_{xxt},\quad
 		\phi_{tx}=\phi_{xt}.
 	\end{gathered}
 \end{equation}
Accordingly, the infinitely smooth approximating function $v_n$ converges to the solution $\phi$ in the spaces 
\begin{equation}
	\begin{array}{lllll}
		\label{propvn}
		v_n\rightharpoonup \phi,\;\;\mbox{in}\;\;L^2([0,T], H^{3}_{per}(\Omega)),
		&&\;v_{nx}\rightharpoonup \phi_x,\;\;\mbox{in}\;\;L^2([0,T], H^{2}_{per}(\Omega)),\\
		v_{nt}\rightharpoonup \phi_t,\;\;\mbox{in}\;\;L^2([0,T], L^2(\Omega)),
		&&\;v_{nxx}\rightharpoonup \phi_{xx},\mbox{in}\;\;L^2([0,T], H^{1}_{per}(\Omega)),\\
		v_{ntx}\rightharpoonup \phi_{tx},\;\;\mbox{in}\;\;L^2([0,T], H^{-1}_{per}(\Omega)),
		&&\;v_{nxxx}\rightharpoonup \phi_{xxx},\mbox{in}\;\;L^2([0,T], L^2(\Omega)),\\
		v_{ntxx}\rightharpoonup \phi_{txx},\;\;\mbox{in}\;\;L^2([0,T], H^{-2}_{per}(\Omega)),
		&&\;v_{nxxxx}\rightharpoonup\phi_{xxxx},\mbox{in}\;\;L^2([0,T], H^{-1}_{per}(\Omega)),
	\end{array}
\end{equation}
as a consequence of Theorem~\ref{thmloc}. Now,  we proceed to approximate the time derivative of the second term of of the functional (\ref{pseudocons}) as follows:
\begin{equation}\label{multip1}
\begin{split}
\tfrac{d}{dt}\left( |v_{nx}|^2,|v_n|^2\right)_{\ssy L^2}=&\, \ld |
v_n|^2,\tfrac{d}{dt}|v_{nx}|^2\rd_{\ssy H^2,H^{-2}}
+\ld | v_{nx}|^2,\tfrac{d}{dt}|v_{n}|^2\rd_{\ssy H^2,H^{-2}}\\
=&\,2\,\ld|v_n|^2,\overline{v}_{nx}v_{nxt}\rd_{\ssy H^2,H^{-2}}
+2\,\ld |v_{nx}|^2, \overline{v}_n v_{nt}\rd_{\ssy H^2,H^{-2}}\\
=&\,2\,\ld |v_n|^2\overline{v}_{nx},v_{nxt}\rd_{\ssy H^2,H^{-2}}
+\ld |v_{nx}|^2\overline{v}_n,v_{nt}\rd_{\ssy H^2,H^{-2}}\\
=&\,-2\,\ld |v_n|^2\overline{v}_{nxx},v_{nt}\rd_{\ssy H^1,H^{-1}}
-2\ld\overline{v}^2_{nx} v_n,v_{nt}\rd_{\ssy H^2,H^{-2}}\\
&\quad-2\ld |v_{nx}|^2\overline{v}_n,v_{nt}\rd_{H^2,H^{-2}}
+2\ld |v_{nx}|^2\overline{v}_n,v_{nt}\rd_{\ssy H^2,H^{-2}}\\
=&\,-2\,\ld |v_n|^2\overline{v}_{nxx},v_{nt}\rd_{\ssy
	H^1,H^{-1}}-2\ld
\overline{v}_{nx}^2 v_n,v_{nt}\rd_{\ssy H^2,H^{-2}}.\\
\end{split}
\end{equation}
Integration of (\ref{multip1}) with respect to time, gives (\ref{prepassage}), from which we are passing to the limit equation (\ref{multip1b}), as discussed in the Lemma \ref{weaksol1}. Proceeding further, the substitution of $\phi_t$ into (\ref{multip1b}), in terms of its expression given by the pde
\eqref{in1}, can be handled with the computations
%
%
\begin{equation}\label{multip1c}
\begin{split}
-2\ld|\phi|^2\,\overline{\phi}_{xx},\phi_{t}\rd_{\ssy
	H^1,H^{-1}}=&\,-2\,\ld |\phi|^2\,\overline{\phi}_{xx},
-3\,\alpha\,|\phi|^2\,\phi_x+{\rm i}\,\rho\,\phi_{xx}
-\sigma\,\phi_{xxx}
+{\rm i}\,\delta\,|\phi|^2\,\phi\rd_{\ssy H^1,H^{-1}}\\
=&\,6\,\alpha\,\ld\overline{\phi}_{xx}\,\phi_x,|\phi|^4\rd_{\ssy
	H^1,H^{-1}}
+2\,\sigma\,\ld|\phi|^2\,\overline{\phi}_{xx},\phi_{xxx}\rd_{\ssy H^1,H^{-1}}\\
&\quad-2\,\delta\,\ld\overline{\phi}_{xx},
{\rm i}\,|\phi|^4\,\phi\rd_{\ssy H^1,H^{-1}},\\
\ld\overline{\phi}_{x}^2\,\phi,\phi_{t}\rd_{\ssy H^2,H^{-2}}
=&\,6\,\alpha\,\ld\overline{\phi}_{x}^2\,
\phi,|\phi|^2\,\phi_x\rd_{\ssy H^2,H^{-2}}-2\rho\ld
\overline{\phi}_{x}^2\,\phi,
\phi_{xx}\rd_{\ssy H^2,H^{-2}}\\
&\,+2\,\sigma\,\ld\overline{\phi}_{x}^2\, \phi,\phi_{xxx}\rd_{\ssy
	H^2,H^{-2}}-2\,\delta\,\ld\overline{\phi}_{x}^2\,
\phi,{\rm i}\,|\phi|^2\,\phi\rd_{\ssy H^2,H^{-2}}.\\
\end{split}
\end{equation}
Now, by using \eqref{Sembe}, \eqref{multip1b}, and \eqref{multip1c}, we
derive the equation
\begin{equation}\label{multip2}
\begin{split}
\tfrac{d}{dt}\int_{\ssy\Omega}|\phi_x|^2|\phi|^2\,dx=&\,3\,\alpha
\,\mathrm{Re}\left[\int_{\ssy\Omega}|\phi|^4\,(|\phi_x|^2)_x\;dx\right]
+\sigma\,\mathrm{Re}\left[\int_{\ssy\Omega}
|\phi|^2\,(|\phi_{xx}|^2)_x\;dx\right]\\
&+2\,\delta\,\mathrm{Im}\left[\int_{\ssy\Omega}|\phi|^4\,\phi
\,\overline{\phi}_{xx}\;dx\right]
+3\,\alpha\,\mathrm{Re}\left[\int_{\ssy\Omega}|\phi|^2
\,|\phi_x|^2\,(|\phi|^2)_x\;dx\right]\\
&+2\,\sigma\,\mathrm{Re}\left[\int_{\ssy\Omega}
\overline{\phi}_x^2\,\phi\,\phi_{xxx}\;dx\right]
+2\,\rho\,\mathrm{Im}\left[\int_{\ssy\Omega}\overline{\phi}_x^2
\,\phi\,\phi_{xx}\;dx\right]\\
&+2\,\delta\,\mathrm{Im}\left[\int_{\ssy\Omega}\overline{\phi}_x^2
\,\phi^2|\phi|^2\;dx\right],\\
\end{split}
\end{equation}
being valid for the solution $C([0,T],H^3_{per}(\Omega))\cap C^1([0,T], L^2(\Omega))$.  
%
%
%
%
\par
We compute next, the time derivatives for the remaining terms of the functional (\ref{pseudocons}).   For the term 
$(\phi^2,\overline{\phi}_x^2)_{\ssy L^2(\Omega)}$, by using the approximating
sequence $v_n$,  and Lemmas \ref{xwa1} and \ref{xwa2},  we observe
that
\begin{equation}\label{multip4a}
	\begin{split}
		\tfrac{d}{dt}\left(v_n^2,\overline{v}^2_{nx}\right)_{\ssy
			L^2(\Omega)}&=\ld\overline{v}^2_{nx},\tfrac{d}{dt}v_n^2\rd_{\ssy
			H^2,H^{-2}}+
		\ld v_n^2,\tfrac{d}{dt}\overline{v}_{nx}^2\rd_{\ssy H^2,H^{-2}}\\
		&=2\,\ld\overline{v}^2_{nx}v_n,v_{nt}\rd_{\ssy H^2,H^{-2}}
		+2\,\ld \overline{v}^2v_{nx},v_{nxt}\rd_{\ssy H^2,H^{-2}}\\
		&=2\,\ld\overline{v}^2_{nx}v_n,v_{nt}\rd_{\ssy H^2,H^{-2}}
		-2\,\ld\partial_x(\overline{v}_n^2v_{nx}),v_{nt}\rd_{\ssy H^1,H^{-1}}\\
		&=2\,\ld\overline{v}^2_{nx}v_n,v_{nt}\rd_{\ssy H^2,H^{-2}}-4\ld
		\overline{v}_n|v_{nx}|^2,v_{nt}\rd_{\ssy H^2,H^{-2}}- 2\ld
		\overline{v}_n^2v_{nxx},v_{nt}\rd_{\ssy H^1,H^{-1}}.\\
	\end{split}
\end{equation}
We continue by using the same arguments, as for the derivation of
\eqref{multip1b}. Letting $n\rightarrow\infty$ we get that
\begin{equation}\label{multip4}
	\tfrac{d}{dt}(\phi^2,\overline{\phi}_x^2)_{\ssy
		L^2(\Omega)}=\mathbf{I_1}+\mathbf{I_2}+ \mathbf{I_3},
\end{equation}
where the terms $(\mathbf{I_k})_{k=1}^3$ are found to be
\begin{equation}\label{intt1}
	\begin{split}
		\mathbf{I_1}=&\,2\,\ld\phi\,\overline{\phi}_x^2,\phi_t\rd_{\ssy
			H^{2},H^{-2}}-3\,\alpha\,\ld|\phi|^2\,|\phi_x|^2,\partial_x
		\,(|\phi|^2)\rd_{\ssy H^{2},H^{-2}}\\
		&+2\,\rho\,\ld\phi\,\overline{\phi}_x^2,{\rm
			i}\,\phi_{xx}\rd_{\ssy
			H^{2},H^{-2}}-2\,\sigma\,\ld\phi\,\overline{\phi}_x^2,
		\phi_{xxx}\rd_{\ssy H^{2},H^{-2}}
		+2\,\delta\,\ld\overline{\phi}_x^2,
		{\rm i}|\phi|^2\,\phi^2\rd_{\ssy H^{2},H^{-2}},\\
	\end{split}
\end{equation}
\begin{equation}\label{intt2}
	\begin{split}
		\mathbf{I_2}=&-4\,\ld\overline{\phi}\,|\phi_x|^2,\phi_t\rd_{\ssy
			H^{2},H^{-2}} +6\,\alpha\,\ld|\phi|^2\,|\phi_x|^2,
		\partial_x(|\phi|^2)\rd_{\ssy H^{2},H^{-2}}\\
		&-4\,\rho\,\ld|\phi_x|^2\,\overline{\phi},{\rm
			i}\,\phi_{xx}\rd_{\ssy H^{2},H^{-2}}
		+4\,\sigma\,\ld|\phi_x|^2\,\overline{\phi},
		\phi_{xxx}\rd_{\ssy H^{2},H^{-2}},\\
	\end{split}
\end{equation}
and
\begin{equation}\label{intt3}
	\begin{split}
		\mathbf{I_3}=&-2\,\ld \overline{\phi}^2\phi_{xx},\phi_t\rd_{\ssy
			H^{1},H^{-1}}+6\,\alpha\,\ld|\phi|^2\,\phi_x\,
		\overline{\phi}^2,\phi_{xx}\rd_{\ssy H^{2},H^{-2}}\\
		&-2\,\rho\,\ld \overline{\phi}^2,{\rm i}\,\phi_{xx}^2\rd_{\ssy
			H^{3},H^{-3}} +2\,\sigma\,\ld
		\overline{\phi}^2\phi_{xx},\phi_{xxx}\rd_{\ssy
			H^{1},H^{-1}}-2\,\delta\,\ld|\phi|^4\,\overline{\phi}, {\rm
			i}\,\phi_{xx}\rd_{\ssy H^{3},H^{-3}}.\\
	\end{split}
\end{equation}
The resulting  equation, involves the time derivative of
$\|\phi_{xx}\|_{\ssy L^2(\Omega)}$. We observe first that due to
periodicity, it holds that
\begin{equation}\label{multip3Aa}
	\begin{split}
		2\ld v_{nt}-\mathrm{i}\,\rho\, v_{nxx}+\sigma\,
		v_{nxxx},\partial^4_xv_n\rd_{\ssy H^2,H^{-2}}=&\,2\,\ld
		v_{nt},\partial^4_xv_n\rd_{\ssy H^2,H^{-2}}-2\ld\mathrm{i}\,\rho\,
		v_{nxx},\partial^4_xv_n\rd_{H^2,H^{-2}}\\
		&+2\,\ld \sigma\, v_{nxxx},\partial^4_xv_n\rd_{\ssy H^2,H^{-2}}\\
		=&\,\ld v_{nxxt},v_{nxx}\rd_{\ssy H^2,H^{-2}}\\
		=&\,\tfrac{1}{2}\,\tfrac{d}{dt}\|v_{nxx}\|_{\ssy L^2(\Omega)}^2.\\
	\end{split}
\end{equation}
Furthermore, since the right-hand side of \eqref{in1} lies in
$L^{2}_{loc}(\mathbb{R},H^2_{per}(\Omega))$ (due to \eqref{prop1}
and \eqref{prop2}), the left-hand side lies in
$L^{2}_{loc}(\mathbb{R}, H^2_{per}(\Omega))$. Thus, it is justified
to pass to the limit in \eqref{multip3Aa}, which converges as
$n\rightarrow\infty$, to
\begin{equation}\label{multip3Ab}
	2\ld \phi_{t}-\mathrm{i}\,\rho\,\phi_{xx}
	+\sigma\,\phi_{xxx},\partial^4_x\phi\rd_{\ssy H^2,H^{-2}}
	=\tfrac{1}{2}\,\tfrac{d}{dt}\|\phi_{xx}\|_{\ssy L^2(\Omega)}^2.
\end{equation}
Then, by substitution of the right-hand side of \eqref{in1} to the
right-hand side of \eqref{multip3Ab}, we get
\begin{equation}\label{multip3b}
	\tfrac{d}{dt}\|\phi_{xx}\|^2_{\ssy
		L^2(\Omega)}=-6\,\alpha\,\ld|\phi|^2\phi_{x},
	\partial_x^4\overline{\phi}\rd_{\ssy H^{1},H^{-1}}
	+2\,\delta\,\ld{\rm i}\,|\phi|^2\,\phi,
	\partial_x^4\overline{\phi}\rd_{\ssy H^{1},H^{-1}}.
\end{equation}
Again, due to periodicity, we have
\begin{equation}\label{multip3c}
	\begin{split}
		-6\,\alpha\,\ld |\phi|^2\,\phi_{x},
		\partial_x^4\overline{\phi}\rd_{\ssy H^{1},H^{-1}}
		=&\,6\,\alpha\,\ld\partial_x(|\phi|^2\,\phi_{x}),
		\overline{\phi}_{xxx}\rd_{\ssy H^{1},H^{-1}}\\
		=&\,6\,\alpha\,\ld|\phi_x|^2\,\phi,\overline{\phi}_{xxx}\rd_{\ssy
			H^{2},H^{-2}}+6\,\alpha\,\ld\phi_x^2\,\overline{\phi},
		\overline{\phi}_{xxx}\rd_{\ssy H^2,H^{-2}}\\
		&\,+6\,\alpha\ld|\phi|^2\,\phi_{xx},
		\overline{\phi}_{xxx}\rd_{\ssy H^{2},H^{-2}}.\\
	\end{split}
\end{equation}
The terms containing weak derivatives of third and fourth order
are reduced as follows:
\begin{gather}
	2\,\sigma\,\ld{\phi}^2\,\phi_{xx},\phi_{xxx}\rd_{\ssy
		H^{1},H^{-1}}=-2\,\sigma\,\ld\phi_{xx}^2,\overline{\phi}
	\,\overline{\phi}_x\rd_{\ssy H^{1},H^{-1}},\label{intt4}\\
	6\,\alpha\,\ld|\phi_x|^2,\overline{\phi}_{xxx}\rd_{\ssy
		H^{1},H^{-1}}=-6\,\alpha\,\ld\overline{\phi}\,\overline{\phi}_x,
	\phi_{xx}^2\rd_{\ssy H^{2},H^{-2}}
	-3\,\alpha\,\ld|\phi_{xx}|^2,(|\phi|^2)_x\rd_{\ssy H^{1},H^{-1}},
	\label{intt5}\\
	6\,\alpha\,\ld\phi_x^2\,\overline{\phi},\overline{\phi}_{xxx}\rd_{\ssy
		H^{2},H^{-2}}=-6\,\alpha\,\ld|\phi_{xx}|^2,(|\phi|^2)_x\rd_{\ssy
		H^{1},H^{-1}},\label{intt6}
\end{gather}
and
\begin{equation}\label{intt7}
	\begin{split}
		2\,\delta\,\ld {\rm
			i}\,|\phi|^2\phi,\overline{\phi}_{xxxx}\rd_{H^{1},H^{-1}}
		=&\,8\,\delta\ld{\rm i}\,|\phi_x|^2\,\phi,\overline{\phi}_{xx}\rd_{\ssy H^{2},H^{-2}}\\
		&\quad+4\delta \ld{\rm
			i}\,\phi_x^2\overline{\phi},\overline{\phi}_{xx}\rd_{H^{2},H^{-2}}
		+2\delta\ld{\rm i}\,\phi^2,\overline{\phi}_{xx}^2\rd_{\ssy
			H^{2},H^{-2}}.\\
	\end{split}
\end{equation}
After all these reductions, we may handle the nonlinear terms of
\eqref{intt4}-\eqref{intt7} with \eqref{prop1} and \eqref{prop2},
and justify the application of \eqref{Sembe}, to conclude with the proof of the Lemma \ref{weaksol1}.
\section*{Acknowledgements}
We would like to thank the referees for their constructive comments.

\end{document}